\documentclass[11pt]{article}
\usepackage[margin=1in]{geometry}
\usepackage[numbers,round]{natbib}
\usepackage{amsmath}
\usepackage{amsthm}
\usepackage{amssymb}
\usepackage{booktabs}
\usepackage{float}
\usepackage[defaultlines=3,all]{nowidow}

\usepackage[normalem]{ulem}

\newcommand{\bfs}{\mathbf{s}}

\newcommand{\mbs}{\mathbf{s}}
\newcommand{\mcs}{\mathcal{S}}
\newcommand{\mco}{\mathcal{O}}
\newcommand{\mbd}{\mathbf{d}}

\newcommand{\E}{\mathbb{E}}

\newcommand{\opt}{\textnormal{OPT}}

\let\prod\undefined
\DeclareSymbolFont{cmlargesymbols}{OMX}{cmex}{m}{n}
\DeclareMathSymbol{\prod}{\mathop}{cmlargesymbols}{"51}
\let\sum\undefined
\DeclareMathSymbol{\sum}{\mathop}{cmlargesymbols}{"50}

\newcommand{\eps}{\varepsilon}
\numberwithin{figure}{section}
\numberwithin{table}{section}
\usepackage{amsthm}
\theoremstyle{plain}
\newtheorem{theorem}{Theorem}[section]
\newtheorem{lemma}{Lemma}[section]
\newtheorem{corollary}{Corollary}[section]

\newtheorem{conjecture}{Conjecture}[section]

\theoremstyle{definition}
\newtheorem{definition}{Definition}[section]
\newtheorem{example}{Example}[section]

\theoremstyle{remark}
\newtheorem{remark}{Remark}[section]

\newtheorem*{lemma*}{Lemma}
\newtheorem*{theorem*}{Theorem}
\newtheorem*{corollary*}{Corollary}

\newcommand{\R}{\mathbb{R}}
\newcommand{\mb}[1]{\mathbf{#1}}

\usepackage[dvipsnames]{xcolor}

\usepackage{hyperref}
\hypersetup{
    colorlinks=true,
    linkcolor=BrickRed,        
    citecolor=ForestGreen,  
    urlcolor=NavyBlue,
}
\usepackage{graphicx}
\usepackage{lipsum}
\usepackage[nameinlink,capitalise]{cleveref}

\crefname{definition}{Def.}{Defs.}
\Crefname{definition}{Def.}{Defs.}
\crefname{theorem}{Thm.}{Thms.}
\Crefname{theorem}{Thm.}{Thms.}
\crefname{remark}{Rem.}{Rems.}
\Crefname{remark}{Rem.}{Rems.}
\crefname{example}{Ex.}{Exs.}
\Crefname{example}{Ex.}{Exs.}
\crefname{figure}{Fig.}{Figs.}
\Crefname{figure}{Fig.}{Figs.}
\crefname{lemma}{Lem.}{Lems.}
\Crefname{lemma}{Lem.}{Lems.}
\crefname{corollary}{Cor.}{Cors.}
\crefname{corollary}{Cor.}{Cors.}
\Crefname{equation}{Eq.}{Eqs.}
\crefname{equation}{Eq.}{Eqs.}
\Crefname{claim}{Clm.}{Clms.}
\crefname{claim}{Clm.}{Clms.}
\Crefname{conjecture}{Conj.}{Conjs.}
\crefname{conjecture}{Conj.}{Conjs.}

\title{Adversarial procurement in blockchains\vspace{-0.5em}}
\author{
    Maryam Bahrani\thanks{Authors listed alphabetically.} \\
    Ritual
    \and
    Michael Neuder\footnotemark[1]\thanks{Author supported in part by the Ethereum Foundation: Grant FY25-2276.} \\
    Princeton University
    \and
    S.\ Matthew Weinberg\thanks{Author supported in part by NSF CAREER Award CCF-1942497.} \\
    Princeton University
}
\date{\vspace{-3.5em}}

\crefname{definition}{Def.}{Defs.}
\Crefname{definition}{Def.}{Defs.}
\crefname{theorem}{Thm.}{Thms.}
\Crefname{theorem}{Thm.}{Thms.}
\crefname{remark}{Rem.}{Rems.}
\Crefname{remark}{Rem.}{Rems.}
\crefname{example}{Ex.}{Exs.}
\Crefname{example}{Ex.}{Exs.}
\crefname{figure}{Fig.}{Figs.}
\Crefname{figure}{Fig.}{Figs.}
\crefname{lemma}{Lem.}{Lems.}
\Crefname{lemma}{Lem.}{Lems.}
\crefname{corollary}{Cor.}{Cors.}
\crefname{corollary}{Cor.}{Cors.}
\Crefname{equation}{Eq.}{Eqs.}
\crefname{equation}{Eq.}{Eqs.}
\Crefname{claim}{Clm.}{Clms.}
\crefname{claim}{Clm.}{Clms.}
\Crefname{conjecture}{Conj.}{Conjs.}
\crefname{conjecture}{Conj.}{Conjs.}
\crefname{appendix}{App.}{Apps.}
\Crefname{appendix}{App.}{Apps.}

\makeatletter
\renewcommand{\@fnsymbol}[1]{\ensuremath{%
  \ifcase#1\or *\or \bullet\or \circ\or \diamond\or \mathparagraph\or \|\else\@ctrerr\fi}}
\makeatother

\begin{document}

\maketitle

\begin{abstract}
    An emerging blockchain protocol design pattern leverages the asymmetry between the computational effort in performing versus verifying tasks. For example, cryptographic validity proofs (e.g., SNARKS) require the prover to expend significant effort demonstrating the correctness of their claim, while the verifiers benefit from extremely easy validation. The operationalization of this paradigm requires efficiently soliciting the performance of expensive tasks in pseudonymous, adversarial environments. We formalize this as a mechanism design question. The protocol balances the economic cost of a liveness fault, where the work is not completed, with the payments required to incentivize specific behavior from candidate suppliers. 
    
    We show that the loss of the optimal protocol scales logarithmically in the cost of a liveness fault, scaled up by the adversarial fraction of the network. Further, we find that the optimal equilibria have an intuitive structure, allowing us to provide concrete advice to practitioners. Specifically, in many regimes, the optimum designates a single, random node as the primary worker and a committee as a fallback, which is reminiscent of leader-based consensus mechanisms. We also characterize the asymptotic regimes where having negative payments (i.e., slashing in blockchain parlance) is especially helpful.
\end{abstract}

\section{Introduction}

In the early days of Bitcoin, every user transacting on the network ran a full node and could easily mine blocks on their personal computers. Once the network became sufficiently valuable, however, specialization emerged. Miners who invested in better hardware and had a lower cost of electricity were able to capture larger shares of the block rewards. 
Even in its most basic form, this separation between miners and users was possible by leveraging the cryptographically asymmetric task of Proof-of-Work. The miners perform the difficult pre-image search of SHA-256, while the users of the chain can easily confirm that they were included in a valid fork by checking the value of the hashed nonce.

This asymmetry has extended far beyond Proof-of-Work mining. Many core services of modern blockchains are outsourced, allowing for scale and adoption that would otherwise be impossible. 
For example, when sending locally signed transactions, most users will broadcast the signed transaction through a wallet interface that uses an external RPC provider (e.g., Infura \cite{infura}) rather than running a node and gossiping it to the P2P network directly.\footnote{RPC (remote-procedure call) providers expose simple blockchain APIs such as \texttt{eth\_sendTransaction}.} 
Similarly, users often confirm their transaction is included onchain via block explorers rather than running a consensus node locally. Outsourcing is no silver bullet, of course, as relying on any intermediaries introduces choke points and trust assumptions. Still, the ecosystem seems to have converged on the combination of self-custody of private keys while relying on external parties to include and verify transactions as a reasonable trade-off between usability and self-sovereignty. \\

\noindent\textbf{Proposer-Builder Separation: outsourced block-building.} Due to the rise of Maximal Extractable Value (MEV)~\cite{daian2020flash} in the Ethereum ecosystem,\footnote{MEV describes the economic value created by the proposer's unilateral control over the inclusion, exclusion, and ordering of transactions within a block.} even consensus participants now outsource key tasks.
Specifically, creating a value-maximizing block is a complex task where specialized ``builders'' significantly outperform a representative ``proposer'' participating in Ethereum's consensus protocol. Therefore, proposers outsource the job of \textit{creating a high-value block} to builders, and rely on relays as trusted third parties to \textit{verify the value} (and validity) of that block before proposing it in-protocol.\footnote{See \citet{heimbach2023ethereum} for more details on Ethereum's MEV market structure.} This is the first notable example of unbundling the consensus ``duties'' that were homogeneous among validators. See \citet{unbundling-barnabe} for a discussion of how this pattern can extend beyond block building to other parts of the consensus mechanism.


A key aspect of Proposer-Builder Separation (PBS) is that every consensus participant (henceforth, \textit{validator}) can continue to verify the validity of each block. If only specialized actors can check block validity, then Ethereum loses its trustless verifiability, which is a core tenet of the protocol. Indeed, the famous ``Endgame'' post by Vitalik Buterin in 2021 highlights this as a key design criterion: ``block production is centralized, block validation is trustless and highly decentralized, and censorship is still prevented''~\cite{buterin2021endgame}. 

So, the good news is that block building can be outsourced to highly specialized entities under PBS. The bad news is that, \textit{in order to maintain a meaningfully decentralized validator set, every validator must fully verify the contents in each block, which limits the viable throughput of the Ethereum Virtual Machine (EVM).} 
That is, the EVM can proceed no faster than the slowest validator, which at present is set to accommodate, at minimum, a two-core CPU and a 10 MBit/s internet connection~\cite{ethereum_run_a_node}, in order to ensure wide accessibility to Ethereum validation (q.v., some hobbyists run on nodes on a Raspberry Pi~\cite{web3pi}).

But, Ethereum's endgame is not to operate the EVM at the pace of a Raspberry Pi on a home internet connection; instead, Ethereum aims to scale up significantly via a Zero-Knowledge EVM (zk-EVM).\footnote{Again quoting Vitalik, ``In 2027-30, large further gas limit increases, as ZKEVM becomes the primary way to validate blocks on the network''~\cite{zk-evm-vitalik}.}
In the zk-EVM, each block would contain a list of transactions, a claimed newState, and a ZK proof (henceforth, ``proof'') that the EVM's new state after executing the list of transactions from the original state is exactly newState. 
The potential scaling benefits of such an upgrade are immediate; since proof verification is computationally lightweight, Ethereum can maintain a highly decentralized set of validators \textit{who now only verify validity proofs and no longer even need to execute all transactions in a block}, while allowing block content to be optimized by highly-specialized builders and proofs written by highly-specialized provers, thereby enabling the EVM to progress at the pace of the fastest specialized prover rather than that of the slowest executor.\footnote{To clarify: by this, we mean that the current EVM must operate no faster than the slowest validator can \textit{download} and \textit{execute} transactions (to verify block validity). Notice that the zk-EVM is a scaling technology for both \textit{bandwidth} and \textit{compute}, as it allows placing the block contents into Ethereum blobs for validators to sample instead of fully downloading the transaction contents~\cite{nero2026blocks}.
Under this paradigm, the zk-EVM can operate at a faster pace, so long as \textit{someone} in the network can both execute transactions \textit{and produce a correct proof} (because sampling the transaction data and verifying the produced proof is sufficiently lightweight that the entire validator network can do so). Depending on the prover efficiency of the proof system, this may or may not be an improvement -- on one hand, we operate at the pace of the fastest (rather than slowest) participant, but on the other, that participant must now both execute transactions and also write a ZK proof of correctness.} Given the recent advancements in ZK proving speed and architecture \cite{succinct2025hypercube,zksync2025airbender,brevis2025picoprism}, the feasibility of ``real-time proving'' is now a near-term goal~\cite{gold2025realtimeproving}.\\



\noindent\textbf{Verifiable outsourcing to untrusted sources.} The above discussion motivates an exciting application of verifiable outsourcing in the blockchain space: Ethereum validators will outsource transaction execution to specialized ``provers,'' and verify only the proof that the endState is correctly computed. 
So, let us quickly consider a single, centralized, and highly specialized prover who is asked to prove every single zk-EVM block.
Soundness of ZK proof systems guarantees that a malicious prover cannot trick validators into believing an incorrect endState, \textit{but nothing prevents a malicious prover from simply turning off their machines}, causing a liveness fault. That is, if the ecosystem becomes overly reliant on a single prover, that prover can simply stop providing proofs and cause the chain to halt. 
Doing so could be a highly profitable act,\footnote{By accepting bribes from proposers who want to steal MEV from the preceding block's proposer during periods of market volatility, for example.} and relying on reputational harm as the primary defense is completely antithetical to the premise of blockchain systems.\footnote{For example, significant resources in the PBS ecosystem are put towards maintaining relays, protocols (c.f., BuilderNet~\cite{flashbots2024buildernet}), and Trusted Execution Environments (TEEs) that limit the damage a single builder can do, even though there are presently only a small number of dominant builders who are all known by name~\cite{mevboostpics}.}


A next idea is to build a network of (say) $n$ provers, and ask every prover to prove every block. As long as at least one-of-$n$ provers is honest, the proof is delivered. This solution is extremely secure for large $n$, but also correspondingly expensive: we really only need a single correct proof, but are paying for $n$. 
Some overhead is certainly necessary to avoid failure caused by malicious provers, inducing a trade-off. Still, there are many ways to navigate this -- two simple examples include (a) pick a set of $k < n$ provers and pay them to prove and (b) offer a prize of $k$ to be split among all provers who submit proofs (discussed in \Cref{ex:pay-designated,ex:lottery,ex:pay-deliverers}). Are either of these optimal? If so, how does one optimize $k$ as a function of the prover cost versus the harm caused by a liveness fault? \\

\noindent\textbf{Our Work.} This work formulates an adversarial mechanism design question to capture this problem: a protocol has access to $n$ untrusted provers and knows that at least $h < n$ of them are honest. The goal is to design a procurement protocol that minimizes the worst-case economic harm caused by either a liveness fault (if no proof is produced) or overprocurement (if too many proofs are paid for). Importantly, depending on the payment rule, dishonest provers might harm the protocol either by not proving when asked to prove (thus inducing a liveness fault) or by proving when asked not to (thus, perhaps, inducing a higher payment). While prover markets in the upcoming zk-EVM is a core motivation for our model, our results provide insight into any problem of permissionless delegation with concern for liveness faults -- see the end of~\Cref{subsec:related} for a brief discussion of our model's applicability to other domains. 


%

\subsection{Summary of results}\label{subsec:summary}

\textbf{Our main practical contribution is providing an explicit mechanism that we advise practitioners to adopt (\Cref{rem:use-designated}), and arguing that it is feasible and optimal given the order of magnitude of various empirical values observed in the Ethereum ecosystem (\Cref{rem:c-big,rem:proving-cost-small,rem:tau-size,rem:stake-magnitude}).} The theoretical contributions outlined below justify this recommendation.

\Cref{sec:model} formally introduces the model and focuses on the unique requirements of this domain. A single proof is needed, and a penalty of $C$ is incurred if we fail to procure one. 
There are $n$ pseudonymous provers capable of producing the proof for a cost of $1$, $h < n$ of whom are honest. Our goal is to design a mechanism that (perhaps randomly) requests proofs from a subset of participants and offers payments as a function of submitted proofs. This mechanism must be \emph{incentive compatible}: for each prover $i$, provided that all other provers follow the mechanism, prover $i$ optimizes their payoff by following the mechanism as well. 
We then take a worst-case lens and ask, for each such mechanism, if the $n-h$ non-honest provers deviate from the mechanism in the worst possible way, how bad can our total cost be? Note that this includes both monetary payments and the procurement-failure penalty. In blockchain parlance, we assess the performance of a mechanism against a Byzantine attacker who adversarially chooses a set of $n-h$ provers to control fully (whereas the remaining $h$ will follow the mechanism -- see~\Cref{def:loss}).

Our core proposed mechanism is \emph{designated}: pick a single ``leader'' (or ``designate'') from whom to definitely request a proof, and a set of $k$ ``backups'' from whom to request a proof with probability $s < 1$ (and, optimize $k, s$ as a function of $n, h$). 
We prove: (a) for many $n, h$, \emph{including those that well-capture current prover markets}, the designated mechanism is optimal among all mechanisms (\Cref{rem:des-often-optimal}), (b) for all $n, h$, the designated mechanism is an additive $1$-approximation\footnote{Recall that we've normalized the cost of producing a single proof to $1$ -- this means that the designated mechanism is optimal up to the cost of producing a single additional proof.} to the optimal mechanism (\Cref{thm:payment-rule-reduction}). 

We provide an extended discussion for applying these theoretical results to practice (\Cref{subsec:recommendations}). The use of specially nominated leaders, committees, and staking and slashing (negative payments) will be familiar to the blockchain community, due to their frequent use in the design of permissionless consensus mechanisms. We argue, based on reasonable assumptions about: the potential economic impact of a liveness fault (\Cref{rem:c-big}), SNARK proving costs (\Cref{rem:proving-cost-small}), the proportion of honest provers (\Cref{rem:tau-size}), and the amount of capital already staked by Ethereum validators (\Cref{rem:stake-magnitude}), that the optimal designated mechanism should be adopted (\Cref{rem:use-designated}).

\Cref{sec:results} contains significant further exploration of the problem. For example, we consider a weaker adversary (\Cref{lemma:lb}) against which we can fully characterize the optimal mechanism (\Cref{lemma:shape}). Specifically, the optimal mechanism against the weaker adversary is either designated or \emph{symmetric} (a symmetric mechanism requests that each of $k$ providers provide a proof with probability $s$).\footnote{That is, a symmetric mechanism has the ``backups'' from a designated mechanism but not the ``leader.''} We moreover show that whenever a designated mechanism is optimal against the weaker adversary, that same mechanism is optimal against the full adversary (\Cref{lemma:designated-implementable}). We also show that in many regimes where a symmetric mechanism is optimal against the weaker adversary, that same mechanism is optimal against the full adversary (\Cref{lemma:symmetric-anonymous} and~\Cref{lemma:symmetric-implementable}).\footnote{To be explicit: this does not cover all regimes -- there are parameter ranges for which the optimal mechanism against the weaker adversary is \emph{not} implementable against the full adversary. See \Cref{app:regimes-discussion} for discussion on these regimes. We leave as an open problem to fully characterize the optimal mechanism in these remaining regimes (keeping in mind that a designated mechanism is an additive $1$-approximation).}

Finally,~\Cref{sec:results} also analyzes the asymptotic cost of the optimal mechanism as a function of various parameters. Specifically, the worst-case cost of the optimal mechanism scales with $\mco(\frac{n}{h}\cdot \log C)$.\footnote{Recall that $C$ is the ratio of the penalty for a liveness fault to the cost of producing a single proof, and $h/n$ is the fraction of honest provers.} We moreover demonstrate that if the mechanism permits \textit{negative payments}, the optimal solution can greatly improve (\Cref{lem:staking-asymptotics}).  Negative payments are a common motif in blockchain design, where agents post collateral (stake) that is forfeited (slashed) as a result of unexpected behavior.

\subsection{Related work}\label{subsec:related}
Reverse (or procurement) auctions have a long and rich history in the economics literature; see \cite{dimitri2006handbook} as a reference text on the subject. As in traditional (forward) auction theory literature, however, most previous works rely on a fixed set of known identities, where the bidders can misreport with a single identity but cannot arbitrarily introduce new ones. There are a few notable exceptions, starting with \citet{yokoo2004effect}, which introduced ``false-name proof'' mechanisms in the context of combinatorial auctions. Since that paper, the concept has been extended to other domains as surveyed in \cite{conitzer2010using}. Recent work by \citet{pan2024sybil} demonstrates that under the standard quasi-linear utility model of single-item auctions, the only ``Sybil-proof'' mechanism is the second-price auction with symmetric tie breaking. Another recent work by \citet{garimidi2026beyond} studies the single-item procurement setting with Sybils, while also trying to avoid winner-take-all equilibria. The key distinction between our work and these works is that we consider \emph{both} strategic users (by ensuring that the proposed mechanism is incentive compatible) \emph{and} worst-case behavior (by evaluating mechanisms based on their performance with $n-h$ provers adversarially deviating from equilibrium).


Prior work also considers a mix of strategic and malicious behavior, specifically in routing games. Here, \citet{karakostas2003equilibria} first introduces the model, which is further studied in \cite{babaioff2007congestion,roth2008price}. \citet{moscibroda2006selfish} coins the term ``Price of Malice'' and calculated the value for an internet virus game. Our work differs from these in that we study a completely different game.

Specific to prover markets, \citet{wang2025p} introduces a two-sided ZK proof matching protocol to match users to provers. Their algorithm greedily assigns tasks and imposes system-level constraints to address issues such as misreporting capacity, creating Sybils, and failing to deliver when assigned. \citet{ahmadvand2026push0} study the problem of prover orchestration as a systems question, but without modeling the economic decisions of individual provers. 

From the industry side, a few prover markets have launched, and several others are in development. Succinct \cite{roy2024succinct} posts a fixed prize to turn the procurement into a forward auction, in effect running a Tullock contest (see \cite{garimidi2025tullock} for an extended discussion). Axiom \cite{axiom2023announcement}, EigenCloud \cite{eigenlabs2025eigencloud}, Brevis \cite{brevis2025provernet}, Ritual \cite{bahrani2024resonance}, =nil \cite{komarov2023proofmarket}, and RISC zero \cite{boundless2025whitepaper} have all announced development of various ZK coprocessor and prover marketplaces. These proposals have various degrees of formalism; our approach, model, results, and recommendations are novel and provide actionable insights to these teams.

In addition to the specific projects noted above, we conjecture that the paradigm of outsourcing expensive but verifiable work to a set of permissionless and specialized providers will continue to gain adoption for more general tasks beyond correctness proofs for Ethereum blocks. For example, our model captures a basic trusted-execution environment (TEE) implementation of verifiable inference of machine learning models.\footnote{The costly work being done is acquiring the trusted hardware and running the inference in that environment; the low-cost verification being done is checking that the hardware-generated signature attests to the output running in a verifiable way.} 
This type of off-chain proving for onchain verification was coined as a ``ZK coprocessor'' architecture by Axiom in 2023 \cite{axiom2023announcement}. This concept has resurfaced many times since then. Notably, EigenLayer's July 2025 rebrand to EigenCloud \cite{eigenlabs2025eigencloud} and their announcement of verifiable AI inference as a specific target market \cite{alves2026eigenai} fits this model. Similarly, Ritual aims to build a two-sided marketplace for general heterogeneous computational tasks, and they propose using brokers to facilitate the matching of tasks to suppliers in an efficient way \cite{bahrani2024resonance}. Our qualitative lessons (\Cref{subsec:recommendations}) apply equally to any of these domains, and not just to Ethereum prover markets.

\section{Model}\label{sec:model}

\noindent\textbf{Setup.} There is a mechanism designer (``the protocol'') who wants to procure a proof from a set of $n$ players. We normalize the cost of generating a proof to $1$, which we assume is identical across each player and publicly known. We further assume that the validation of the proof is negligible, akin to SNARK verification being nearly constant time (i.e., linear in the public inputs rather than the circuit size) \cite{groth2016size}. The players have pseudonymous IDs (i.e.,~it is possible to request a proof from Player One but not Player Two), but no reputation or commitment power (i.e.,~no player can credibly promise ``if you pay me, I will produce the proof.''). We use the notation $d_i$ to denote the indicator variable for whether prover $i$ delivers a proof.\\

\noindent\textbf{Design space.} The protocol specifies a strategy profile $\mb{s}$ and a payment rule $p(\cdot)$. The strategy profile simply states, for each prover $i$, what is the probability $s_i$ that the prover delivers a proof? The payment rule $p(\cdot)$ specifies, for each prover $i$ and each vector $\mb{d}$, what is the payment $p_i(\mb{d})$ given to player $i$ when the set $\mb{d}$ of proofs is received? For our main results, we must have $p_i(\mb{d}) \geq 0$ for all $i,\mb{d}$. We also consider extensions where each prover $i$ stakes a deposit $B/n$, allowing any $p_i(\mb{d}) \geq -B/n$ to be feasible.\\

\noindent\textbf{Payoffs.} Prover $i$'s ultimate payoff is $p_i(\mb{d})-d_i$. Denote $[n] := \{1, 2, \ldots, n\}$ . The protocol's cost is $\sum_{i\in[n]} p_i(\mb{d})$, the total sum of payments, whenever $\mb{d} \neq \mb{0}$ (i.e.,~at least one proof is produced), and $\sum_{i\in [n]} p_i(\mb{0}) + C$ if no proofs are produced. More formally, we define protocol cost.
\begin{definition}[Protocol cost]\label{def:protocol-cost}
    The \textit{protocol cost} under a payment rule $p$ and delivery profile $\mb{d}$ is the total payment to players, plus the failure penalty $C$ if no proofs are delivered, 
    \begin{align*}
        \text{cost}(p,\mb{d}):=\sum_{i\in[n]} p_i(\mb{d})+ C\cdot  \prod_{i\in [n]} (1-d_i).
    \end{align*}  
\end{definition}

\noindent\textbf{Prover incentives and protocol evaluation.} Provers are risk-neutral and quasilinear, and therefore will only follow the protocol if doing so maximizes their expected utility (in expectation over other provers also following the protocol). Because there is no private information, each prover $i$ can compute their expected payoff $\textsc{DeliverPay}_i$ when submitting a proof (the expected value, when all other players $j$ submit a proof independently with probability $s_j$ to produce $\mb{d}_{-i}$, of $p_i(\mb{d}_{-i};1_i)$, minus the unit cost of delivering), and their expected payoff $\textsc{NoDeliverPay}_i$ when not submitting the expected value, when all other players $j$ submit a proof independently with probability $s_j$ to produce $\mb{d}_{-i}$, of $p_i(\mb{d}_{-i};0_i)$). Therefore, if $s_i > 0$ it must be that $\textsc{DeliverPay}_i \geq \textsc{NoDeliverPay}_i$, and if $s_i < 1$ it must be that $\textsc{NoDeliverPay}_i \geq \textsc{DeliverPay}_i$ (and so if $s_i \in (0,1)$, they must be equal). A protocol is \emph{incentive compatible} if it satisfies these constraints. We refer to $\mcs{(p)}$ as the set of all strategy profiles $\mb{s}$ such that $(\mb{s},p(\cdot))$ is incentive compatible, and say $p$ \textit{implements} $\mb{s}$ if $\mb{s}\in\mcs(p)$.

The protocol designer cares for \emph{worst-case guarantees}. Specifically, they worry that up to $n-h$ of the provers will deviate from equilibrium and behave arbitrarily. 

\noindent\textbf{Brief modeling discussion.} One way to interpret our model is as follows. In steady-state, all provers prove according to the strategy profile $\mb{s}$, which is an equilibrium of $p(\cdot)$. The protocol wants to be robust to a one-time worst-case event where, for whatever reason, up to $n-h$ provers behave arbitrarily rather than as expected. Therefore, incentive compatibility constrains the protocol design (as otherwise, the prescribed equilibrium could not be a steady-state outcome), while the worst-case desideratum guides the analysis. 

\begin{remark}[Worst-case analysis as an adversary]\label{rem:adversary}
    A mathematically equivalent way to interpret the model again has all provers proving according to $\mb{s}$ in steady-state. Then, an adversary corrupts up to $n-h$ provers for one slot and aims to cause as much damage as possible. In the running example of the zk-EVM, the threat model is a one-off attack, where a proposer aims to cause a liveness fault on the slot preceding their own in order to steal the MEV that would've otherwise been paid to the previous slot's proposer.
    Again, incentive compatibility constrains the protocol design, while adversarial robustness guides analysis. Through this lens, our adversary is powerful in the sense that it can choose \textit{which provers} to corrupt on the basis of their pseudonyms (i.e., after learning $\mb{s}$), and cause them to \textit{behave arbitrarily} (i.e., not only by shutting down their prover, but by having them produce a proof they otherwise wouldn't). Our adversary is only limited in that both its corruption and delivery decisions are made without knowing the outcome of the private random coins of honest participants.
\end{remark}

Based on this interpretation, it is often useful to personify the worst-case analysis as an adversary that is trying to maximally harm the protocol. Denote by $A \subset [n]$ the set of \textit{adversarial} provers, such that $|A| = n-h$, and $H = [n] \setminus A$ as the set of honest (playing the rational equilibrium) provers. Then, the protocol designer evaluates a protocol $\mb{s}, p(\cdot)$ according to following loss function:

\begin{definition}[Loss]\label{def:loss}
    The \textit{loss} of a payment rule $p(\cdot)$ and strategy profile $\bfs$ is 
    \begin{align*}
        \ell(p,\bfs):=
            \max_{A,\mb{d}_A}\Bigl\{\underset{\mb{d}_H \sim\bfs_{H}}{\mathbb{E}} \Bigl[ \text{cost}\bigl(p,\mb{d}_A | \mb{d}_{H}\bigr)\Bigr]\Bigr\},
    \end{align*}
    where $\mb{d}_A | \mb{d}_{H}$ concatenates the adversarial and honest delivery decisions to construct the full delivery vector $\mb{d}$.
\end{definition}

The goal of the designer is therefore to solve the following optimization problem, which minimizes loss over all incentive compatible $(\mb{s}, p(\cdot))$. 

\begin{definition}[Protocol objective]
    The protocol solves
    \begin{align}\label{prog:full}
        \opt1 := \min_{p,\mb{s}\in\mcs{(p)}} \{\ell(p,\mb{s})\}. \tag{PROG1}
    \end{align}
\end{definition}

\section{Results}\label{sec:results}
Before analyzing the solutions to this program, the following examples serve to illustrate the difficulty of the design problem concretely. Consider the following basic payment rule.

\begin{example}[Pay a designated subset to deliver]\label{ex:pay-designated}
    Randomly choose a subset of $k$ provers, and pay them $1+\eps$ if they deliver. All $k$ players delivering is a pure-strategy Nash equilibrium.
\end{example}

This naïve payment rule seems promising, but let's consider the worst-case, represented by the Byzantine adversary (\Cref{rem:adversary}) corrupting $a:= n-h$ provers. Since the adversary observes $p$, they will check if $k \leq a$. If so, they can corrupt all $k$ of the designated provers and cause them not to deliver, deterministically causing the liveness penalty of $C$ (which is worse for the protocol assuming $C > k(1+\eps)$). If $k > a$, at least one honest player will deterministically deliver a proof, so the worst-case is the protocol paying the full $k(1+\eps)$ each round. While choosing a high value of $k$ gives a robust mechanism, it is cost-prohibitive to pay for $k$ proofs each slot, especially if the value of $a$ is large. A natural alternative is to commit to a fixed payment and ask players to use a symmetric mixed strategy to determine who delivers.

\begin{example}[Lottery payment rule]\label{ex:lottery}
    Award a fixed prize $P$ randomly to one prover who delivers. Each player delivering with a symmetric probability $s$ is a mixed-strategy Nash equilibrium. 
\end{example}

This mechanism is powerful because it allows the protocol to commit to a fixed prize size \textit{and} leverage the independent randomness of the players to exponentially bound the probability of no one delivering (e.g., $(1-s)^n$).\footnote{This intuition is correct, and we show that this lottery payment rule is a $2-$approximation of optimal (\Cref{lemma:lottery-2-approx}).}
From the perspective of the worst-case attacker, however, this payment rule leads them not to deliver the full set of corrupted provers. This greatly increases the probability of no proof being delivered from $(1-s)^n \rightarrow (1-s)^h$ (only the honest players are flipping coins), and the protocol's expected cost increases correspondingly. Further, the lottery indeed loses some efficiency because it has to \textit{overpay} to incentivize everyone to mix with the same probability $s$ (see \eqref{eq:lottery-prize-size} for the exact equation); one last example payment rule tries to address this.

\begin{example}[Pay deliverers deterministically]\label{ex:pay-deliverers}
    Ask provers to deliver with some uniform probability $s$, paying any prover that delivers $1$. Each prover delivering with a symmetric probability $s$ is a mixed-strategy Nash equilibrium (provers are indifferent between delivering and not because their payment exactly offsets their cost).
\end{example}

This payment rule tries to be more cost-efficient by paying for \textit{precisely} the number of proofs that are delivered. If everyone were honest, this would avoid the overpayment of the lottery payment rule. Consider the adversary; they will compare the protocol's expected cost if \textit{none} of the corrupted provers deliver (which lowers the payment by $a \cdot s$, but also increases the probability of a liveness failure) with the cost if \textit{all} of the corrupted provers deliver (which guarantees no liveness failure, but increases the payment by $a \cdot (1-s)$ -- importantly, this increased payment could be quite significant when $a$ is large and $s$ is small!), choosing the higher of the two. 
These examples illustrate the difficulty of designing for the worst-case, where some provers may over- or under-deliver, depending on the specifics of the payment rule. The remainder of this work formalizes exactly this trade-off.

Solving \eqref{prog:full} directly is complex; see \Cref{app:lp-prog-full} for its expanded form. To start, the number of attacker constraints is exponential in $a$, which arises from the adversary choosing either action for any of the corrupted provers. Further, the protocol has to choose $\mbs$ in such a way that the payment rule achieves a low cost. To make the analysis tractable, we construct a lower bound on $\opt1$ that is only a function of $\mbs$; proof in \Cref{app:lb-proof}.

\begin{lemma}[Lower bound function $g(\mbs)$]\label{lemma:lb}
    Consider the protocol parameterized with $(h,a,n,C)$. Without loss of generality, re-index the players such that $\mbs$ is \textit{decreasing} in $i$. Then define $g(\mbs)$ as,
    \begin{align}\label{eq:two-branch-g}
        g(\mbs) := \max\left\{\sum_{i\in [n]} s_i + C \prod_{i\in[n]} (1-s_i), C \prod_{i=a+1}^n (1-s_i)\right\}.
    \end{align}
    Then the solution to the following program,
    \begin{align}
        \opt{2} := &\min_{\mbs \in [0,1]^n} g(\mbs) \tag{PROG2}\label{prog:g-full}\\
        \text{s.t., } &1\geq s_1 \geq s_2 \geq \ldots \geq s_n \ge 0 \nonumber
    \end{align}
    is a lower bound on the solution to \eqref{prog:full}: $\opt2 \leq \opt1.$
\end{lemma}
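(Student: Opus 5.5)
Fix any incentive compatible pair $(p,\mbs)$ with $\mbs\in\mcs(p)$, re-indexed so that $s$ is decreasing. The plan is to exhibit two specific adversary choices $(A,\mb{d}_A)$ whose expected costs are at least the two branches of $g(\mbs)$. Since $\ell(p,\mbs)$ is a maximum over all adversary choices, this gives $\ell(p,\mbs)\ge g(\mbs)\ge \opt2$. Minimizing over $(p,\mbs)$ then yields $\opt2\le\opt1$.

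\textbf{Second branch.} Corrupt the $a$ provers with the largest delivery probabilities, $A=\{1,\dots,a\}$, and have them all set $d_i=0$. The honest provers $a+1,\dots,n$ deliver independently with probabilities $s_i$. The probability that no proof arrives is therefore $\prod_{i=a+1}^n(1-s_i)$. Because payments are nonnegative in the main model, the expected cost is at least $C\prod_{i=a+1}^n(1-s_i)$. This is the only place where $p_i\ge 0$ is used.

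\textbf{First branch.} Here the adversary corrupts any set and has each corrupted prover mimic honest play. Concretely, the adversary may choose $\mb{d}_A$ as a mixed profile drawn from $\mbs_A$. This is allowed because the expectation is linear in the adversary's randomization, so some deterministic $\mb{d}_A$ does at least as well as the mixture. The resulting expected cost is exactly the on-path cost $\E_{\mb{d}\sim\mbs}[\sum_i p_i(\mb{d})]+C\prod_i(1-s_i)$.

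It remains to lower bound the expected payments using incentive compatibility, by showing $\E[p_i(\mb{d})]\ge s_i$ for each $i$.
\begin{itemize}
\item If $s_i=0$, the bound holds since $p_i\ge0$.
\item If $s_i>0$, IC gives $\textsc{DeliverPay}_i\ge\textsc{NoDeliverPay}_i$, that is, $\E_{\mb{d}_{-i}}[p_i(\mb{d}_{-i};1_i)]-1\ge \E_{\mb{d}_{-i}}[p_i(\mb{d}_{-i};0_i)]\ge0$. So the expected payment conditional on delivering is at least $1$. Hence $\E[p_i]\ge s_i\cdot 1+(1-s_i)\cdot 0=s_i$.
\end{itemize}
Summing over $i$ gives the first term of $g(\mbs)$.

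The main subtlety is justifying the mixed-strategy step: the adversary's definition uses a deterministic $\mb{d}_A$. I would handle this by the averaging argument above. The maximum over deterministic $\mb{d}_A$ of $\E_{\mb{d}_H}[\text{cost}]$ is at least its average under $\mb{d}_A\sim\mbs_A$, and that average equals the full on-path expectation because the provers deliver independently. Everything else is routine.
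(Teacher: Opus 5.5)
Your proof is correct and follows the paper's argument. It uses the same two restricted attacks: the corrupted provers mimic $\mbs_A$, or the $a$ highest-probability provers abstain. You also supply what the paper only sketches, namely deriving $\E[p_i]\ge s_i$ from the incentive constraint plus nonnegativity, and justifying the randomized ``honest'' attacker by averaging over deterministic $\mb{d}_A$.

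One small inaccuracy: nonnegativity is not used only in the second branch, since your IC step also needs $\E[p_i(\mb{d}_{-i};0_i)]\ge 0$. This is harmless here, but it is exactly the step that breaks once slashing (negative payments) is allowed.
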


\begin{remark}[Interpreting $g$ as a weakened attacker]
    The construction of $g$ as used in the proof of \Cref{lemma:lb} is to \textit{limit the action space} of the Byzantine players. In particular, we force the attacker to choose between playing a \textit{fully honest} strategy (i.e., playing $s_i, \forall i \in A$) or a \textit{fully non-delivering} strategy (i.e., playing $d_i =0, \forall i \in A$). The resulting expected protocol loss is represented as the left and right branches of \eqref{eq:two-branch-g}.
\end{remark}

\begin{remark}[Interpreting $g$ as a powerful mechanism]\label{rem:powerful-mech}
    One alternative view of $g$ is that it is the loss achieved by a mechanism that can immediately detect when the largest $a$ players in the protocol are playing $d_i=0$ and refusing to pay in that circumstance. This detection is represented by the right branch of the $\max$, which is \textit{only} the expected value of the liveness penalty being caused by $h$ smallest indices (and no direct payments). While this seems like a powerful mechanism, we show that in many cases, the bound is achievable (\Cref{rem:des-often-optimal}).
\end{remark}

In order to make the solution of \eqref{prog:g-full} more amenable to analysis, we first prove a structural property about the minimizer of $g$; we show it must occur exactly when the left and right branches of the max are equal. 

\begin{lemma}[Minimizer of $g$ occurs at equality]\label{lemma:minimizer-g-equality}
    The vector $\mbs^*$ that minimizes \eqref{eq:two-branch-g} occurs where the two branches of the max are equal. More formally, we have
    \begin{align*}
        \sum_{i\in [n]} s^*_i + C \prod_{i\in[n]} (1-s^*_i) = C \prod_{i=a+1}^n (1-s^*_i).
    \end{align*}
\end{lemma}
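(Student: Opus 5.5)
The plan is to argue by contradiction. Let $\mbs^*$ be a minimizer of $g$ over the ordered feasible set; one exists because the set is compact and $g$ is continuous. Write $L(\mbs):=\sum_i s_i + C\prod_{i\in[n]}(1-s_i)$ and $R(\mbs):=C\prod_{i=a+1}^n(1-s_i)$, so that $g=\max\{L,R\}$. I will show that neither strict inequality $R(\mbs^*)>L(\mbs^*)$ nor $L(\mbs^*)>R(\mbs^*)$ can hold at an optimum, or that if one does, an equally good feasible point with equality can be found. Throughout I assume $C>0$ and $a\ge 1$. The tool in each case is a small feasible perturbation that strictly decreases the larger branch. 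By continuity, the smaller branch stays strictly below it for small enough perturbations. Hence $g$ strictly decreases, which is a contradiction.

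\textbf{Case $R(\mbs^*)>L(\mbs^*)$.} Since $L\ge 0$, we have $R>0$, so $s^*_i<1$ for every $i>a$. Let $j$ be the first index of the block of coordinates tied with $s^*_n$. Raising all of $s_j,\dots,s_n$ by a small $\delta>0$ preserves the ordering and stays in $[0,1]$. The block contains $n$, so $R$ strictly decreases. Meanwhile $L$ moves continuously, so $\max\{L,R\}$ strictly decreases for $\delta$ small. This contradicts optimality.

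\textbf{Case $L(\mbs^*)>R(\mbs^*)$.} Any small move that strictly decreases $L$ suffices, because $R$ stays below $L$ by continuity regardless of the direction it moves in. I will perturb the top coordinate $s_1$, together with its tie block, so that the ordering is preserved. Since $1\le a$, index $1$ does not enter $R$. Along this direction $L$ is affine with slope
\begin{align*}
    \partial L/\partial s_1 = 1 - C\prod_{j\neq 1}(1-s^*_j).
\end{align*}
I then split by the sign of the slope. If the slope is positive and $s^*_1>0$, decrease $s_1$. If it is negative and $s^*_1<1$, increase $s_1$. Either move strictly lowers $L$, giving a contradiction. The boundary subcases are handled separately. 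If $s^*_1=0$, then $\mbs^*=\mb{0}$ and $L=R=C$, so equality already holds. If $s^*_1=1$, then $L=\sum_i s_i^*\ge 1$, and decreasing $s_1$ is possible whenever the slope is positive.

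\textbf{Main obstacle: the degenerate subcases.} These are a zero slope, or a tie block for $s_1$ that extends past index $a$ so that moving it also moves $R$. When the slope is zero, $L$ is constant along the $s_1$ direction. Then I would slide $s_1$, and its tie block, until either the branches meet or a boundary or ordering constraint becomes active. In the latter case I would repeat the argument on the next free coordinate. By the intermediate value theorem this yields either a point where the branches are equal with the same $g$ value, or a strict improvement. This also shows that one can always select an optimal $\mbs^*$ at which equality holds, which is the form of the statement used later.
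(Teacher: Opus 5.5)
Your case $R(\mbs^*)>L(\mbs^*)$ is correct and is essentially the paper's argument. You handle ties more carefully: the paper bumps a single honest coordinate, which can break the ordering when that coordinate is tied with its predecessor, whereas you lift the whole tie block of $s_n^*$. For $L(\mbs^*)>R(\mbs^*)$ you take a different route. The paper shifts mass between two attacker coordinates so that $R$ is untouched; you use that $L$ is affine in each single coordinate. Your route is the sounder one. The paper's shift, as written, raises the smaller coordinate and lowers the larger, which increases $\prod_i(1-s_i)$. It also needs $a\ge 2$, and it cannot decrease $L$ once some $s_i=1$. However, your boundary analysis has a genuine hole.

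\textbf{(i) Negative slope with $s_1^*=1$ is not covered.} None of your moves handles this subcase. For instance, with $a=1$, $h=2$, $C=5$ and $\mbs=(1,\tfrac12,\tfrac12)$, one has $L=2>\tfrac54=R$ and $\partial L/\partial s_1=-\tfrac14$. So $s_1$ can neither be raised nor lowered profitably.

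\textbf{(ii) Your zero-slope plan does not work as described.} Moving $s_1$ alone changes neither $L$ (zero slope) nor $R$ (index $1\le a$). So the branches never meet, and the intermediate value theorem yields nothing. Moving a tie block of size $m\ge 2$ is not affine either. Along it, $L$ equals $mv+C(1-v)^mP$ plus a constant, where $P$ is the product over coordinates outside the block. This is strictly convex, so at zero slope sliding the block raises $L$. Your worry about the block crossing index $a$ is moot, since you already noted that $R$ may move freely when $L>R$.

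\textbf{The missing idea.} Move only single coordinates, and exploit $s_1=1$.
\begin{itemize}
\item To lower, lower the last member of $s_1$'s tie block. By symmetry of $L$ it has the same partial derivative as $s_1$.
\item If the slope is $\le 0$, raise $s_1$ alone to $1$. This is always order-feasible, does not increase $L$, and leaves $R$ fixed.
\item Once $s_1=1$, the product term vanishes, so $L=\sum_i s_i$ and every other coordinate has partial derivative exactly $1$. Lowering the last positive coordinate then strictly decreases $L$, a contradiction, unless $\mbs=(1,0,\dots,0)$.
\item At $(1,0,\dots,0)$ one has $L=1$ and $R=C$. So $L>R$ would force $1-C>0$, contradicting slope $\le 0$; note the slope does not depend on $s_1$.
\end{itemize}
With this, every subcase ends in a strict improvement. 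So in fact every minimizer satisfies the equality, not merely some optimal point, which is stronger than the fallback you settle for.
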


Proof in \Cref{app:minimizer-g-equality-proof}. Using \Cref{lemma:minimizer-g-equality}, we can define a simpler constrained optimization problem for minimizing $g$,
\begin{align}
    \opt{3} :=& \min_{\mbs \in [0,1]^n}  C \prod_{i=a+1}^n (1-s_i)\tag{PROG3}\label{prog:g-constrained}\\
    &\text{s.t., } 1\geq s_1 \geq s_2 \geq \ldots \geq s_n \ge 0 \nonumber \\ 
    &\sum_{i\in [n]} s_i + C \prod_{i\in[n]} (1-s_i) = C \prod_{i=a+1}^n (1-s_i).\nonumber
\end{align}
Since this is exactly equivalent to \eqref{prog:g-full}, we have $\opt3 = \opt2 \leq \opt1$, so the solution to \eqref{prog:g-constrained} is a lower bound on the protocol loss. This transformation simplifies the analysis of the lower bound. Next, we define two ``types'' of equilibria that characterize possible shapes of the solutions to \eqref{prog:g-constrained}.

\begin{definition}[Designated uniform committee equilibrium]\label{def:designated-uniform}
    A strategy vector $\mbs$ is a \textit{designated uniform committee equilibrium} if it has (i) a single deterministic deliverer, (ii) a committee of size $k \leq n-1$ who each mix with the same probability $s\in(0,1)$, (iii) a (potentially empty) set of players who deliver with probability $0$. That is,
    \begin{align*}
        s_i = \begin{cases}
            1 & \text{if $i=1$} \\ 
            s & \text{if $i\in\{2,3,\ldots, k+1\}$} \\ 
            0 & \text{otherwise}.
        \end{cases}
    \end{align*}
\end{definition}
We sometimes use the vector notation for these equilibria
\begin{align*}
    \mb{s} = [1, \underbrace{s, \ldots, s,}_{\text{committee of size $k$}} 0, \ldots , 0].
\end{align*}
Further, we often refer to them simply as ``designated'' equilibria (where the ``designate'' is the player who deterministically delivers by playing $s_i=1$), but will include further context whenever the uniformity or the size of the committee is important.

\begin{definition}[Symmetric committee equilibrium]\label{def:symmetric}
    A strategy vector $\mbs$ is a \textit{symmetric committee equilibrium} if it has (i) a committee of size $k \leq n$ who each mix with the same probability $s\in(0,1)$, (ii) a (potentially empty) set of players who deliver with probability $0$. That is,
    \begin{align*}
        s_i = \begin{cases} 
            s & \text{if $i\in\{1,2,\ldots, k\}$} \\ 
            0 & \text{otherwise}.
        \end{cases}
    \end{align*}
\end{definition}
We sometimes use the vector notation for these equilibria
\begin{align*}
    \mb{s} = [\underbrace{s, \ldots, s,}_{\text{committee of size $k$}} 0, \ldots , 0].
\end{align*}
Further, we often refer to them simply as ``symmetric'' equilibria, but will include further context whenever the size of the committee is important. The minimizers of $g$ are exclusively one of those two shapes.

\begin{lemma}[Characterizing the minimizer of $g$]\label{lemma:shape}
    Let $\mbs^*$ denote the minimizing vector $\mbs^* := \arg\min_{\mbs\in[0,1]}g(\mbs).$ Then $\mbs^*$ is either a designated uniform committee equilibrium or a symmetric committee equilibrium (\Cref{def:designated-uniform,def:symmetric}).
\end{lemma}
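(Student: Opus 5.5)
The plan is to work with the equivalent program \eqref{prog:g-constrained} granted by \Cref{lemma:minimizer-g-equality}, and to show through exchange arguments that any $\mbs$ that is not of one of the two shapes can be perturbed so that it stays feasible and strictly lowers the objective. Write $P_{\mathrm{top}} := \prod_{i\le a}(1-s_i)$ and $P_{\mathrm{bot}} := \prod_{i>a}(1-s_i)$. The objective is then $C P_{\mathrm{bot}}$, and the equality constraint becomes $\sum_i s_i = C P_{\mathrm{bot}}(1-P_{\mathrm{top}})$. The fact I will use repeatedly: if we hold fixed the products that enter $g$ and strictly decrease $\sum_i s_i$, the left branch of \eqref{eq:two-branch-g} strictly drops. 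We can then raise some bottom coordinate a little to lower the right branch and recover a point with strictly smaller $g$. This contradicts optimality.

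\textbf{Step 1 (concentration within a block).} Take two coordinates $i,j$ in the same block (both $\le a$ or both $>a$) with $s_i,s_j\in(0,1)$. Set $x=1-s_i$ and $y=1-s_j$, and move them apart while keeping $xy$ fixed. For fixed $xy$, the sum $x+y$ grows as $x$ and $y$ spread apart, so $s_i+s_j$ strictly decreases while $P_{\mathrm{top}}$ and $P_{\mathrm{bot}}$ stay unchanged. The only thing that blocks such a move is the ordering constraint $s_1\ge\cdots\ge s_n$. So at an optimum, every block contains at most one ``free'' interior value. All other interior values must be pinned to a neighbor by the ordering, and the only place that can happen is the boundary between positions $a$ and $a+1$.

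\textbf{Step 2 (case split on $s_1$).} \emph{Case $s_1=1$.} Here $P_{\mathrm{top}}=0$, so the coordinates $s_2,\dots,s_a$ no longer affect either product. They therefore only add to $\sum_i s_i$, and should be pushed down as far as the ordering permits, namely to $s_{a+1}$. In the bottom block, Step 1 concentrates mass onto $s_{a+1}$ and sends the rest to $0$. But the top coordinates $s_2,\dots,s_a$ are tied to $s_{a+1}$, and splitting the bottom mass would require raising them. So I would compare the marginal cost of each bottom coordinate: each unit of $\sum_i s_i$ must buy as much reduction in $-\log P_{\mathrm{bot}}$ as possible, with the tied top coordinates counted as extra cost. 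I expect this accounting to show that all nonzero coordinates other than $s_1$ share one common value $s$, which is the designated shape $[1,s,\dots,s,0,\dots,0]$.

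\emph{Case $s_1<1$.} Now $P_{\mathrm{top}}>0$. I would first show that an interior top coordinate strictly above $s_{a+1}$ can be profitably moved. Pushing it up toward $1$ shrinks $P_{\mathrm{top}}$ and loosens the constraint, so it should go all the way to $1$, which lands us in the previous case. Otherwise every top coordinate equals $s_{a+1}$, and Step 1 together with the ordering forces the nonzero bottom coordinates to equal this same value as well, giving the symmetric shape.

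\textbf{Main obstacle.} The hard step is the boundary interaction. Step 1's spreading argument tends to produce non-uniform vectors, and it is the ordering constraint that forces uniformity, so the exchange arguments across the boundary at $a$ need care. My first attempt would be to write the KKT conditions of \eqref{prog:g-constrained} in the variables $u_i=-\log(1-s_i)$, with one multiplier for the equality constraint and one for each active ordering constraint. The stationarity condition for an interior coordinate reads $\lambda(1-s_i)^{-1}\cdot(\text{block weight}) = 1 + \mu_{i-1}-\mu_i$. Since this condition depends on $s_i$ only through $(1-s_i)^{-1}$, it pins all interior coordinates of a given block to a common value unless ordering multipliers are active. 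Chasing the ordering multipliers across the boundary at $a$ should then leave only the designated and symmetric configurations.
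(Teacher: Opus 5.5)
\textbf{The gap: Step~1 does not make the interior coordinates equal.} Your Step~1 is correct: with $(1-s_i)(1-s_j)$ fixed, spreading $s_i,s_j$ apart lowers $s_i+s_j$. But it only shows that each block has at most one \emph{free} interior coordinate. After Step~1, and after pushing $s_2,\dots,s_a$ down when $s_1=1$, you are still left with vectors $[1,x,\dots,x,y,0,\dots,0]$ or $[x,\dots,x,y,0,\dots,0]$ with $0<y<x$, where the run of $x$'s straddles positions $a,a+1$.

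The only coordinate $y$ could be spread against is one tied to $s_a$, and that coordinate cannot be raised alone. So Step~1 says nothing about $y$. This is exactly the step you leave as ``I expect'' when $s_1=1$. When $s_1<1$, the sentence ``Step~1 together with the ordering forces the nonzero bottom coordinates to equal this same value'' does not follow.

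The KKT plan does not close this either. The tied run carries active ordering multipliers, and stationary points with $0<y<x$ really exist. For $s_1=1$ and $a\ge 2$, let $m$ be the number of bottom coordinates equal to $x$. Trading the run against $y$ at fixed $\sum_i s_i$ is stationary when $1-y=(1-x)\frac{a-1+m}{m}$, which is feasible whenever $x>\frac{a-1}{a-1+m}$.

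Separately, ``pushing an interior top coordinate up loosens the constraint'' is unjustified. The left branch is affine in each $s_i$ with $i\le a$, with slope $1-CP_{\mathrm{bot}}\prod_{j\le a,\,j\ne i}(1-s_j)$, which can have either sign. So you move $s_i$ to whichever endpoint does not increase it, possibly down to $s_{a+1}$. This affine observation is what the paper uses in \eqref{eq:affine-in-sk}.

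\textbf{Case $s_1=1$: the fix is a concavity argument.} Fix $S=\sum_i s_i$, so $y=S-1-(a-1+m)x$. Along this segment $\log P_{\mathrm{bot}}=m\log(1-x)+\log(1-y)$ is strictly concave in $x$. Its minimum over $0\le y\le x$ is therefore at $y=0$ or $y=x$, which gives a uniform designated vector with the same sum and strictly smaller $P_{\mathrm{bot}}$. A small perturbation then strictly lowers $g$. The stationary points above are maxima of $P_{\mathrm{bot}}$ on the segment, which is why first-order accounting cannot exclude them.

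\textbf{Case $s_1<1$: no local argument can work.} Take $a=3$, $h=2$, $C\approx 3.37$. The vector $[0.35,0.35,0.35,0.35,0.073]$ has $g\approx 2.031$. This is below every symmetric committee: about $2.033$ for $k=4$, about $2.08$ for $k=5$, and at least $C$ for $k\le 3$. One can check that no small feasible perturbation lowers $g$, so this is a local minimizer of $g$ of neither shape. The global minimizer there is the designated $[1,s,s,s,s]$ with $s\approx 0.2385$ and $g\approx 1.954$.

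So the case $s_1<1$ needs a global comparison against designated vectors. Neither your exchange steps nor your multiplier chase provides one. The paper's sketch has the same hole: its equalization step inside $H$ assumes $s_i+\eps$ preserves the ordering, which fails exactly when $s_i=s_a$.
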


Proof in \Cref{app:shape-proof}. \Cref{lemma:shape} gives us a geometric picture of the minimizers of $g$, but remember that this function is only a \textit{lower bound} (\Cref{lemma:lb}) on the protocol loss parameterized just by the strategy vector $\mbs$, without saying anything about the corresponding payment rule $p$.

\begin{definition}[Implementable optimal equilibria]\label{def:implementable}
    An equilibrium vector $\mbs^*$ that is a lower bound on $g$ is \textit{implementable} if there exists a payment rule $p$ such that $\mbs^* \in \mcs(p)$ and $\ell(p,\mbs^*) = g(\mbs).$ 
\end{definition}

An implementable optimal equilibrium achieves the lowest possible protocol cost by making the lower bound tight and thus being a solution to \eqref{prog:full}. The rest of this section analyzes the implementability of the lower bound. 

\Cref{subsec:designated} shows that all designated equilibria are implementable and further, that there exists a designated equilibrium vector that achieves cost no more than $1$ unit higher than the optimal loss (i.e., an additive $1-$approximation). \Cref{subsec:symmteric} demonstrates the conditions under which symmetric minimizers are implementable. \Cref{subsec:transition} examines how the minimizer of $g$ transitions between the two equilibrium shapes and their relative committee sizes. \Cref{subsec:negative-payments} examines the asymptotic scaling of the worst-case loss and how negative payments (i.e., staking and slashing) can reduce it.

\subsection{Designated equilibria}\label{subsec:designated}
If the minimizer of $g$ is designated, we can implement it with the following payment rule recipe. Proof in \Cref{app:designated-implementable-proof}. 

\begin{lemma}[All designated uniform committee equilibria are implementable]\label{lemma:designated-implementable}
    Given a designated uniform committee equilibrium parameterized by committee size $k\leq n-1$ and a mixing probability $s$, there exists a payment rule that implements the equilibria and achieves the minimal protocol loss of $1 + ks$.
\end{lemma}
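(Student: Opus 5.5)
The plan is a \emph{fixed-budget} payment rule. Its total payout is the constant $T := 1+ks$ whenever at least one proof arrives, and $0$ otherwise. The honest cost is then exactly $T$, because the designate delivers deterministically, so $\mbd\neq\mb{0}$ in equilibrium. The adversary can no longer inflate payments by over-delivering, which is what broke \Cref{ex:pay-designated,ex:pay-deliverers}. The only remaining lever is under-delivery. Also, $1+ks$ is exactly the left branch of $g$ in \eqref{eq:two-branch-g} at $\mbs=[1,s,\dots,s,0,\dots,0]$, since the product term vanishes. So by \Cref{lemma:lb} no rule implementing this $\mbs$ can do better than $1+ks$, and matching it suffices.

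\textbf{Construction.} I would split $T$ among deliverers using weights. The designate gets weight $w>0$, each committee member gets weight $1$, and players outside the committee get weight $0$. With $D$ the set of deliverers, player $i\in D$ receives $T\cdot w_i/\sum_{j\in D} w_j$; non-deliverers receive $0$, and all payments are nonnegative. A committee member with $s\in(0,1)$ must be indifferent between delivering and not. Writing $X\sim\mathrm{Bin}(k-1,s)$ for the number of other committee deliverers, this reads
\[
\phi(w) := T\,\E\!\left[\tfrac{1}{w+1+X}\right] = 1 .
\]
The map $\phi$ is continuous and decreasing in $w$, with $\phi(w)\to 0$ as $w\to\infty$. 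At $w=0$ it equals $T\,\E[1/(1+X)] = (1+ks)\bigl(1-(1-s)^k\bigr)/(ks)$. I would check that this value is $\ge 1$, equivalently $(1+ks)(1-(1-s)^k)\ge ks$; it holds for small $s$ via $1-(1-s)^k\ge ks - \binom{k}{2}s^2$ and in general by a short calculus argument. If so, the intermediate value theorem yields the required $w$.

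\textbf{Incentives.} Players outside the committee receive nothing for delivering, so not delivering is a best response for them. For the designate, delivering yields $T\,\E[w/(w+Y)]-1$ with $Y\sim\mathrm{Bin}(k,s)$, while not delivering yields $0$. I would show that $T\,\E[w/(w+Y)]\ge 1$ by comparing it with the committee indifference condition, since the designate holds the larger share. If this comparison fails for some parameters, the fallback is to give the designate a separate fixed payment of $1$ conditional on delivering, and to split only the remaining budget $ks$ among the committee with a weight tuned the same way.

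\textbf{Worst case (main obstacle).} Any adversarial $\mbd_A$ that leaves $\mbd\neq\mb{0}$ costs exactly $T$. The only way to incur $C$ is to corrupt the designate and make every surviving deliverer fail. If $k+1\le a$, this happens deterministically; otherwise the adversary suppresses the top $a$ indices, and the failure probability is $\prod_{i=a+1}^n(1-s_i)$. The resulting expected cost is $C\prod_{i=a+1}^n(1-s_i)$, which is the right branch of $g$, and I must show it does not exceed $1+ks$. This is precisely where the equality characterization of \Cref{lemma:minimizer-g-equality} enters. For the $(k,s)$ arising as the minimizer of $g$, the two branches coincide, so the zero-delivery attack also costs $1+ks$. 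Mixed attacks where some corrupted provers deliver are dominated, since any delivery caps the cost at $T$. I expect the delicate parts to be this step and the designate-IC inequality.
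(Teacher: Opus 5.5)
\textbf{Genuine gap: the worst-case step.} Your rule pays the fixed budget $T=1+ks$ whenever \emph{any} proof arrives, so it cannot tell when the designate has been corrupted.

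Take the attack you single out: corrupt the top $a$ indices, designate included, and deliver nothing. Write $\pi:=\prod_{i=a+1}^n(1-s_i)$. The remaining honest committee members still deliver with probability $1-\pi$, and in that event your rule pays out all of $T$. The expected cost is therefore $(1-\pi)T+\pi C$, not $\pi C$ as you wrote; you dropped the payment term.

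At the equality point from \Cref{lemma:minimizer-g-equality} that you invoke, $\pi C=T$, so this cost is $(2-\pi)T>T$ whenever $\pi<1$. More generally, $(1-\pi)T+\pi C$ strictly exceeds $\max\{T,\pi C\}=g(\mbs)$ whenever $0<\pi<1$ and $C>T$. So your rule is incentive compatible but does not attain the loss $1+ks$, nor the lower bound $g$.

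\textbf{The missing idea.} Use the designate as an attack detector: pay nobody anything unless $d_1=1$. This is exactly what the paper's rule does: every $p_i$ is $0$ when $d_1=0$ (cf.\ \Cref{rem:powerful-mech,rem:simplicity-of-designated}). Since $s_1=1$, each honest player's expected payment, computed with the others following $\mbs$, is unchanged by this conditioning. Your incentive analysis therefore goes through verbatim. Meanwhile, the withholding attack now costs only $\pi C$, with no payments. Every attack in which the designate delivers costs exactly $T$. The loss is then $\max\{T,\pi C\}=g(\mbs)$.

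With that one change, your weighted split is a valid alternative to the paper's equal split plus designate residual. Two small remarks:
\begin{itemize}
\item Your check $\phi(0)\ge 1$ reduces to $(1-s)^k(1+ks)\le 1$, which follows from $e^{-x}(1+x)\le 1$.
\item The designate's incentive constraint needs no fallback. With $d_1=1$ always, the total payout is $T$, and each committee member receives $s$ in expectation by indifference. That leaves the designate exactly $T-ks=1$.
\end{itemize}
Conditioning on $d_1=1$ also removes the $0/0$ in your split when only weight-$0$ players deliver.
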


\begin{remark}[Simplicity of designated mechanism]\label{rem:simplicity-of-designated}
    The payment rule recipe used in the proof of \Cref{lemma:designated-implementable} is simple and extends naturally from our lower bound construction (\Cref{lemma:lb}). In particular, the protocol commits to:
    \begin{enumerate}
        \item[(i)] paying a \textit{fixed amount} $1 + ks$, if at least one proof is delivered, no matter how many copies are produced, and
        \item[(ii)] telling a single prover to deliver with probability $1$, and conditioning the other provers' payment on the deterministic player.
    \end{enumerate}
    This allows the protocol to choose $s$ such that both branches of the max are equal (\Cref{lemma:minimizer-g-equality}) and thus achieve exactly the loss of $g$. 

    From the adversarial perspective (i.e., worst-case analysis \Cref{rem:adversary}), the options are very limited. If they corrupt player $1$ and choose not to deliver $d_1 = 0$, then the protocol will not pay (by (ii) above), and the best they can do is try to cause a liveness penalty by playing a fully non-delivering strategy. If they don't corrupt player $1$, the protocol will certainly acquire a proof, but will only pay a fixed amount (by (i) above). Thus, the payment rule forces the attacker into one of two branches of the lower bound (\Cref{lemma:lb}) and is optimal.
\end{remark}

\Cref{lemma:designated-implementable} tells us that \textit{if} the minimizer of $g$ is designated, then we can implement the optimal mechanism. While this is a positive result, we know that the minimizer of $g$ is sometimes symmetric \Cref{lemma:shape}. \Cref{subsec:symmteric} studies that shape. There is some good news before then, which is that for \textit{any} payment rule $p$ and equilibrium $\mbs$ with loss $\ell(p,\mbs)$, there exists a corresponding payment rule $p'$ and \textit{designated} equilibrium $\mbs'$ such that $\ell(p',\mbs')\leq \ell(p,\mbs)+1$. That is, there is a designated equilibrium that is an additive-1 approximation to the loss of an arbitrary payment rule and equilibrium pair; proof in \Cref{app:payment-rule-reduction-proof}.

\begin{theorem}[Payment rule reduction]\label{thm:payment-rule-reduction}
    Given any payment rule, $p$, and corresponding equilibrium, $\mb{s}\in\mcs(p)$, there exists a modified payment rule, $p'$, and a modified equilibrium, $\mb{s}'\in\mcs(p')$, such that the modified equilibrium is designated and
    \begin{align*}
        \ell(p',\mb{s}') \leq \ell(p,\mb{s}) + 1.
    \end{align*}
\end{theorem}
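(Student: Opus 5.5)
The plan is to compare the given pair $(p,\mb{s})$ against the lower bound $g$ rather than work with $p$ directly. By \Cref{lemma:lb}, $\ell(p,\mb{s}) \ge \opt1 \ge \opt2 = g(\mb{s}^*)$, where $\mb{s}^*$ minimizes $g$. So it suffices to exhibit a designated equilibrium $\mb{s}'$ and a payment rule $p'$ implementing it with $\ell(p',\mb{s}') \le g(\mb{s}^*) + 1$. By \Cref{lemma:shape}, $\mb{s}^*$ is either designated or symmetric, and I split into these two cases.

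\textbf{Designated case.} If $\mb{s}^*$ is a designated uniform committee equilibrium, then \Cref{lemma:designated-implementable} gives a payment rule implementing it with loss equal to $g(\mb{s}^*)$. This already yields $\ell(p',\mb{s}') \le \ell(p,\mb{s})$, with no additive loss.

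\textbf{Symmetric case.} Suppose $\mb{s}^* = [t,\dots,t,0,\dots,0]$ with committee size $k$. Then
\[
g(\mb{s}^*) = \max\{kt + C(1-t)^k,\; C(1-t)^{k-a}\},
\]
where the second branch is read as $C$ if $k \le a$. Put a designate in front of the same committee: if $k \le n-1$, take $\mb{s}' = [1,t,\dots,t,0,\dots]$ with $k$ committee members.

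I will use the recipe from \Cref{rem:simplicity-of-designated} with committee size $k$ and probability $t$. The protocol pays a fixed total $1+kt$ whenever at least one proof arrives, and it conditions the committee's payments on the designate delivering. I then compute the loss of $\mb{s}'$ under this rule.

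An adversary that leaves the designate honest always gets a proof and can only cost $1+kt$. An adversary that corrupts the designate triggers no payments, and its best play is to spend its remaining $a-1$ corruptions on committee members and withhold all proofs. This costs $C(1-t)^{k-a+1} \le C(1-t)^{k-a}$. Hence
\[
\ell(p',\mb{s}') \le \max\{1+kt,\; C(1-t)^{k-a}\} \le g(\mb{s}^*) + 1.
\]

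If $k=n$, I instead use a committee of $n-1$ members behind the designate. The loss becomes $\max\{1+(n-1)t,\; C(1-t)^{h}\}$. The first term is at most $nt+1 \le g+1$, and the second term equals the right branch of $g$.

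\textbf{Main obstacle.} The key step is confirming that the recipe of \Cref{lemma:designated-implementable} implements a designated equilibrium for an \emph{arbitrary} pair $(k,t)$, and not only the pair where the two branches balance. Concretely, I must check two things.

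First, incentive compatibility must hold for arbitrary $(k,t)$. The designate must weakly prefer delivering, and each committee member must be exactly indifferent. The plan is to split the fixed prize so that the designate receives at least $1$ and each committee member receives $1$ in expectation conditional on delivering. This uses the whole budget $1+kt$ and is feasible with nonnegative payments.

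Second, the adversary's optimum must be exactly one of the two branches computed above. Because the total payment is a fixed constant whenever any proof arrives, over-delivery cannot raise the cost. Because payments vanish when the designate fails, partial delivery after corrupting the designate is dominated by full withholding. If the lemma's proof is tied to the equality point, I will redo this short argument directly for general $(k,t)$.
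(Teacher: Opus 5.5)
\textbf{Verdict: correct, but by a different and more global route than the paper.}

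\textbf{The paper's route.} The paper never passes through a minimizer of $g$. It works directly with the given $\mb{s}$:
\begin{itemize}
\item it sorts $\mb{s}$, promotes $s_1$ to $1$, and keeps $s_2,\dots,s_n$, which may be non-uniform;
\item it builds $p'$ from $p$ by averaging each player's payment over $d_1$ and zeroing all payments when the designate fails;
\item it compares $\ell(p',\mb{s}')=\max\{\sum_i s'_i,\, C\prod_{i>a}(1-s'_i)\}$ with the elementary bound $\ell(p,\mb{s})\ge\max\{\sum_i s_i,\, C\prod_{i>a}(1-s_i)\}$.
\end{itemize}
The right branches coincide, and the left branches differ by $1-s_1\le 1$.

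\textbf{Your route.} You compare against $\opt2=g(\mb{s}^*)$ and invoke \Cref{lemma:shape}, so you only ever need a \emph{uniform} committee. There, the explicit fixed-prize rule of \Cref{lemma:designated-implementable} is valid for arbitrary $(k,t)$, as you suspected. It makes the step ``total payment is constant whenever the designate delivers'' airtight, and your case analysis, including $k=n$, is correct.

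\textbf{What your route buys.}
\begin{itemize}
\item The paper's averaged $p'$ does not literally have constant total payment: if $p$ pays every deliverer $1$, so does $p'$. Its route therefore implicitly needs an extra splitting step for a non-uniform committee. One option is to pay committee deliverer $i$ the amount $\alpha_i/|D|$ with $\alpha_i=1/\E[1/(1+X_{-i})]$, which satisfies $\alpha_i\le 1+\sum_{j\ne i}s_j$ by Jensen, and let the designate take the remainder. Your route sidesteps this entirely.
\item You prove the slightly stronger $\ell(p',\mb{s}')\le \opt2+1$, which is \Cref{corr:designated-error} directly.
\end{itemize}

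\textbf{What it costs.}
\begin{itemize}
\item The whole argument now rests on \Cref{lemma:shape}, a global structural statement about minimizers of $g$ that is much harder than the theorem itself. It also needs the existence of a minimizer.
\item Your $p'$ bears no relation to the given $p$, whereas the paper's construction is local and derived from $p$.
\end{itemize}

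Minimality of $\mb{s}^*$ is used only to invoke \Cref{lemma:shape}; your symmetric-case computation works for any symmetric vector. So applying the same designate step directly to $\mb{s}$ (promoting its top coordinate), together with a split for non-uniform committees, recovers the paper's proof without \Cref{lemma:shape}.
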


\begin{corollary}[Designated equilibria are approximately optimal]\label{corr:designated-error}
    By \Cref{thm:payment-rule-reduction}, given the optimal payment rule $p^*$ and optimal equilibrium $\mb{s}^*$, there exists a modified payment rule $p'$ and a correspondingly modified designated equilibrium $\mb{s}'$ such that
    \begin{align*}
        \ell(p',\mb{s}') \leq \ell(p^*, \mb{s}^*) + 1.
    \end{align*}
\end{corollary}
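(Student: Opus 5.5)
The corollary is a direct specialization of \Cref{thm:payment-rule-reduction}, so the plan is simply to instantiate that theorem at the optimum. Let $(p^*,\mb{s}^*)$ be an optimal solution of \eqref{prog:full}. By definition it is feasible, so $\mb{s}^*\in\mcs(p^*)$, and $\ell(p^*,\mb{s}^*)=\opt1$. This is exactly the hypothesis of \Cref{thm:payment-rule-reduction}: the theorem applies to \emph{any} payment rule together with an equilibrium it implements.

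Invoking the theorem on $(p^*,\mb{s}^*)$ yields a payment rule $p'$ and a designated equilibrium $\mb{s}'\in\mcs(p')$ with $\ell(p',\mb{s}')\le \ell(p^*,\mb{s}^*)+1$. This is the claimed inequality. Since $\mb{s}'\in\mcs(p')$, the pair $(p',\mb{s}')$ is itself feasible for \eqref{prog:full}, so the optimal designated mechanism is within an additive $1$ of $\opt1$.

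The only point that needs care is whether the minimum in \eqref{prog:full} is actually attained, since the corollary presupposes that an optimal pair $(p^*,\mb{s}^*)$ exists. If one does not want to assume this, I would use a limiting argument instead. Take a feasible sequence $(p^{(m)},\mb{s}^{(m)})$ whose loss tends to $\opt1$, and apply \Cref{thm:payment-rule-reduction} to each term. This gives designated pairs with loss at most $\opt1+1+\eps_m$, where $\eps_m\to 0$. Hence the infimum of the loss over designated mechanisms is at most $\opt1+1$.

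To upgrade this infimum to an attained value, I would note that designated equilibria are described by finitely many parameters: the committee size $k$ and the mixing probability $s$. By \Cref{lemma:designated-implementable}, each designated equilibrium is implementable with the explicit loss $1+ks$ (matching $g$ with its two branches balanced). This loss is continuous in $s$, and $k$ ranges over a finite set, so a minimizing designated mechanism exists. I do not expect any step here to be a real obstacle; all the substance lives in \Cref{thm:payment-rule-reduction} itself.
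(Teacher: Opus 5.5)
Your argument is the paper's: the corollary is \Cref{thm:payment-rule-reduction} applied to the optimal feasible pair $(p^*,\mb{s}^*)$, and the paper gives no further proof. Your attainment paragraph is unnecessary, since the statement presupposes $(p^*,\mb{s}^*)$, and it is slightly off as written. The attacker can corrupt the designate, so the construction of \Cref{lemma:designated-implementable} has loss $\max\{1+ks,\,C(1-s)^{k+1-a}\}$ (for $k\ge a-1$) rather than $1+ks$. Also, the theorem outputs $\mb{s}'=[1,s^*_2,\ldots,s^*_n]$, which need not have a uniform committee, so reducing to the two-parameter family $(k,s)$ needs an extra averaging step like the one in the proof of \Cref{lemma:shape}.
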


\begin{remark}[On the strength of the approximately optimal construction]
    Our main practical takeaway is advising practitioners to use the optimal designated equilibrium (\Cref{rem:use-designated}), which extends directly from the simplicity (\Cref{rem:simplicity-of-designated}) and approximate optimality (\Cref{corr:designated-error}) of this construction. We also show that the mechanism is asymptotically optimal in $C$ because the additive $1$ error doesn't impact the $\mco(\log C)$ scaling (\Cref{lem:designated-loss-asymptotics}) and that stake can further improve the scaling of the loss (\Cref{lem:staking-asymptotics}).
\end{remark}

Before discussing the asymptotic scaling, however, we cover the non-asymptotic and discrete cases for the optimal loss of the protocol, turning our attention to the symmetric shape for minimizing equilibria of $g$ (\Cref{subsec:symmteric}) and the transition dynamics of the minimizing equilibrium of $g$ (\Cref{subsec:transition}). 

\subsection{Symmetric equilibria}\label{subsec:symmteric}
The other possible form of the minimizer of $g$ is a symmetric equilibrium (\Cref{def:symmetric}). This subsection analyzes the feasibility of implementing such equilibria while achieving the lower bound $g$ (i.e., finding when the lower bound is tight on the symmetric shape). The following lemma allows us to only consider simple anonymous payment rules that only pay based on the total number of proofs delivered. 

\begin{lemma}[Symmetric payment rule reduction]\label{lemma:symmetric-anonymous}
    Any payment rule $p$ that implements a symmetric equilibrium $\mbs$ with a committee size of $k$ can be transformed into an anonymous payment rule $p'$ with the following form:
    \begin{align*}
        p'_i(\mb{d}) &= 
            \begin{cases}
                f_t/t & \text{if } d_i = 1, ||\mb{d}||_1 = t, i \leq k \\ 
                0 & \text{otherwise}.
            \end{cases} 
    \end{align*}
    where $f_t$ is a fixed total prize that the protocol pays under the event that there are exactly $t$ proofs delivered. The symmetric vector is still an equilibrium under the modified payment rule $\mbs \in \mcs(p')$ and $p'$ has weakly lower cost $\ell(p',\mbs) \leq \ell(p,\mbs)$.
\end{lemma}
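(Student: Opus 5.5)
The plan is to go from $p$ to $p'$ in three moves: symmetrize, strip payments that do not reward delivery, and repackage as prizes $f_t$. At every move I check two things. First, each committee member is still indifferent between delivering and not. Second, the loss $\ell$ of \Cref{def:loss} does not increase. For the loss it is enough to show either that $\text{cost}$ weakly decreases pointwise in $\mb{d}$, or that the new rule is an average of rules each having loss at most $\ell(p,\mbs)$.

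\textbf{Step 1 (non-committee players).} Set $p_i\equiv 0$ for all $i>k$. Since $s_i=0$ for these players, the only IC requirement is that not delivering is weakly better. It still holds, because delivering now yields $-1<0$. The committee's payments are untouched, so their constraints are unaffected. Payments are nonnegative, so the cost weakly drops for every $\mb{d}$, and hence so does the loss.

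\textbf{Step 2 (symmetrize over the committee).} For a permutation $\sigma$ of $[k]$ fixing $k+1,\dots,n$, let $p^\sigma_i(\mb{d}) := p_{\sigma(i)}(\sigma\mb{d})$. Define $\bar p := \frac{1}{k!}\sum_\sigma p^\sigma$.
\begin{itemize}
\item Each $p^\sigma$ implements $\mbs$, since $\mbs$ is invariant under $\sigma$. Each $p^\sigma$ also has loss exactly $\ell(p,\mbs)$, since the adversary can relabel its corrupted set $A$.
\item The IC constraints are linear in $p$ for fixed $\mbs$, so $\bar p$ implements $\mbs$.
\item For each fixed $(A,\mb{d}_A)$, the expected cost under $\bar p$ is the average of the expected costs under the $p^\sigma$, each at most $\ell(p,\mbs)$. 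Taking the max gives $\ell(\bar p,\mbs)\le \ell(p,\mbs)$.
\end{itemize}
After this step a committee member's payment depends only on its own $d_i$ and on $t=\|\mb{d}\|_1$. One caveat: deliveries by non-committee players also enter $t$. I would either symmetrize only on the committee count, or note that the adversary controls those deliveries and the min-construction below absorbs them. So write $\bar p_i(\mb{d})=\beta_t$ if $d_i=1$ and $\alpha_t$ if $d_i=0$, for $i\le k$.

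\textbf{Step 3 (pay only deliverers).} Let $T\sim\mathrm{Bin}(k-1,s)$. The indifference condition reads $\E[\beta_{T+1}]-1=\E[\alpha_T]$. I will define a deliverer payment $\gamma_t:=\beta_t-\alpha_{t-1}$ and set $f_t:=t\gamma_t$.
\begin{itemize}
\item Indifference becomes $\E[\gamma_{T+1}]=1$, which follows directly from the old condition, so $\mbs\in\mcs(p')$.
\item Pointwise the new total is $t\beta_t-t\alpha_{t-1}\le t\beta_t+(k-t)\alpha_t$, because $\alpha\ge 0$. So cost, and hence loss, weakly falls.
\end{itemize}

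\textbf{Main obstacle.} The step I expect to be hardest is nonnegativity: $\gamma_t$ may be negative when $\alpha_{t-1}>\beta_t$. My first attempt is to replace $\gamma$ by $\gamma^+=\max(\gamma,0)$ and then scale down the positive part so that $\E[\gamma^+_{T+1}]=1$ still holds. Since $\E[\gamma_{T+1}]=1$ and $\gamma^+\ge\gamma$, we have $\E[\gamma^+_{T+1}]\ge 1$, so a scaling factor $\lambda\le 1$ exists. The scaled payments satisfy $\lambda\gamma^+_t\le \beta_t$, because $\gamma_t\le\beta_t$ and $\beta_t\ge 0$. Hence the pointwise cost bound survives: $t\lambda\gamma^+_t\le t\beta_t$. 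Checking this inequality carefully, along with the handling of $t$ counting adversarial non-committee deliveries, is where I expect the real work to be.
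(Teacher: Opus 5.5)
Your proof is correct once you settle the Step 2 caveat in favour of your first option (see below). It differs from the paper's proof mainly in Step 3.

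\emph{Steps 1 and 2.} Step 1 is identical to the paper's. The paper averages in two stages: each player's payment over delivery patterns with the same count, then across committee members. This produces exactly your $S_k$-symmetrized $\bar p$. However, the paper justifies the loss only by asserting $\ell(\bar p,\mbs)=\ell(p,\mbs)$ because ``the payments are the same''. Your argument is the correct justification: each $p^\sigma$ has loss $\ell(p,\mbs)$, and expected cost is linear in $p$ for fixed $(A,\mb{d}_A)$. It gives $\le$, which is all that is needed.

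\emph{Step 3.} The paper hands deliverers the whole pot, $\bar f_t/t=\beta_t+\frac{k-t}{t}\alpha_t$. It then observes $V:=\E[\bar f_{T+1}/(T+1)]\ge \E[\beta_{T+1}]=1+\E[\alpha_T]\ge 1$ and divides by $V$. Nonnegativity is automatic, and the pointwise bound is $\bar f_t/V\le \bar f_t$. You instead pay the marginal incentive $\gamma_t=\beta_t-\alpha_{t-1}$. This restores indifference exactly before any rescaling, but it may be negative, hence your clip-and-rescale. What you flag as the main obstacle is immediate: $\lambda\le 1$, $\gamma_t\le\beta_t$ and $\beta_t\ge 0$ give $\lambda\gamma_t^+\le\gamma_t^+\le\beta_t$. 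Subtracting $\alpha_{t-1}$ actually buys nothing here: paying deliverers $\beta_t/\E[\beta_{T+1}]$ works directly.

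\emph{The caveat.} After averaging over committee permutations, $\bar p_i$ depends on $d_i$, the number of other committee deliveries, and $\mb{d}_{>k}$, not on $\|\mb{d}\|_1$ alone. Read $\alpha_t,\beta_t$ off at $\mb{d}_{>k}=\mb{0}$ and let $p'$ depend only on the committee count. Under this $p'$, a corrupted non-committee prover who delivers changes no payment and can only remove the penalty, so the worst case has such provers withholding. In that case $\mb{d}_{>k}=\mb{0}$, since honest non-committee provers never deliver, and your pointwise comparison with $\bar p$ applies.

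Your second option cannot be made to work: with $t=\|\mb{d}\|_1$ the statement is false when $k<n$, because the adversary can inflate $t$ with non-committee deliveries. Take $n=4$, $a=1$, committee $\{1,2,3\}$, $s=0.3$, $C=6$. Let $p$ pay each committee deliverer $1/s^2$ iff all three committee members deliver. Then $\ell(p,\mbs)=\max\{3,(1-s)^2C,3s+(1-s)^3C\}=3$. Now consider any $p'$ of the stated form with $f\ge 0$. It must satisfy the equilibrium condition $(1-s)^2f_1+s(1-s)f_2+\frac{s^2}{3}f_3=1$. To have loss at most $3$, the attack ``corrupt a committee member and withhold'' must also satisfy $2s(1-s)f_1+s^2f_2+(1-s)^2C\le 3$. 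Together these force the attack ``corrupt prover 4 and deliver'' to cost $\frac32 s(1-s)^2 f_2+2s^2(1-s)f_3+\frac34 s^3 f_4>3.7$.

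The paper's proof tacitly uses the committee count: its outcome probabilities $s^{t-1}(1-s)^{k-t}$ only make sense that way. For $k=n$ the two readings coincide.
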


Proof in \Cref{app:symmetric-anonymous-proof}. \Cref{lemma:symmetric-anonymous} allows us to simply focus on payment rules that set a fixed prize as a function of the number of proofs delivered (i.e., payment rules that are anonymous and don't pay non-deliverers). Consider a symmetric equilibrium with $h$ honest players each mixing with probability $s$. Let $X\sim \text{Binomial}(h,s)$ denote a random variable counting the number of honest deliveries. Further, let $f_i$ denote the total protocol payment given $i$ proofs are delivered, and $f_0 := C$ by definition. Then the optimal payment rule can be written as the solution to the following LP, where $X \sim \text{Binomial}(n-1, s)$:

\begin{align}
    &\min_{f_1,\ldots f_n \geq 0} t \tag{LP1} \label{prog:lp-symmetric}\\
    \text{s.t., \;\;} & \E [f_{i+X}] \leq t \quad \forall i \in \{0,1 \ldots, a\} \tag{attacker delivers i} \label{const:sym-lp-attacker}\\ 
    &\E\left[\frac{f_{1+X}}{1+X}\right] = 1
\tag{equilibrium condition}\label{const:sym-lp-eq}.
\end{align}

See \Cref{app:lp-symmetric-expanded} for a more verbose form of the LP. Each of the \eqref{const:sym-lp-attacker} constraints are defined by the protocols' expected payment over the random distribution of honest payments. Similarly, the \eqref{const:sym-lp-eq} constraint ensures that each player's expected payment is exactly $1$ if they deliver, making them indifferent between delivering and not delivering (both with utility 0), which is a requirement for them to play a mixed strategy. 

We constructed $g$ such that the protocol never pays if it detects that it is being attacked (\Cref{rem:powerful-mech}). When the minimizer of $g$ is designated (\Cref{subsec:designated}), this is easy to do as we can condition the payments on the delivery of the deterministic player. In the symmetric case, it is more involved. First, we define the following linear system, which must be solved to find the values of $f_i$ in order for the lower bound to be tight. 

\begin{definition}[Feasible symmetric linear system]\label{def:feasible-linear-system}
    Consider the values of $f_i$ in \eqref{prog:lp-symmetric}. In order to \textit{detect} that the attacker has played the fully non-delivering strategy (and thus achieve the lower bound on the right branch of $g$), the payment rule never needs to pay if there are fewer than $h+1$ deliveries. More formally, it sets $f_1 = f_2 = \ldots = f_h = 0$. With that, the \eqref{const:sym-lp-attacker} constraint with $i=0$ becomes $C(1-s)^h \leq t$, which is exactly the right branch of $g$, thus we know that the constraint must be tight. Further, in order for the solution to the \eqref{prog:lp-symmetric} to achieve this bound, each of the other \eqref{const:sym-lp-attacker} constraints must also be tight. 
    
    This allows us to construct the following linear (in $f_{i>h}$) system of equations, which has exactly $a$ equations and unknowns:
    \begin{alignat}{2}
        &(1-s)^h C &&= t\tag{LS1}\label{prog:linear-system}\\
        &s^{h}f_{h+1} &&= t \nonumber \\
        &s^{h}f_{h+2} + h s^{h-1}(1-s) f_{h+1} &&= t\nonumber \\
        &\ldots \nonumber\\
        &\sum_{i=0}^h \binom{h}{i}s^i(1-s)^{h-i} f_{a+i} &&= t.\nonumber
    \end{alignat}
    Each equation in the system corresponds to one of the attacker constraints in \eqref{prog:lp-symmetric} (e.g., the $i^{th}$ equation is the expected protocol payment under $i$ attacker deliveries). In order to achieve the lower bound on $g$, each of the constraints needs to be tight.
\end{definition}

The following lemma gives us an upper bound on $C$ for when symmetric minimizers of $g$ are implementable.

\begin{lemma}[Optimal symmetric implementability]\label{lemma:symmetric-implementable}
    For a given $(h,a,n)$, a symmetric minimizer of $g$ is implementable if 
    \begin{align*}
        C \geq \frac{hn(h+1)^{n-1}}{(h+1)^a - 1}.
    \end{align*}
\end{lemma}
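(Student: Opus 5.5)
The plan is to build the payment rule explicitly inside the anonymous class allowed by \Cref{lemma:symmetric-anonymous}. We then check that it meets every constraint of \eqref{prog:lp-symmetric} with objective value $t = C(1-s)^h$, the right branch of $g$. By \Cref{lemma:minimizer-g-equality}, at a symmetric minimizer $\mbs^* = [s,\ldots,s,0,\ldots,0]$ this value equals the left branch. So matching it gives $\ell(p',\mbs^*) = g(\mbs^*)$, which is exactly implementability in the sense of \Cref{def:implementable}.

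\textbf{Step 1: the undetected-attack payments.} Following \Cref{def:feasible-linear-system}, we set $f_1=\cdots=f_h=0$, so the attacker constraint with $i=0$ reads $C(1-s)^h \le t$ and holds with equality. The remaining unknowns $f_{h+1},\ldots,f_n$ appear in \eqref{prog:linear-system} in lower-triangular form: the $j$-th equation involves only $f_{h+1},\ldots,f_{h+j}$, and the coefficient of $f_{h+j}$ is $s^h$. We solve by forward substitution,
\[
f_{h+j} = s^{-h}\Bigl(t - \sum_{m=1}^{j-1}\tbinom{h}{j-m}s^{j-m}(1-s)^{h-j+m} f_{h+m}\Bigr)
\]
(binomial coefficients vanishing when $j-m>h$).

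\textbf{Step 2: nonnegativity and growth.} Nonnegativity does not follow from the recursion directly, so we first bound its growth. By induction we aim to show $0 \le f_{h+j} \le t\,(h+1)^{j-1}/s^h$. If the alternating recursion threatens negativity, we instead relax to the inequalities $\E[f_{i+X}]\le t$, taking each $f_{h+j}$ as the largest nonnegative value keeping constraint $j$ feasible. The geometric factor $(h+1)$ is what the bound $C \ge hn(h+1)^{n-1}/((h+1)^a-1)$ suggests. Summing these bounds produces a series $\sum_{j}(h+1)^{j-1} = ((h+1)^a-1)/h$, which explains the denominator.

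\textbf{Step 3: the equilibrium condition.} We must also satisfy $\E[f_{1+X}/(1+X)]=1$ with $X\sim\mathrm{Binomial}(n-1,s)$. Since the system has $a$ equations in $a$ unknowns, this extra equation is the real content. The idea is to keep the attacker constraints as inequalities, with only $i=0$ necessarily tight, and use one free degree of freedom, such as scaling the top prize $f_n$ or redistributing mass among the $f_{h+j}$, to hit the equilibrium equation exactly. Feasibility then reduces to a one-dimensional intermediate-value argument. At the triangular solution the expected per-deliverer payment should be at least $1$, which is the part that uses the hypothesis on $C$: large $C$ makes $t = C(1-s)^h$ large relative to the $n$ units needed. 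At $f\equiv 0$ above $h$ it is $0$. Lowering payments only loosens the attacker constraints, and the continuity of $\E[f_{1+X}/(1+X)]$ in $f$ gives the exact value $1$. Quantitatively, we would lower-bound $\E[f_{1+X}/(1+X)]$ by the $X=n-1$ term, of size $s^{n-1}f_n/n$, or by a comparable term. Comparing it with the Step 2 bound and the identity $ns + C(1-s)^n = C(1-s)^h$ should reproduce the threshold $hn(h+1)^{n-1}/((h+1)^a-1)$.

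The main obstacle is Steps 2 and 3 together. We must simultaneously control the signs of the triangularly-solved $f_{h+j}$ and show the equilibrium expectation reaches $1$. This requires a clean relation between $s$ and $C$ at the minimizer, for which we would use \Cref{lemma:minimizer-g-equality}. If the exact comparison is messy, we would first try the crude bounds $s^{n-1}\ge$ (something in terms of $(h+1)^{-(n-1)}$) and $(1-s)^h t^{-1}=1/C$, and accept a sufficient rather than tight condition, since the lemma only claims sufficiency.
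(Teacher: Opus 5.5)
\textbf{There is a genuine gap.} The crux is showing that the forward-substituted values $f_{h+1},\dots,f_n$ are nonnegative, and your proposal never establishes this.

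The missing idea is that the triangular system is a convolution whose kernel has generating function $s^h(1+rx)^h$, with $r=(1-s)/s$. It therefore inverts in closed form:
\[
f_{h+i} \;=\; C r^h \sum_{j=0}^{i-1} (-1)^j \binom{h+j-1}{j} r^j ,
\]
that is, $Cr^h$ times a partial sum of the series for $(1+r)^{-h}$.

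The ratio of consecutive term magnitudes is $\frac{h+j}{j+1}\,r$, which is nonincreasing in $j$. So if $hr\le 1$, the magnitudes are nonincreasing and every alternating partial sum is nonnegative. Conversely, if $hr>1$ then $f_{h+2}=Cr^h(1-hr)<0$ (for $a\ge 2$). Hence nonnegativity is equivalent to $s\ge h/(h+1)$.

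The equality of \Cref{lemma:minimizer-g-equality} gives
\[
C = \frac{ns}{(1-s)^h\bigl(1-(1-s)^a\bigr)} = \frac{n}{(1-s)^h\sum_{j=0}^{a-1}(1-s)^j},
\]
which is increasing in $s$. Evaluating it at $s=h/(h+1)$ yields exactly $hn(h+1)^{n-1}/((h+1)^a-1)$. This is essentially the paper's argument. The factor $h+1$ comes from the critical value $1-s=1/(h+1)$, and the denominator comes from $1-(1-s)^a$ at that point, not from summing growth bounds. Your upper bound $f_{h+j}\le t(h+1)^{j-1}/s^h$ plays no role.

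\textbf{Your fallback in Steps 2--3 relies on a degree of freedom that does not exist.} First, if forward substitution gives $f_{h+j}<0$, the earlier terms alone already push constraint $j$ above $t$. So no nonnegative choice of $f_{h+j}$ repairs it.

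More fundamentally, set $f_0:=C$ and let $Y\sim\mathrm{Bin}(h,s)$ and $W\sim\mathrm{Bin}(a,s)$ be independent, so $W+Y\sim\mathrm{Bin}(n,s)$. With $X\sim\mathrm{Bin}(n-1,s)$, averaging the attacker constraints gives
\[
\sum_{w=0}^{a}\Pr[W=w]\,\E\bigl[f_{w+Y}\bigr] \;=\; C(1-s)^n + ns\,\E\!\left[\frac{f_{1+X}}{1+X}\right].
\]
By the $g$-equality, $C(1-s)^n+ns=C(1-s)^h=t$. Every weight is positive for $s\in(0,1)$, and feasibility requires each $\E[f_{w+Y}]\le t$. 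Therefore, among feasible payments, the equilibrium condition holds if and only if \emph{all} $a+1$ attacker constraints are tight.

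Two consequences follow. You cannot keep the constraints $i\ge 1$ as slack inequalities and tune one parameter: any slack drives the per-deliverer expectation strictly below $1$, and lowering payments only makes this worse. Conversely, once the triangular system is solved, the equilibrium condition holds automatically, which is why the paper only checks signs.

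So Step 3 is not ``the real content.'' The intermediate-value argument has nothing to move, and bounding the expectation by the $X=n-1$ term would not reproduce the threshold.
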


Proof in \Cref{app:symmetric-implementable-proof}. \Cref{lemma:symmetric-implementable} gives us an upper bound on the value of $C$ past which symmetric minimizers of $g$ are implementable (\Cref{def:implementable}) with the payment rule presented in \Cref{lemma:symmetric-anonymous}. Notice that this lemma works for any subcommittee size $k$. However, it is not always the case that these symmetric minimizers are implementable.

\begin{example}[Non-implementable minimizer]\label{example:non-implementable}
    Consider $h=2, n=5, C=7.$ Then $g$ is minimized in the fully symmetric equilibrium (\Cref{def:symmetric}) with $s_1=\ldots = s_n \approx 0.4735$ which results in a $g(\mbs^*) \approx 10\cdot (1-0.4735)^2\approx 2.772$. By \Cref{lemma:symmetric-anonymous}, we know that we only need to check the feasibility of the anonymous symmetric payment rule. Solving \eqref{prog:linear-system}, gives $f_4=-15.133$. Since negative payments are not allowed, this equilibrium is not implementable (\Cref{def:implementable}).
\end{example}

This example demonstrates that our lower bound is sometimes loose. More formally, there exist equilibria $\mbs$, such that $\mbs$ solves \eqref{prog:g-constrained}, but $\nexists p, \mbs \in \mcs(p): \ell(p, \mbs) = g(\mbs).$ This is a limitation of our lower bound function $g$. However, by \Cref{thm:payment-rule-reduction}, we know that the optimal designated equilibrium is an additive $1-$approximation of the true optimal loss. Thus, it follows that if the best implementable symmetric equilibrium (the solution to \eqref{prog:lp-symmetric}) has a lower loss than the optimal designated equilibrium, it too is an additive $1-$approximation of the true optimal. \Cref{app:lottery-payment-rules} points out that a lottery, which is a much simpler symmetric payment rule than the solution to the LP, is a $2-$approximation of optimal.


\Cref{subsec:designated,subsec:symmteric} study the designated and symmetric minimizers of $g$, respectively. \Cref{subsec:transition} explores the \textit{transition points} between these regimes for the minimizer of $g$.

\subsection{Transition dynamics of $g$}\label{subsec:transition}
The lower bound on $g$ is the solution to a non-linear constrained optimization problem, which results in non-obvious properties; consider the relatively small numerical example below. 

\begin{example}[$h=2,n=5$]\label{examples:h2n5}
As we increase the value of $C$, the minimizer $\mb{s}^*$ of $g$ takes multiple shapes:
\begin{itemize}
    \item $C=5 \implies \mbs^* = [1, 0.323,0.323,0.323,0.323]$ \\(designated full). 
    \item $C=7 \implies \mbs^* = [0.473,0.473,0.473,0.473,0.473]$ \\(symmetric full).
    \item $C=10 \implies \mbs^* = [0.926,0.926,0.926,0.926,0.]$ \\(symmetric committee).
\end{itemize}
In words, the shape of the minimizer changes from designated to symmetric. Further, as $C$ increases, the committee size of the symmetric optimizer changes from $5$ to $4$.
\end{example}

From \Cref{lemma:shape}, we know that the minimizer of $g$ is either a designated or symmetric equilibrium, but exactly which it is depends on the specific $(h,n,C)$ values. Geometrically, the solution to \eqref{prog:g-constrained} is subject to:
\begin{itemize}
    \item \textit{Ordering constraints:} Notice that $1 \geq s_1 \geq \ldots \geq s_n \geq 0$ forms an $n-$dimensional polytope. 
    \item \textit{Branching constraint:} By setting the two branches of $g$ equal, the constraint forms an $n-$dimensional surface. 
\end{itemize}
Combining the above observations, we get that the minimizer of $g$ is the minimizing value of the objective on the intersection of the surface and the polytope. In three dimensions, we can visualize this explicitly. 

\begin{figure}
	\centering
	\includegraphics[width=0.65\linewidth]{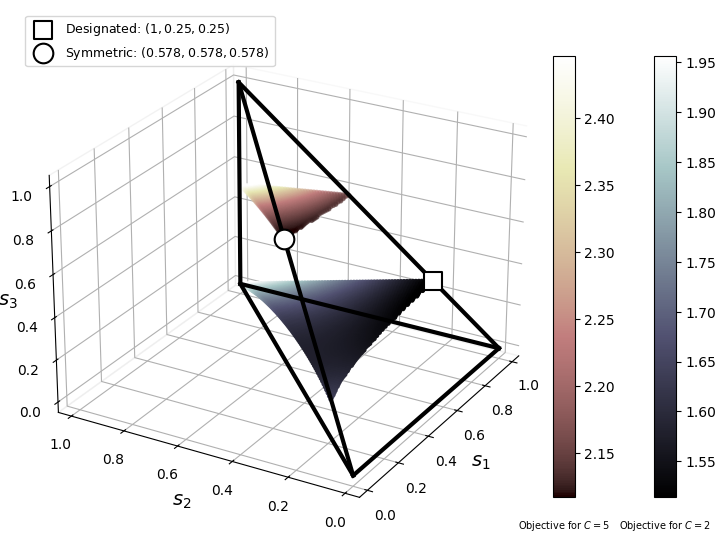}
	\caption{A $3-$dimensional example of the transition dynamics of the solution to \eqref{prog:g-constrained} at $h=1,n=3$ and $C=2$ (lower surface) and $C=5$ (higher surface). The colorbars denote the value of $g$ at $[s_1,s_2,s_3]$, and the circle and square markers indicate the minimizers over the intersection of the surface and the polytope.}
	\label{fig:surface}
\end{figure}

\Cref{fig:surface} shows the $h=1,n=3$ example where the two surfaces ($C=2$, lower \& $C=5$, higher) at $\mbs = [s_1,s_2,s_3]$ with the colorbars indicating the value of $g$. The circle and square markers are the respective minimizers over the intersection of the polytope and each surface. Notice that with $C=2$, the minimizer lies on the edge $[1,s,s]$ (square marker). Conversely, at $C=5$, the minimizer flips to the diagonal $[s,s,s]$ (circle marker). This is exactly what happens in higher dimensions, but the minimizer can also jump to an edge that represents a smaller committee size (\Cref{examples:h2n5}). To analyze the transition between these edges algebraically, we start by defining the minimizers over the respective shapes.

\begin{definition}[Minimizing symmetric and designated losses]\label{def:minimizing-losses-g}
    For a given $(h,a,n,C)$, let $S^*_k,D^*_j$ be the value of $g$ at the minimizing symmetric and designated equilibria, respectively, where $j,k$ denote the optimal committee sizes and $h_j:= j-a,h_k:=k-a$ the resulting number of honest committee members. More formally,
    \begin{align*}
        D_j^* := &\min_{\substack{h_j\in[h]\\ s\in[0,1]}} \big\{1+ (h_j + a-1)s\big\} \\ 
        &\text{s.t. } 1 + (h_j+a-1)s = C(1-s)^{h_j} \\
        S_k^* := &\min_{\substack{h_k\in[h]\\ s\in[0,1]}} \big\{(h_k+a)s + C(1-s)^{h_k+a}\big\} \\ 
        &\text{s.t. } (h_k+a)s + C(1-s)^{h_k+a} = C(1-s)^{h_k}.
    \end{align*} 
\end{definition}

Note that in each case, the optimal committee sizes are $j=(a+h_j), k = (a+h_j)$ respectively, and could be different sizes. The minimizing values consider all possible committee sizes by iterating over the possible values of the size of the honest committee members $h_j,h_k \in [h]$. For each committee size, the optimizer also finds the minimum value of $s \in [0,1]$. In \Cref{examples:h2n5}, we see that the minimizer of $g$ transitions from designated to symmetric. 
\Cref{lemma:desig-small,lemma:sym-big} show that this is the case generally, so long as $a > 1$. In particular, over the interval $C\in(1,\infty)$, the minimizer of $g$ is always designated as $C \to  1^+$ and always symmetric when $C\to \infty$. First, however, we need to handle one special case of $a=1$.

\begin{lemma}[For $a=1$, designated is always optimal]\label{lem:a1}
    For all $C>1, h \geq 1$, if $a=1$ then the optimal designated equilibrium is always better than the optimal symmetric: $D_j^* < S_k^*$.
\end{lemma}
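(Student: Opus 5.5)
The plan is to solve both optimization problems in \Cref{def:minimizing-losses-g} in closed form when $a=1$, and then compare the resulting values. The symmetric side collapses to an integer-valued quantity, and the designated side admits an explicit feasible point whose value is strictly below that integer.

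\textbf{Symmetric side.} With $a=1$, the branching constraint for $S_k^*$ reads
\begin{align*}
(h_k+1)s + C(1-s)^{h_k+1} = C(1-s)^{h_k}.
\end{align*}
Subtracting $C(1-s)^{h_k+1}$ from both sides leaves $(h_k+1)s = Cs(1-s)^{h_k}$.
\begin{itemize}
\item A symmetric committee equilibrium requires $s\in(0,1)$ (\Cref{def:symmetric}), so we may divide by $s$ and obtain $C(1-s)^{h_k} = h_k+1$.
\item The objective equals the right branch $C(1-s)^{h_k}$ at any feasible point, so its value is exactly $h_k+1$.
\item Since $h_k\in[h]$ forces $h_k\ge 1$, every feasible symmetric point has value at least $2$.
\end{itemize}
Hence $S_k^*\ge 2$, with $S_k^*=+\infty$ if the program is infeasible, which happens when $C\le 2$.

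\textbf{Designated side.} Take $h_j=1$, which is allowed since $h\ge1$. The constraint becomes $1+s = C(1-s)$, whose solution is
\begin{align*}
s=\frac{C-1}{C+1}.
\end{align*}
This lies in $(0,1)$ precisely because $C>1$. The committee has size $h_j+a-1=1\le n-1$, because $n=h+1\ge 2$, so this is a valid designated uniform committee equilibrium. Its value is
\begin{align*}
1+s=\frac{2C}{C+1}<2 .
\end{align*}
It follows that $D_j^*\le \frac{2C}{C+1}<2\le S_k^*$, which proves the claim.

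\textbf{Main obstacle.} The only delicate step is justifying the cancellation of $s$ on the symmetric side. The endpoint $s=0$ is excluded by the requirement $s\in(0,1)$ in \Cref{def:symmetric}. Even without that requirement, $s=0$ would give value $C(1-0)^{h_k}=C$, and $C>\frac{2C}{C+1}$ whenever $C>1$, so the strict inequality survives. The remaining care is checking that the chosen designated point satisfies the size and range conditions of \Cref{def:designated-uniform}, which was done above.
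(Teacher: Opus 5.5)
Your proof is correct and follows the paper's route. Like the paper, you reduce the $a=1$ symmetric constraint to $C(1-s)^{h_k}=h_k+1$, so $S_k^*\ge 2$ unless $s=0$ (value $C$), and you compare this with the explicit designated point $h_j=1$, $s=\frac{C-1}{C+1}$, of value $\frac{2C}{C+1}<\min\{C,2\}$. Using $h_j=1$ only as a feasible point for an upper bound spares you the paper's unproven claim that $h_j=1$ is the minimizing committee. Since the definition of $S_k^*$ optimizes over $s\in[0,1]$, your $s=0$ case is genuinely needed: it makes $S_k^*=C$ rather than $+\infty$ when $C\le 2$.
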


Proof in \Cref{app:a1-proof}. For the remainder of this paper, we will restrict attention to the $a\geq2$ case. First, we show that for small values of $C$, the minimizer is designated.

\begin{lemma}[Minimizer of $g$ is designated for $C < 1+1/a$]\label{lemma:desig-small}
    For very small penalty values $C < 1+1/a$, the minimizing vector $\mbs$ is a designated equilibrium. 
\end{lemma}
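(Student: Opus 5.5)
The plan is to combine \Cref{lemma:shape} with \Cref{lemma:minimizer-g-equality}. By \Cref{lemma:shape}, the minimizer of $g$ is either designated or symmetric. It therefore suffices to show two things when $1 < C < 1+1/a$ (and $a\ge 2$): no genuinely symmetric vector with $s\in(0,1)$ can satisfy the branch-equality constraint of \eqref{prog:g-constrained}, while some designated vector does satisfy it and achieves value strictly below $C$. The only remaining candidate is then the trivial vector $\mbs=\mb{0}$ with value $C$, and this is beaten by the designated vector.

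\textbf{Symmetric step.} Take a symmetric committee of size $k=h_k+a\ge a+1$ with common probability $s\in(0,1)$, and write $S:=C(1-s)^{h_k}$ for its value. The equality constraint from \Cref{def:minimizing-losses-g} rearranges to
\begin{align*}
    ks = C(1-s)^{h_k}\bigl(1-(1-s)^a\bigr) = S\bigl(1-(1-s)^a\bigr).
\end{align*}
Bernoulli's inequality gives $1-(1-s)^a\le as$. Dividing by $s>0$ then yields $k\le aS$, so $S\ge k/a\ge (a+1)/a = 1+1/a$. On the other hand, $S=C(1-s)^{h_k}\le C<1+1/a$, which is a contradiction. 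Hence no symmetric committee equilibrium with $s\in(0,1)$ lies on the equality surface, and by \Cref{lemma:minimizer-g-equality} none can be the minimizer. The degenerate point $s=0$ gives value exactly $C$.

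\textbf{Designated step.} Exhibit an explicit designated vector: the leader plus a committee of size $a$ (so $h_j=1$). The equality $1+as=C(1-s)$ is solved by $s=(C-1)/(C+a)\in(0,1)$ whenever $C>1$. The resulting value is
\begin{align*}
    D = 1+\frac{a(C-1)}{C+a} = C(1-s) < C.
\end{align*}
Therefore $D_j^*\le D<C\le$ any value attainable by a symmetric-shape vector, and the minimizer must be designated.

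\textbf{Main obstacle.} The delicate part is the bookkeeping around degenerate cases rather than the inequality itself. I need to confirm that \Cref{lemma:shape}'s dichotomy, together with \Cref{lemma:minimizer-g-equality}, really restricts attention to vectors on the equality surface. I also need to check that the $s=0$ endpoint (and $C\le 1$, if it is meant to be included) is handled. For $C\le1$ one can presumably note directly that the value $\min\{1,C\}$ is attained and argue separately, but I expect the intended range to be $C\in(1,1+1/a)$, consistent with the discussion preceding \Cref{lem:a1}.
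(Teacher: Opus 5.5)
Your proposal is correct and follows the paper's proof. Both substitute a symmetric vector into the branch-equality of \Cref{lemma:minimizer-g-equality}, use $1-(1-s)^a\le as$ to force $C\ge 1+1/a$, and conclude via \Cref{lemma:shape}; the paper writes out only the full committee $k=n$, getting $C>n/a$, while you handle general $k\ge a+1$. Your explicit designated witness, with value $C(a+1)/(C+a)<C$, is bookkeeping the paper skips, and it also settles the one case you leave implicit: symmetric committees with $k\le a$, where $g\ge C\prod_{i=a+1}^n(1-s_i)=C$.
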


Proof in \Cref{app:desig-small-proof}. On the other side of the spectrum, as $C$ gets arbitrarily large, we actually prefer a symmetric equilibrium in the limit.

\begin{lemma}[Minimizer of $g$ is symmetric for $C \to \infty$]\label{lemma:sym-big}
    For large penalty values $C \to \infty$, the minimizing vector $\mbs$ is a symmetric equilibrium. 
\end{lemma}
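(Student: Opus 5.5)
The plan is to compare the two optimal values $D_j^*$ and $S_k^*$ from \Cref{def:minimizing-losses-g} asymptotically in $C$. By \Cref{lemma:shape}, the minimizer of $g$ is one of these two shapes. So it suffices to show $S_k^* < D_j^*$ for all sufficiently large $C$, assuming $a \geq 2$ as in the surrounding text. I would proceed in three steps.

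First, I would find the limiting value of each candidate for a fixed number $h'$ of honest committee members, and use it to reduce to $h' = 1$.

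For the designated shape, the constraint $1+(h'+a-1)s = C(1-s)^{h'}$ forces $1-s \to 0$ as $C\to\infty$, because its left side stays bounded by $h'+a$. Hence the loss tends to $h'+a$.

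For the symmetric shape, write $e = 1-s$. The constraint reads
$$(h'+a)(1-e) + Ce^{h'+a} = Ce^{h'},$$
and the loss equals $Ce^{h'}$. Again $e\to 0$, so the loss also tends to $h'+a$.

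In both cases every $h' \ge 2$ has limiting loss at least $a+2$, while $h'=1$ has limit $a+1$. Since $h$ is fixed and only finitely many $h'$ are involved, for large $C$ both $D_j^*$ and $S_k^*$ are attained at $h'=1$. I also need $h'=1$ to give a feasible $s\in[0,1]$ for large $C$. This holds by the intermediate value theorem: at $s=0$ the right-hand side ($C$) exceeds the left, and at $s=1$ the inequality reverses.

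Second, I would expand both $h'=1$ candidates to first order in $1/C$.

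For the designated shape, the constraint is $1+as = C(1-s)$, which is linear in $s$. It gives $1-s = (1+a)/(C+a)$, so
$$D^* = 1 + a - \frac{a(1+a)}{C+a} = (1+a) - \frac{a(1+a)}{C} + O(C^{-2}).$$

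For the symmetric shape, the constraint is $(1+a)(1-e) + Ce^{1+a} = Ce$. The first-order solution is $e = (1+a)/C + O(C^{-2})$. Here I use $a\ge 2$, which gives $Ce^{1+a} = O(C^{-a}) = O(C^{-2})$. Then
$$S^* = Ce = (1+a)(1-e) + Ce^{1+a} = (1+a) - \frac{(1+a)^2}{C} + O(C^{-2}).$$

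Third, I would subtract the two expansions:
$$D^* - S^* = \frac{(1+a)^2 - a(1+a)}{C} + O(C^{-2}) = \frac{1+a}{C} + O(C^{-2}).$$
This is positive for all sufficiently large $C$, so $S_k^* < D_j^*$ and the minimizer of $g$ is symmetric.

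The main obstacle is making the error terms rigorous enough that the $O(C^{-2})$ remainders cannot overturn the $(1+a)/C$ gap. For the symmetric shape, I would do this by bounding $e$ from both sides: the constraint gives $(1+a)(1-e)/C \le e \le (1+a)/C + e^{1+a}$. A crude bound $e \le 2(1+a)/C$ then controls $Ce^{1+a}$ explicitly. The same care is needed in the first step, to make sure that no $h'\ge2$ candidate can beat $h'=1$ once the $1/C$ corrections are accounted for. Since the limits of those candidates differ by at least $1$, this only requires taking $C$ large.
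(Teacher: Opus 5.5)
Your argument is correct. It shares the paper's skeleton: reduce to one honest committee member, solve the designated constraint exactly as $1-s=(1+a)/(C+a)$, and compare with the symmetric root. Where you differ is in how you settle the decisive $h'=1$ comparison.

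The paper divides the symmetric constraint by $1-e$ to get $Ce(1+e+\cdots+e^{a-1})=1+a$, whose left side is strictly increasing in $e$. It then evaluates this at the designated value $e'=(1+a)/(C+a)$. Since $Ce'(1+e')>1+a$ iff $C>a^2$, it follows that $e<e'$, hence $S_1^*=Ce<Ce'=D_1^*$ exactly. This gives an explicit threshold with no remainder terms.

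Your $1/C$ expansion instead exhibits the size of the gap, $(1+a)/C+O(C^{-2})$. It also makes visible why $a\ge 2$ is needed: for $a=1$ the term $Ce^{1+a}$ is itself of order $1/C$ and reverses the sign, consistent with \Cref{lem:a1}. The error-control obstacle you anticipate disappears if you work with the divided form $S_1^*=(1+a)/(1+e+\cdots+e^{a-1})$.

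On the other hand, your Step 1 is more careful than the paper. The paper merely asserts that both optimizers tend to a $0/1$ vector with $a+1$ ones before fixing $h_j=h_k=1$. Your comparison of the limits $h'+a$ is what actually justifies that $D^*$ is attained at $h_j=1$. On the symmetric side you only need $S^*\le S_1^*$, so that half of the reduction is unnecessary.

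One small slip: the intermediate-value sentence does not work for the symmetric constraint. There $s=0$ is always a trivial root, since both sides equal $C$. Existence, uniqueness, and $e\to 0$ for the relevant root should instead come from the divided equation $Ce^{h'}(1+e+\cdots+e^{a-1})=h'+a$, whose left side increases from $0$ to $Ca>h'+a$ for large $C$.
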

Proof in \Cref{app:sym-big-proof}. Given that the minimizer changes shape from designated to symmetric on the endpoints of $C\in (1,\infty)$, the smallest value of $C$ where the transition occurs is well defined.  

\begin{definition}[Transition value, $C_t$]\label{def:transition}
    For a given $(h,a,n)$, let $C_t$ denote the smallest $C$ such that the optimal shape of the minimizer of $g$ is symmetric instead of designated. More formally, 
    \begin{align*}
        C_t := \inf_{C\in \R_{>1}} \big\{C: S_j^* < D_k^*\big\}.
    \end{align*}
\end{definition}

\begin{remark}[$C_t$ reduces to root finding on high-degree polynomials]\label{remark:no-closed-form}
    Notice that to find $C_t$, for each candidate $C$ we need to solve for the roots of the $2h$ polynomial equations defined by the constraints in \Cref{def:minimizing-losses-g}. In other words, there is no general algebraic solution to $C_t$ for all $n,h$ pairs.
\end{remark}

\Cref{remark:no-closed-form} tells us that we can't hope for a general expression of $C_t$ for all discrete values of $h,n$. However, the following lemma reduces the analysis to the continuous setting, allowing us to arrive at a more tractable implicit equation.

\begin{figure}
	\centering
	\includegraphics[width=0.65\linewidth]{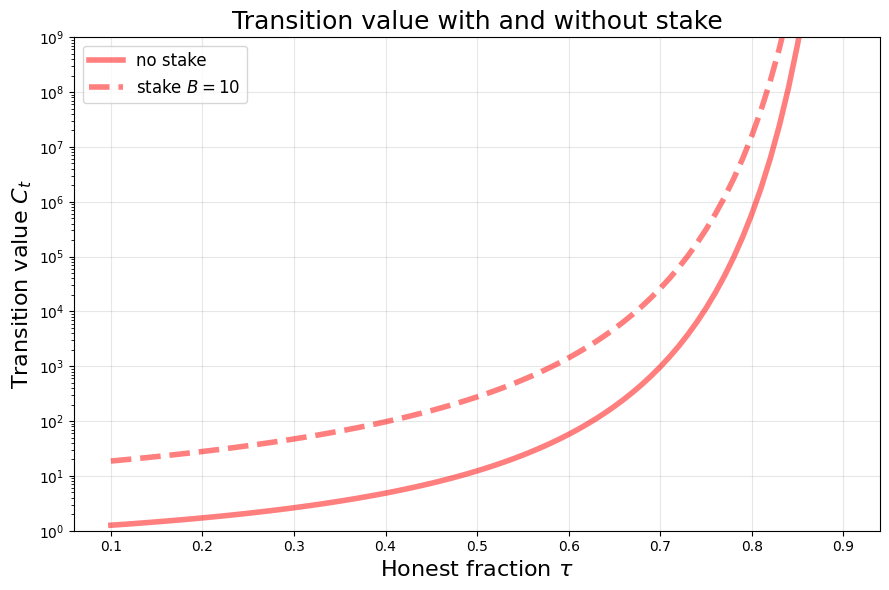}
	\caption{This plot shows $C_t$ (on log scale) as a function of the continuous honest ratio $\tau = h/n \in (0,1)$ when $h,n\to \infty$ as calculated in \Cref{lemma:limit-ct} (solid line). The dashed lines denote the corresponding values with stake $B=10$ (\Cref{subsec:negative-payments}) as calculated in \Cref{cor:limit-ct}.}
	\label{fig:ct-limit}
\end{figure}

\begin{lemma}[Limit behavior of $C_t$]\label{lemma:limit-ct}
As $h,n \to \infty$ with $\tau = h/n \in (0,1)$ as the continuous ``honest proportion,'' the value of $C_t$ at which the optimal symmetric equilibrium achieves a lower loss than the optimal designated equilibrium is:
\begin{align*}
    \lim_{n \to \infty}C_t(\tau) = e^x, \text{ where } 1+x = e^{(1-\tau)x}.
\end{align*}
\end{lemma}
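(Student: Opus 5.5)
The plan is to pass to the continuum limit of the two optimization problems in \Cref{def:minimizing-losses-g}, reduce each to a one-parameter equation, and then locate the value of $C$ at which the two optimal losses coincide. Write $a=(1-\tau)n$ and $h=\tau n$. For fixed $C$ both optimal mixing probabilities must vanish as $n\to\infty$, since otherwise the payment terms $ks$ would blow up. So I reparametrize by the expected number of honest committee deliveries, $y:=h_j s$ in the designated case and $z:=h_k s$ in the symmetric case. I then use $(1-s)^{m}\to e^{-ms}$ whenever $ms$ stays bounded. Making this rigorous requires a routine uniform-convergence argument on compact ranges of $y,z$, together with a check that the optima do not escape to the boundary.

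\emph{Designated branch.} The constraint in \Cref{def:minimizing-losses-g} becomes $1+\frac{a+h_j}{h_j}\,y = Ce^{-y}$, and the loss equals the left-hand side. Fix a target loss $L=Ce^{-y}$. Increasing $h_j$ lowers the coefficient $\frac{a+h_j}{h_j}$, which allows a smaller loss. Hence the optimum takes the full honest committee $h_j=h$, with coefficient $r:=1/\tau$. This gives
\[
D^* = 1+ry = Ce^{-y}.
\]

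\emph{Symmetric branch.} With $k=a+h_k$ and $\rho:=k/h_k\ge r$, the loss is $Ce^{-z}$ and the constraint reads $\rho z = Ce^{-z}\bigl(1-e^{-(\rho-1)z}\bigr)$. Minimizing the loss means maximizing $z$. I expect to show that the largest feasible $z$ is attained at $\rho=r$, i.e.\ $h_k=h$. The argument would study the sign of $\partial_\rho$ of $C e^{-z}(1-e^{-(\rho-1)z})-\rho z$: the left side is increasing in $z$ near the root, and the derivative in $\rho$ has the appropriate sign. This monotonicity claim is the step I am least sure of. If it fails for some $\rho$, I would instead keep $\rho$ as a free parameter and check that the equal-loss calculation below is unaffected at the crossing point.

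\emph{Locating the crossing.} Both losses have the form $Ce^{-(\cdot)}$, so symmetric beats designated iff $z>y$, and the transition occurs at $y=z=:w$. Equating the two constraints gives $C e^{-w}=1+rw$ and
\[
rw=(1+rw)\bigl(1-e^{-(r-1)w}\bigr),
\]
which simplifies to $1+rw=e^{(r-1)w}$. Substituting $x:=rw$ and using $(r-1)/r = 1-\tau$ yields $1+x=e^{(1-\tau)x}$. Then
\[
C = (1+x)e^{w}=e^{(1-\tau)x}e^{\tau x}=e^{x},
\]
as claimed.

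\emph{Closing the argument.} It remains to confirm that this crossing is the infimum in \Cref{def:transition}, i.e.\ that $z-y$ changes sign exactly once in $C$. I would show this by implicit differentiation of the two limiting constraints in $C$. This is consistent with \Cref{lemma:desig-small,lemma:sym-big}, which fix the sign of $z-y$ at the two ends. Finally, the convergence of the discrete $C_t$ to this limiting root follows from continuity of the limiting equations and uniform convergence, and the positive root $x$ of $1+x=e^{(1-\tau)x}$ is unique because $e^{(1-\tau)x}-1-x$ is convex in $x$ and vanishes at $x=0$.
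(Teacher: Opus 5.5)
Your designated reduction to $h_j=h$ and your crossing computation at $\rho=r$ are correct and match the paper. The gap is the symmetric-branch monotonicity claim: it is false, and it fails exactly in the range of $C$ you need. Write $F(z,\rho)=C(e^{-z}-e^{-\rho z})-\rho z$. The function $F(\cdot,\rho)$ is positive before its unique positive root $z^*(\rho)$ (when that root exists) and negative after it. Hence $\partial z^*/\partial\rho$ has the sign of $\partial_\rho F=z\bigl(Ce^{-\rho z}-1\bigr)$. For the full committee, the constraint gives $rz^*(r)\ge \ln C$ iff $e^{(1-\tau)rz^*(r)}\ge 1+rz^*(r)$ iff $C\ge e^x$. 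So for every $C<e^x$, some strictly smaller committee gives a strictly lower symmetric loss than $h_k=h$. For example, with $\tau=1/2$ and $C=8$ one gets $z^*(2)\approx 0.949$ but $z^*(2.2)\approx 0.955$. Your $\rho=r$ crossing therefore only shows that symmetric wins for $C>e^x$, i.e.\ $C_t\le e^x$ in the limit. The matching bound needs that \emph{no} committee size produces a crossing below $e^x$. Your fallback inspects only the crossing point itself, and your single-sign-change argument for $z-y$ concerns only $\rho=r$, so neither supplies this.

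A clean fix is to compare at the designated value instead of optimizing $\rho$ first. A symmetric committee with ratio $\rho$ beats designated iff $F(y,\rho)>0$. Substituting $Ce^{-y}=1+ry$, this reads
\[
G(\rho):=(1+ry)\bigl(1-e^{-(\rho-1)y}\bigr)-\rho y>0 .
\]
$G$ is concave in $\rho$, with unconstrained maximizer $\rho_0=1+\ln(1+ry)/y$ and maximum value $(r-1)y-\ln(1+ry)$. If $ry\le x$, then $\rho_0\ge r$ and $\max_{\rho\ge r}G\le 0$. If $ry>x$, then $\rho_0<r$ and $G(r)>0$. Since $y$ is increasing in $C$ and $ry=x$ exactly at $C=e^x$, this handles all committee sizes at once and gives the single crossing for free. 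This is the role of the step in the paper's proof where $\beta$ is kept free and $C$ is minimized over $t_s$ at a fixed target loss. There, the stationarity condition $e^{-(1-\tau)t_s}=1/(1+t_d)$ is exactly $G'(\rho)=0$, and combining it with equal losses forces $\rho=r$ (i.e.\ $\beta=\tau$) at the transition.
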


Proof in \Cref{app:limit-ct-proof}. \Cref{fig:ct-limit} shows the analytic bound calculated in \Cref{lemma:limit-ct} as $h,n \to \infty$ as a function of $\tau$ in red. In this continuous setting, we see that $C_t$ is monotone increasing and super-exponential (not the log scale of the y axis) in $\tau$. The discrete setting is not necessarily monotone, but converges to the limit as $h,n$ get large; see \Cref{app:fig-ct-limit-discrete} for a figure with numerical values in the discrete case.

\begin{remark}[$C_t$ growth is super-exponential in $\tau$]\label{rem:super-exp}
    As shown in \Cref{fig:ct-limit}, the growth of $C_t$ is super-linear even with the log-scaled y axis. For a given $\tau$, if we are \textit{below} the red curve, then we know the designated equilibrium is optimal. As such, the super-exponential scaling in $\tau$ is very strong. Intuitively, it tells us that the marginal increase in the honest fraction of provers leads to a much larger corresponding set of $C$ values where designated is exactly the optimal mechanism.
\end{remark}

\begin{remark}[Designated is often optimal]\label{rem:des-often-optimal}
    \Cref{fig:ct-limit} shows that for any $\tau$, if $C < C_t$ (below the red line), the designated equilibrium is optimal. The super-exponentiality in $\tau$ (\Cref{rem:super-exp}) means that for larger regimes of $C$, we have the true optimal mechanism. For example, with no stake and assuming the $\tau=2/3$ honesty threshold from consensus mechanisms, $\forall C < 140$, designated is optimal. Stake, as introduced in \Cref{subsec:negative-payments} below, significantly increases this regime; for the same $\tau=2/3$ with stake $B=10$, designated is optimal $\forall C < 3000$.  
\end{remark}

\begin{figure}
    \centering
    \includegraphics[width=0.65\linewidth]{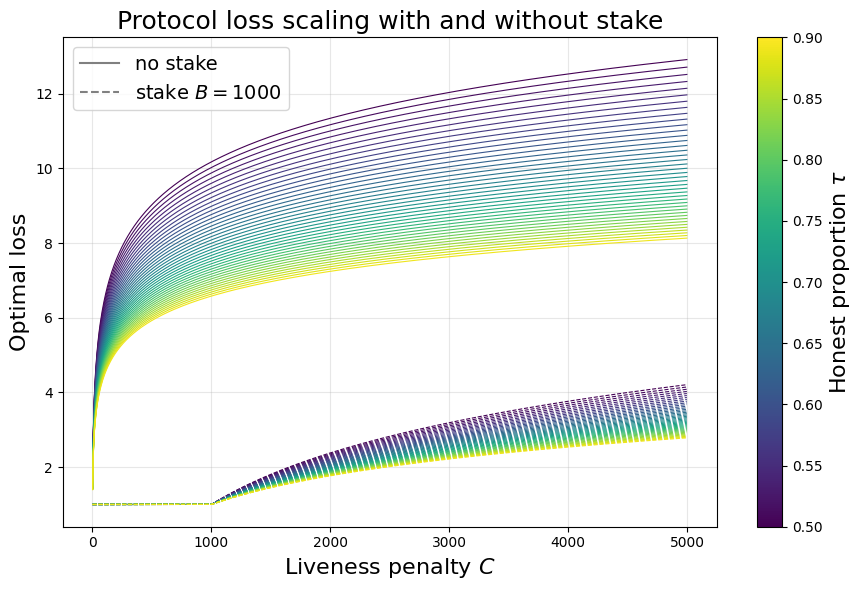}
    \caption{The optimal protocol loss as a function of $C$ for various values of $\tau \in [0.5,0.9]$. The solid lines are without stake. \Cref{lem:designated-loss-asymptotics} shows that this loss scales as $\mathcal{O}\left(\log (C)/\tau\right)$. The dashed lines include stake (\Cref{subsec:negative-payments}) of $B=1000$. For $C \leq 1000$, the cost is constant. For $C> 1000$, we again enter a logarithmic scaling regime; \Cref{lem:staking-asymptotics} details the optimal loss as a function of the asymptotic regimes of $B$.}
    \label{fig:protocol-cost}
\end{figure}

\subsection{Protocol loss scaling and negative payments}\label{subsec:negative-payments}
The value of $C_t$ is just the transition point of when the best symmetric equilibrium is better than the best designated equilibrium (e.g., when our lower bound is tight). We now consider the scaling of the optimal protocol loss as a function of $C$. In words, we want to know how much the protocol should expect to pay in the worst-case, given a liveness penalty of magnitude $C$. We continue in the continuous setting, where the fraction of honest provers is $\tau \in (0,1)$. 

\begin{lemma}[Asymptotic designated loss]\label{lem:designated-loss-asymptotics}
    The loss of the optimal designated equilibrium scales as $\mco\left(\log (C)/\tau\right)$. 
\end{lemma}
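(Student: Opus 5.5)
The plan is to bound the designated loss directly from the characterization in \Cref{def:minimizing-losses-g} together with \Cref{lemma:designated-implementable}. For a designated uniform committee with $h_j$ honest committee members and mixing probability $s$, the loss is $1 + (h_j+a-1)s$, where $s$ solves $1 + (h_j+a-1)s = C(1-s)^{h_j}$. An upper bound on the loss of the optimal designated equilibrium therefore follows from exhibiting any single feasible choice of $(h_j, s)$.

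I would take the full committee, $h_j = h$, so the committee size is $n-1$. Define $\phi(s) := C(1-s)^h - 1 - (n-1)s$. Then $\phi(0) = C-1 > 0$ and $\phi(1) = -n < 0$. Since $\phi$ is strictly decreasing, there is a unique root $s^*$, and the resulting loss is $L = 1 + (n-1)s^*$.

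To bound $s^*$, I would use monotonicity. For any $\hat s$ with $\phi(\hat s) \le 0$ we get $s^* \le \hat s$, and hence $L \le 1 + (n-1)\hat s$. Using $(1-s)^h \le e^{-sh}$, it suffices to make $Ce^{-\hat s h} \le 1$, which holds for $\hat s = \min\{1, \ln C / h\}$; then $1 + (n-1)\hat s \ge 1 \ge Ce^{-\hat s h}$. This gives
\begin{align*}
    L \le 1 + (n-1)\frac{\ln C}{h} \le 1 + \frac{\ln C}{\tau},
\end{align*}
where $\tau = h/n$. The bound is $\mco(\log(C)/\tau)$, with the additive constant absorbed for $C$ bounded away from $1$ (for $C \le e$ the loss is trivially at most $1 + n/h \cdot 1$, still of the same order). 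If $\ln C / h > 1$, then $\hat s = 1$ and $L \le n \le \ln C/\tau$ anyway, so the bound holds in every case.

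I would add a short matching lower bound to show the scaling is tight. By \Cref{thm:payment-rule-reduction} and \Cref{lemma:lb}, the optimal designated loss is within $1$ of $\opt2$. At the optimum, every feasible designated pair satisfies $C(1-s)^{h_j} = L$, so $s \ge 1 - (L/C)^{1/h_j}$. Combining this with $L \ge 1 + (a + h_j - 1)s$ and the fact that $a/h_j \ge (1-\tau)/\tau$ forces $L = \Omega(\log(C/L)\cdot a/h) $, which yields $\Theta(\log C)$ growth in $C$ at fixed $\tau$.

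I expect this last inversion to be the main obstacle: handling the implicit relation between $L$ and $C$ (a Lambert-type equation) uniformly over $h_j$. I would first try the case split $s \ge 1/2$ versus $s < 1/2$. For $s < 1/2$, the bound $-\ln(1-s) \le 2s$ gives $s \ge \ln(C/L)/(2h_j)$. For $s \ge 1/2$, the loss is already at least $a/2$. Since the statement only asks for the $\mco$ upper bound, though, the constructive feasibility argument above is the core of the proof.
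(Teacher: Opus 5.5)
Your proof is correct, and it takes a different route from the paper. The paper works only in the continuous limit. It sets $s=t_d/n$, replaces $(1-t_d/n)^{\tau n}$ by $e^{-\tau t_d}$, and borrows from the proof of \Cref{lemma:limit-ct} the fact that the full honest committee ($\beta=\tau$) is optimal. It then solves $D_\tau e^{\tau D_\tau}=Ce^{\tau}$ exactly as $\tau D_\tau=W(C\tau e^{\tau})$ and reads the scaling off the expansion $W(x)=\ln x-\ln\ln x+\cdots$.

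You instead stay at finite $(n,h,C)$ and simply fix the full committee; no optimization over $h_j$ is needed for an upper bound. You then certify a feasible point using the monotonicity of $\phi$ and the one-sided inequality $(1-s)^h\le e^{-sh}$, which points in exactly the direction you need. This gives an explicit, non-asymptotic bound $L\le\min\{n,\,1+\ln C/\tau\}$ with no approximation step. It is consistent with the paper's limit, since $\frac{1}{\tau}W(C\tau e^{\tau})\le 1+\frac{\ln C+\ln\tau}{\tau}$. What the paper's route gives in return is the sharp leading constant: the loss is $\ln C/\tau$ up to lower-order terms, so the bound is tight in the limit. Your upper-bound argument alone does not show that.

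On your optional tightness sketch: it yields $\Theta(\log C)$ growth only when $\ln C\lesssim h$. For fixed $n$ and $C\to\infty$, the designated loss is capped at $n$, and the realized case is your $s\ge 1/2$ branch, which gives $\Omega(a)$ rather than $\Omega(\log C)$. The paper's $n\to\infty$ limit silently places you in the other regime. Also, a lower bound does not need \Cref{thm:payment-rule-reduction} at all; $\ell(p,\mbs)\ge g(\mbs)$ from \Cref{lemma:lb}, applied to designated vectors, suffices (and the theorem as stated compares to $\opt1$, not $\opt2$). None of this affects the $\mco$ upper bound, which your feasibility argument fully establishes.
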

Proof in \Cref{app:designated-loss-asymptotics-proof}.

\begin{remark}[Logarithmic scaling of loss]
    \Cref{lem:designated-loss-asymptotics} is a strong result. Even with an adversary that controls any fixed proportion of the provers, we still have asymptotically logarithmic scaling in the magnitude of the liveness penalty $C$. \Cref{fig:protocol-cost} demonstrates this numerically, where the solid lines represent the worst-case protocol loss in the unstaked setting (solid lines) as a function of $C$ and for various values of $\tau$ (colorbar). We introduce the staked (dashed lines) setting below. Even when $C$ gets extremely large, the optimal loss is well controlled and only the coefficient of the log scaling changes with $\tau$. Intuitively, the log scaling comes from the fact that each honest prover is playing an independent mixed strategy. The protocol leverages this to minimize the probability of receiving no proofs.
\end{remark}

To this point, every payment rule we have proposed was constrained to positive payments. We now consider the impact of allowing negative payments.\footnote{Recall that the solution to the \eqref{prog:linear-system} could result in negative values, which we disallowed. This section relaxes that constraint and again performs a lower bound analysis.} Using the nomenclature common in blockchain protocol design, we model this as each player posting a ``stake'' (collateral), which may be ``slashed'' (seized) by the protocol. Of course, the possibility of being slashed is factored into the incentive compatibility of each prover through their equilibrium constraints. That is, any risk of negative payments must be offset by a corresponding positive expected utility from participation.

\begin{remark}[Negative payments as insurance]
    Slashing in our model serves as a form of ``insurance'' for the protocol, which is similar to \texttt{STAKESURE} \cite{deb2024stakesure}. That is, the stake that is slashed by the protocol serves not only as a negative payment for the attacker, but also as a \textit{positive} payment to the protocol to compensate for the damage caused. In our running example of zk-EVM proofs, the proposer of a slot is protected by the collateral posted by the provers so that if their block is not proved (and thus they miss out on the rewards), they are refunded through the slashed collateral. This is a subtle but important distinction from slashing in traditional Proof-of-Stake systems \cite{buterin2020combining}), where the seized collateral is ``burned,'' because there is not an obvious single victim of an attack on the consensus mechanism that could be compensated.
\end{remark}

Note that if the negative payments can be made arbitrarily large, then there is a trivial payment rule that ensures that the protocol achieves the minimal possible cost of $1$. 

\begin{example}[Unbounded stake optimal]
    Require each prover to post a collateral of $C-1$. Designate prover 1 as the ``elected prover'' and pay them according to the following payment rule:
    \begin{align*}
        p_1(\mb{d}) &= 
            \begin{cases}
                -(C-1)& \text{if } d_1 = 0\\
                1 & \text{if } d_1 = 1.
            \end{cases}
    \end{align*}
    Then both outcomes (the prover delivering or not) have the same protocol cost of $1$. 
\end{example}

Obviously, this would be the ideal situation for the protocol designer, but asking each prover to post this large a stake is infeasible in most situations, especially if $C$ is large (discussed in \Cref{rem:c-big}). As such, we analyze the setting where the maximum stake posted by each prover is bounded. We model the aggregate amount of stake across provers that the protocol can elicit as $B$ (for bond), where we are measuring according to the unit, common-value cost of generating a single proof. For example, if $B=100$ and $n=100$, we require each prover to post a collateral of the cost of generating $1$ proof to participate in the system. 
We argue in \Cref{rem:proving-cost-small} that the unit cost of generating a proof should be small, so asking for a potentially large stake is feasible. With $B$ defined, we modify our original lower bound (\Cref{lemma:lb}) to now include the slashed collateral as a refund that \textit{reduces} the protocol cost in the case of an attack.

\begin{corollary}[Staked lower bound function $g_B(\mbs)$]\label{corollary:lb-with-b}
    Extending the lower bound of \Cref{lemma:lb}, consider the protocol parametrized with $(h,a,n,C)$ and with aggregate stake of $B$. Then define $g_B(\mbs)$ as,
    \begin{align*}
        g_B(\mbs) := \max\left\{\sum_{i\in [n]} s_i + C \prod_{i\in[n]} (1-s_i), C \prod_{i=a+1}^n (1-s_i) - B\right\}.
    \end{align*}
    Then for the solution to the following program,
    \begin{align}
        \opt{4} &:= \min_{\mbs \in [0,1]^n} g_B(\mbs) \tag{PROG4}\label{prog:g-full-with-b}\\
        \text{s.t., }& 1\geq s_1 \geq s_2 \geq \ldots \geq s_n \ge 0 \nonumber
    \end{align}
    we have $\opt4 \leq \opt1$.
\end{corollary}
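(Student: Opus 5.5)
The plan is to mirror the proof of \Cref{lemma:lb}, restricting the adversary to two strategies and adding one accounting change for stake. Fix any incentive compatible pair $(p,\mbs)$ where $p$ may now take values as low as $-B/n$ per prover. Re-index so that $\mbs$ is decreasing. We exhibit two particular adversarial choices of $(A,\mb{d}_A)$. By \Cref{def:loss}, the loss $\ell(p,\mbs)$ is at least the expected cost under each of them, so it is at least the larger of the two. That larger value is $g_B(\mbs)$, and minimizing over the feasible ordered $\mbs$ then gives $\opt4\le\opt1$.

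\textbf{Left branch (adversary mimics honesty).} Let the corrupted provers sample $d_i\sim s_i$ independently, exactly as prescribed. The realized delivery vector then has the equilibrium distribution, and the expected cost is $\E[\sum_i p_i(\mb{d})]+C\prod_i(1-s_i)$. Incentive compatibility is used as in \Cref{lemma:lb}. A prover with $s_i>0$ is weakly willing to deliver, and a prover with $s_i<1$ is willing to abstain, so every prover's expected utility is at least what abstaining yields. The nontrivial part is to bound $\E[p_i(\mb{d})]$ from below by $s_i$.

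This is where stake requires care, and I expect it to be the main obstacle. With nonnegative payments, each prover's expected utility is at least the abstention payoff, which is at least $0$, so $\E[p_i]\ge s_i$. With negative payments, the abstention payoff could be as low as $-B/n$. The intended reading is that participation must be individually rational: provers voluntarily post their stake, so their expected utility must be nonnegative. The paper's text hints at this ("any risk of negative payments must be offset by a corresponding positive expected utility"). I would make this participation constraint explicit, which yields $\E[p_i]\ge s_i$ and hence the left branch $\sum_i s_i+C\prod_i(1-s_i)$.

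\textbf{Right branch (adversary withholds).} Corrupt the $a$ provers with the largest $s_i$, namely indices $1,\dots,a$, and have all of them set $d_i=0$. No proof is produced exactly when every honest prover $i>a$ also fails to deliver. This happens with probability $\prod_{i=a+1}^n(1-s_i)$ and incurs penalty $C$ times that probability. The payment term can be negative, but never below $-\sum_i B/n=-B$. So the total expected cost is at least $C\prod_{i=a+1}^n(1-s_i)-B$. This is weaker than the nonnegative-payment bound by exactly the slashed refund, which matches the right branch of $g_B$.

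Combining the two branches gives $\ell(p,\mbs)\ge g_B(\mbs)$ for every incentive compatible $(p,\mbs)$. The ordering constraint in \eqref{prog:g-full-with-b} is without loss of generality because of the re-indexing. Taking the minimum over all such pairs yields $\opt4\le\opt1$.
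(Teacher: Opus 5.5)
Your proof is correct and follows the paper's argument. Both restrict the adversary to two strategies: mimicking the equilibrium gives the left branch, unchanged from \Cref{lemma:lb}, and withholding with the $a$ largest $s_i$ gives the right branch, where total payments can fall no lower than $-B$. Your explicit participation (individual-rationality) constraint for the left branch is a point the paper's proof passes over by calling that branch ``exactly the same'' as in \Cref{lemma:lb}, and the constraint is genuinely needed: with negative payments, charging every prover a uniform $B/n$ fee preserves incentive compatibility while lowering the cost of every outcome by $B$.
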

Proof in \Cref{app:lb-with-b-proof}.

\begin{remark}[Correlation of slashing]
    Part of what makes this slashing mechanism powerful is the fact that we can correlate all players' negative payments with the behavior of the attacker. Recall that we are modeling the equilibrium conditions for the non-corrupted players as non-responsive to the threat of an attacker.\footnote{Further work could consider honest players who modify their behavior (through changing their equilibrium conditions) based on the \textit{threat} of the existence of an attacker, but this is out of scope for the present work.} We still think this is the correct model given the fact that liveness attacks (e.g., on Ethereum's zk-EVM) would likely be specific one-off instances. However, with negative payments, this assumption may be harder for provers to abide by. 
    In particular, they may be less willing to post collateral if they know that it could be slashed under an attack, and despite submitting a proof themselves. Still, there is a precedent for correlated penalties; if Ethereum enters an ``inactivity leak'' period (see \texttt{get\_inactivity\_penalty\_deltas} in \cite{ethereum_phase0_beacon_chain}), then every validator receives 0 rewards for the duration of the attack. As such, we think it is a reasonable assumption that if the stakes are small enough, correlated slashing is practical. Further, the correlation can be weakened by designating $k$ provers as deterministic, and only slashing if all $k$ fail to deliver, while only incurring an additional constant overhead of $k$.
\end{remark}

Given $g_B(\mbs)$ (\Cref{corollary:lb-with-b}) and its structural similarity to the lower bound $g(\mbs)$ (\Cref{lemma:lb}), much of the analysis we have established extends directly to the negative payments regime. See \Cref{app:with-stake} for details. Beyond simply calculating the values of $C$ such that we can achieve the optimal cost, we can also make quantitative claims about the sensitivity of the protocol's worst-case loss to the amount of aggregate stake. The following examples show that stake is much more impactful in a high-$C,\tau$ regime. 

\begin{example}[Moderate staking amounts result in large percentage loss reduction if $\tau,C$ are large.]\label{ex:high-tau-c}
    Let $n=100, a= 33$ ($\tau =2/3$), and $C=10,000$. If $B=500$ (5\% of $C$), then the protocol loss drops 47.5\% compared to the no stake environment. See the left half of the table in \Cref{app:stake-sensitivity-table} for more numerical values. 
\end{example}

\begin{example}[Moderate staking amounts yield low percentage cost reduction if $\tau,C$ are moderate.]\label{ex:moderate-tau-c}
    Let $n=200, a=100$ ($\tau =1/2$), and $C=20$. If $B=1$ (5\% of $C$), then the protocol loss drops by 4.5\% compared to the no stake environment. Contrast this with the 47.5\% drop in \Cref{ex:high-tau-c}. See the right half of the table in \Cref{app:stake-sensitivity-table} for more numerical values. 
\end{example}

\begin{remark}
    \Cref{ex:high-tau-c,ex:moderate-tau-c} show that the benefits from acquiring stake that accounts for $5\%$ of the liveness penalty can vary greatly; in this case, the difference in cost reduction was a full order of magnitude higher when $\tau,C$ are larger. \Cref{fig:protocol-cost} shows the impact of stake $B=1000$ on the optimal loss scaling as dashed lines with various $\tau$ (colorbar). We see that across the spectrum of $\tau$ and $C$ values, stake can meaningfully reduce the expected protocol loss.
\end{remark}

Asymptotically, we also consider how the optimal loss scales as a function of $B,C,\tau$.

\begin{lemma}[Asymptotic loss with staking]\label{lem:staking-asymptotics}
    The loss of the optimal designated equilibrium with staking amount $B$, scales as $\mco\left(\frac{\log C - \log(\log C + B)}{\tau}\right)$. Observe that this implies the following regimes
    \begin{alignat*}{2}
        B &= O(1) &&\implies loss = \mco\left(\log (C) / \tau \right) \\ 
        B &= \Theta(C/\textnormal{polylog } C) &&\implies loss = \mco\left(\log (\log C)/\tau\right)\\ 
        B &= \Theta(C) &&\implies loss = \mco\left(1/\tau\right).
    \end{alignat*}
\end{lemma}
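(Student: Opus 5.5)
The plan is to work directly with the staked analogue of the designated construction. By analogy with \Cref{lemma:designated-implementable}, a designated equilibrium with committee size $k$ and mixing probability $s$ should be implementable in the staked setting with loss equal to the left branch $1+ks$, provided the right branch $C(1-s)^{h_k}-B$ does not exceed it. Here $h_k=k-a$ is the number of honest committee members in the worst case, where the adversary corrupts the designate and $a-1$ committee members. The modified payment rule pays the fixed amount $1+ks$ whenever the designate delivers. If the designate fails, it slashes the aggregate stake $B$ back to the protocol. So the loss of the optimal staked designated equilibrium is at most
\begin{align*}
\min_{k,s}\max\bigl\{1+ks,\; C(1-s)^{k-a+1}-B\bigr\},
\end{align*}
and it suffices to exhibit a good choice of $(k,s)$.

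I would then move to the continuous regime used in \Cref{lem:designated-loss-asymptotics}. Take the full committee $k=n-1$, so that roughly $\tau n$ honest members mix, and write $x = ns$. Then $1+ks\approx 1+x$, while $C(1-s)^{\tau n}\approx Ce^{-\tau x}$. Balancing the branches, in the spirit of \Cref{lemma:minimizer-g-equality}, gives
\begin{align*}
1+x = Ce^{-\tau x}-B.
\end{align*}
Because the target is only an $\mco$ upper bound, I will not solve this equation. Instead I will plug in an explicit $x$ and check that the right branch is dominated. The choice is
\begin{align*}
x^\ast = \frac{1}{\tau}\bigl(\log C - \log(\log C + B)\bigr)^+ + \frac{c}{\tau}
\end{align*}
for a constant $c$. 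Then $Ce^{-\tau x^\ast} \le e^{-c}(\log C + B)$, so $Ce^{-\tau x^\ast}-B \le e^{-c}\log C - (1-e^{-c})B$. The claim to verify is that this is at most $1+x^\ast$. This should follow because $x^\ast \gtrsim \log C/\tau - \log(\log C+B)/\tau$, and the term $e^{-c}\log C$ is absorbed once one splits into the cases $B\ge \log C$ and $B<\log C$. The loss is then $1+x^\ast = \mco\bigl((\log C-\log(\log C+B))/\tau\bigr)$.

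Finally, I would read off the three regimes:
\begin{itemize}
\item If $B=O(1)$, then $\log(\log C+B)=O(\log\log C)$, which gives $\mco(\log C/\tau)$ and recovers \Cref{lem:designated-loss-asymptotics}.
\item If $B=\Theta(C/\mathrm{polylog}\, C)$, the difference of logarithms is $\Theta(\log\log C)$.
\item If $B=\Theta(C)$, the difference is $O(1)$, which gives $\mco(1/\tau)$.
\end{itemize}

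The main obstacle is the absorption step in the second paragraph. When $B$ is small relative to $\log C$, the residual $e^{-c}\log C$ must be bounded by $1+x^\ast$. This needs $\log C - \log\log C$ to dominate a constant fraction of $\log C$, which holds for large $C$ but must be handled uniformly in $\tau$. I would first try splitting on whether $B\ge\log C$. If that fails, I would adjust $x^\ast$ by an additive $\log(1/\tau)/\tau$ term, accepting a slightly weaker constant.
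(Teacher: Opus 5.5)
Your proposal is correct, but it argues differently from the paper. The paper solves the balance equation $1+t_d=Ce^{-\tau t_d}-B$ exactly. It rewrites the equation as $\tau(D_\tau+B)e^{\tau(D_\tau+B)}=C\tau e^{\tau(B+1)}$, so $D_\tau=\frac{1}{\tau}W(C\tau e^{\tau(B+1)})-B$, and then reads the scaling off the expansion $W(x)=\ln x-\ln\ln x+o(\ln\ln x)$. You instead exhibit an explicit $x^\ast$ and check that the right branch is dominated. That gives only an upper bound, but an upper bound is all the $\mco$ statement asks for.

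The paper's route buys a leading-order, two-sided description of the optimal designated loss; for example, without stake it really is about $\ln C/\tau$, so stake genuinely helps. Your route buys rigor, in four ways:
\begin{itemize}
\item Since $1-s\le e^{-s}$ and $1+(n-1)s\le 1+ns$, both of your ``$\approx$'' steps are inequalities in the right direction. So your bound holds for finite $n$ (as long as $x^\ast\le n$), not only in the continuous limit.
\item The $(\cdot)^+$ and the additive $c/\tau$ cleanly handle $B=\Theta(C)$, where the paper's leading term $\ln C-\ln(\ln C+B)$ can be nonpositive.
\item You avoid the paper's simplification of $\ln\ln x=\ln(\ln C+\ln\tau+\tau(B+1))$ to $\ln(\ln C+B)$, which is valid only up to $\tau$-dependent terms.
\item You make explicit the staked designated rule: pay $1+ks$ when the designate delivers, and slash all of $B$ when it does not. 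The paper uses this rule without writing it down.
\end{itemize}

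The step you flag as the main obstacle is in fact easy. Take $c\ge\ln 2$.
\begin{itemize}
\item If $B\ge\log C$, the right branch is at most $e^{-c}\log C-(1-e^{-c})B\le(2e^{-c}-1)B\le 0$.
\item If $B<\log C$, then $\log(\log C+B)\le\log 2+\log\log C$. Using only $1/\tau\ge 1$, you get $x^\ast\ge\log C-\log\log C-\log 2+c\ge\tfrac12\log C\ge e^{-c}\log C$.
\end{itemize}
This is uniform in $\tau$, so no $\log(1/\tau)/\tau$ correction is needed. One minor point: the number of honest committee members is $k-a+1$, as in your display, not $k-a$ as stated in your text.
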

Proof in \Cref{app:staking-asymptotics-proof}. \Cref{ex:high-tau-c,ex:moderate-tau-c,lem:staking-asymptotics} show that stake can significantly reduce the protocol loss. We use these results to motivate our concrete takeaways for practitioners in \Cref{subsec:recommendations}.

\section{Discussion}\label{sec:discussion}
\Cref{sec:results} studies the minimizers and transition dynamics of the lower bound $g$ (\Cref{lemma:lb}) on $\opt1$ \eqref{prog:full}. The lower bound is highly tractable, and we show that it is tight in some regimes (\Cref{lemma:designated-implementable,lemma:symmetric-implementable}) and that we can get good approximations in all regimes (\Cref{thm:payment-rule-reduction,lemma:lottery-2-approx}). Further, we showed that no closed form exists for when the transition occurs (\Cref{remark:no-closed-form}), but we analyzed the limit behavior and gave numerical and asymptotic bounds (\Cref{lemma:limit-ct,lem:staking-asymptotics}). 

This section builds on these results by considering \textit{only implementable} symmetric and designated equilibria, conjecturing that the true solution  \eqref{prog:full} is the better of the optimal \textit{and} implementable symmetric or designated equilibria. Intuitively, this conjecture would say that the minimizing shapes of $g$ are the minimizing shapes of \eqref{prog:full} more generally.

In the symmetric case, the best implementable payment rule is exactly the solution to \eqref{prog:lp-symmetric}. \Cref{lemma:symmetric-implementable} showed that this is equal to the lower bound on $g$ if and only if $f_1 = \ldots = f_h = 0$, but the solution to the LP is the optimal implementable symmetric payment rule more generally. 

In the designated case, all equilibria are implementable (\Cref{lemma:designated-implementable}). Thus, to find the optimal implementable equilibrium, we consider a modification of \eqref{prog:g-constrained}, where we simplify the program by fixing $s_1=1$. Also, by the same argument we make for the shape of $g$ in \Cref{lemma:shape}, the solution to the modified program will be a designated uniform committee equilibrium (\Cref{def:designated-uniform}). Thus, we can write the simplified program for the full set of $n$ players as:

\begin{align}
    \opt{5} &:= \min_{\mbs \in [0,1]}  C (1-s)^h\tag{PROG5}\label{prog:constrained-designated}\\
    \text{s.t., } &1 + (n-1)s = C (1-s)^h \nonumber
\end{align}

We now define the minimizing \textit{implementable} symmetric and designated losses (denoted $\hat{S}_k,\hat{D}_j$ resp.), which is very similar to \Cref{def:minimizing-losses-g} ($S^*_k,D^*_j$), but with the enforcement of implementability rather than searching over the general lower bound $g$.

\begin{definition}[Minimizing implementable symmetric and designated losses]\label{def:minimizing-losses-implementable}
    For a given $(h,a,n,C)$, let $\hat{S}_k,\hat{D}_j$ be the values of the minimizing implementable symmetric and designated equilibria, respectively, where $j,k$ denote the respective optimal committee sizes and $h_j:= j-a,h_k:=k-a$ denote the resulting number of honest committee members. More formally,
    \begin{align*}
        \hat{D}_j := &\min_{\substack{h_j\in[h]\\ s\in[0,1]}} \big\{\ref{prog:constrained-designated}(h_j)\big\}, \quad
        \hat{S}_k :=\min_{\substack{h_k\in[h]\\ s\in[0,1]}} \big\{\ref{prog:lp-symmetric}(h_k)\big\},
    \end{align*} 
    where $\ref{prog:constrained-designated}(h_j), \ref{prog:lp-symmetric}(h_j)$ denote the solutions to the programs parameterized by committee sizes of $j=a+h_j, k=a+h_k$.
\end{definition}

\begin{conjecture}[Optimal is designated or symmetric]\label{conj:minima}
    The solution to \eqref{prog:full} is the smaller of the implementable minimizing designated and symmetric equilibria. More formally, 
    \begin{align*}
        \opt1 \overset{?}{=} \min\left\{\hat{D}_j,\hat{S}_k\right\}.
    \end{align*}
\end{conjecture}

This conjecture is motivated through the analysis of the lower bound $g$ in \Cref{sec:results}, where we are able to explicitly pin down the shape to designated and symmetric minimizers of $g$ (\Cref{lemma:shape}). \Cref{app:regimes-discussion} discusses this conjecture and its relationship with the lower bound of $g$ in detail. \Cref{subsec:counter-examples} considers several conjectures that would make the analysis of the general optimal mechanism or lower bound more structured and tractable. 
However, we also show a counter-example to each conjecture, further justifying the complexity of the optimal mechanism and the value of the approximately optimal construction (\Cref{thm:payment-rule-reduction}). 

We leave \Cref{conj:minima} as an open question for future work. Based on our estimates of the relative size of $C,\tau,B$ and proving costs (\Cref{rem:c-big,rem:tau-size,rem:stake-magnitude,rem:proving-cost-small}), our main recommendation is to use designated equilibria (\Cref{rem:use-designated}). The following subsection argues that the practicality and near-optimality of the best designated equilibrium make it a powerful tool.

\subsection{Recommendations to practitioners}\label{subsec:recommendations}
The results in \Cref{sec:results} are readily interpretable and have immediate applicability for blockchain protocol design today. This section distills these takeaways. First, we begin with a set of observations that help frame the results by considering the magnitude of the model's variables based on empirical reference values.

\begin{remark}\label{rem:c-big}
    \textbf{$C$ is large.} We believe that protocol designers should model the cost of a liveness penalty as being quite large. For example, in the zk-EVM model, the ``economic loss'' of a liveness penalty is a missed slot because the fork-choice rule will reject a block without an accompanying proof. Even though blocks on average have relatively low MEV payments, there are occasional, extremely large rewards (c.f., a recent block \cite{mevproposerbot} with 189 \texttt{ETH} of MEV $\approx$ \$440,000 at April 2026 prices). The attack model in this situation is the proposer \textit{of the following slot}, who stands to benefit greatly from the previous block not being included and thus capturing the MEV for themselves. To be robust during these high-volatility periods, protocols should consider that a liveness penalty may have a very high economic cost.
\end{remark}

\begin{remark}\label{rem:proving-cost-small}
    \textbf{Proving cost is relatively small.} Remember that our model uses the common-value setting, where we assume a unit proving cost compared to the liveness penalty $C$ (i.e., if $C=100$, then the liveness penalty is $100\times$ the cost of generating a proof). Given the size of $C$ described above, we think designers should consider the proving cost as a small fraction of it. For example, in the zk-EVM use-case, we see prices for proofs on today's blocks are on the order of \$0.1 \cite{ethproofs}, though this number may go up based on Ethereum's resulting scaling and doesn't factor in the upfront cost of purchasing hardware. For other verifiable workloads, the ratio between the proving cost and the liveness penalty might be smaller. For example, if the protocol is trying to procure a verifiable inference on an LLM, the cost of running it (e.g., on a TEE) may be relatively higher compared to the cost of not getting the result delivered to the user. In this regime, the protocols may choose different parameters. Still, we think for a majority of blockchain protocols, liveness will be a first-order concern, so the large-$C$ paradigm is the focus of this work.
\end{remark}

\begin{remark}\label{rem:tau-size}
    \textbf{Honest fraction $\tau$ is at least moderate.} The honest proportion of the prover set $\tau$ plays an important role in our analysis. Unlike the consensus literature that typically requires a $2/3$ honesty assumption, our mechanism is general for any value of $\tau$. Of course, as $\tau \rightarrow 0^+$, the worst-case protocol cost will approach $C$ (at $\tau =0$, the adversary can deterministically cause a liveness penalty), but we believe that assuming a non-zero fraction of the provers are honest is reasonable. One justification for this assumption is that staking provides a budget constraint for the attacker. For example, with $\tau = 0.5$, we could view the attacker budget as being able to stake (or bribe) up to 50\% of the prover set. For larger values of $\tau$ (à la the $2/3$ honesty assumption of Byzantine consensus), the results are even stronger. 
    
    In \Cref{fig:ct-limit}, we see that $C_t$ is super-exponential in $\tau$, meaning for moderate to large values of $\tau$, the minimizer of $g$ is designated for many values of $C$ and thus the lower bound is tight (\Cref{lemma:designated-implementable}). In \Cref{fig:protocol-cost}, we see that the $\tau$ coefficient on the log scaling meaningfully reduces the protocol cost in absolute terms. Asymptotically, the optimal unstaked protocol loss scales as $\mco\left(\log (C)/\tau\right)$. If $\tau = \Omega(1)$, the logarithmic scaling of the loss is preserved. If $\tau$ does approach zero, but has scaling $\tau = \Omega(1/\text{polylog } C)$, then the loss scales faster, but only as $\mco(\text{polylog } C)$.
\end{remark}

\begin{remark}\label{rem:stake-magnitude}
    \textbf{The aggregate staking amount $B$ is large.} In our model, we consider the total amount of stake put up by the provers (\Cref{corollary:lb-with-b}). Asymptotically, we see that if $B = \Theta(C)$, the protocol loss scales constantly if $\tau$ is constant in $C$ (\Cref{lem:staking-asymptotics}). While $B$ being the same order as $C$ sounds far-fetched in light of the magnitude of $C$ (\Cref{rem:c-big}), consider that $B$ is the \textit{total} amount of stake rather than an individual's contribution. As such, even if $C$ is on the order of $\$10mm$ worth of capital (1.5 orders of magnitude higher than discussed in \Cref{rem:c-big}), then asking the prover set in aggregate to stake that much is plausible. For example, if there were 100 provers, each would have to stake $\$100,000$. In Ethereum today, the minimum validator balance is 32 \texttt{ETH} $\approx \$74,000$ at April 2026 prices, meaning the provers would be asked to stake a mere 33\% more than the solo stakers already do today.
\end{remark}

\begin{remark}\label{rem:use-designated}
    \textbf{Practitioners should use the optimal designated uniform committee equilibrium (\Cref{def:designated-uniform}).} 
    This confident recommendation is grounded in the empirical estimates from the Ethereum ecosystem of: the magnitude of MEV spikes (\Cref{rem:c-big}), the current average cost of generating SNARKS for EVM blocks (\Cref{rem:proving-cost-small}), reasonable honesty assumptions (\Cref{rem:tau-size}), and the dollar amount of capital that validators currently stake (\Cref{rem:stake-magnitude}). With this context, the theoretical results are extremely compelling. \Cref{lemma:designated-implementable} shows us that \textit{any} designated equilibrium is implementable; \Cref{rem:simplicity-of-designated} notes the simplicity and interpretability of the payment rule. \Cref{thm:payment-rule-reduction} demonstrates that the optimal designated loss is at most the cost of generating a single proof larger than the optimal loss; in light of today's proving costs, this may be less than a dollar (\Cref{rem:proving-cost-small}). \Cref{lem:staking-asymptotics} shows that if the aggregate stake is of the same order as $C$, which \Cref{rem:stake-magnitude} argues is reasonable, then the optimal protocol will only incur a constant cost.

    Beyond the optimality of protocol loss considerations, designated equilibria are readily interpretable and have many similarities to familiar permissionless consensus protocols that underpin today's blockchains. The analogs are striking. In both, a single node is elected each round to have a ``special role'' (e.g., the leader in Tendermint \cite{buchman2018latest}). Further, each suggests a committee that ``checks'' the power of the single node (e.g., the attesting committee in HLMD GHOST \cite{buterin2020combining}). Lastly, each leverages staking as a form of accountability (e.g., ``accountable safety'' in Casper \cite{buterin2017casper} and ``accountable liveness'' \cite{lewis2025accountable}). As such, we believe the blockchain community will be open to seriously considering our design as a viable architecture for implementing prover markets and general procurement tasks.
\end{remark}

Based on the above, we strongly advocate for the usage of the optimal designated equilibrium. Namely, the protocol designed can find $\hat{D}_j$ (\Cref{def:minimizing-losses-implementable}) and use the resulting optimal committee size $j$ and mixing probability $s$. However, if the protocol designer only wants to consider symmetric equilibria, then we also provide pragmatic advice when considering the optimal symmetric approach.  

\begin{remark}
    \textbf{How to implement symmetric equilibria.} If $C$ is exceedingly large, we may be in the symmetric regime (e.g., $C > C_t$ \Cref{def:transition}) and the optimal symmetric equilibrium may be implementable (\Cref{lemma:symmetric-implementable}). In that case, the lower bound is tight, and this symmetric equilibrium is optimal. Alternatively, the protocol designer might care about the \textit{type} of permissionlessness in the architecture. For example, if the protocol \textit{cannot} require stake and provers are free to come and go at will in any round (the \textit{fully permissionless} environment as defined in \cite{lewis2023permissionless}), then a lottery mechanism (\Cref{def:lottery}) is a good fit and still achieves a $2-$approximation of the optimal loss (\Cref{lemma:lottery-2-approx}), leading to the same asymptotic scaling.
\end{remark}

\section{Conclusion}
This paper studies how blockchain protocols can leverage the asymmetry between performing computational work and verifying it. We argue that a robust procurement mechanism should withstand a Byzantine attacker and model the protocol as a mechanism design problem. Given the economic loss caused by a liveness penalty, the protocol aims to minimize its loss under an attacker behaving arbitrarily with a portion of the nodes in the network. Our results are theoretical and have clear insights for practitioners. 



\bibliographystyle{ACM-Reference-Format}
\bibliography{refs}

\appendix 

\section{Section 3 Appendices}

\subsection{Expanded form of \eqref{prog:full}}\label{app:lp-prog-full}
It is helpful to consider the protocol objective \eqref{prog:full} as a two-layer optimization. Once the equilibrium vector $\mbs$ is set, then the minimal protocol loss (encoded as payments made under different delivery vectors) can be solved as a linear program.

\begin{definition}[Two-level optimization]
    First, consider an arbitrary fixed strategy vector $\mbs \in [0,1]^n$. The protocol can then solve the following linear program (LP) in $p$.
    \begin{align}
        L(&\mbs):= \min_{p, t\geq0} t \tag{LP2}\label{eq:lp1}\nonumber \\
        s.t. \quad&\underset{\mb{d}_H \sim \mb{s}_H} {\mathbb{E}}
        \Big[\sum_{i=1}^np_i(\mb{d}_A\mid\mb{d}_H)+C\prod_{i=1}^n(1-d_i)\Big]
        &\le&t&\forall&(A,\mb{d}_A)\tag{attacker}\label{const:att}\\
        &\underset{\mb{d}_{-i} \sim \mb{s}_{-i}}{\mathbb{E}}\big[p_i(1\mid\mb{d}_{-i})-p_i(0\mid\mb{d}_{-i})\big]
        &\ge&1& \quad \forall&i\text{ if }s_i>0\tag{equil.\ 1}\label{const:eq1}\\
        &\underset{\mb{d}_{-i} \sim \mb{s}_{-i}}{\mathbb{E}}\big[p_i(1\mid\mb{d}_{-i})-p_i(0\mid\mb{d}_{-i})\big]
        &\le&1&\forall&i\text{ if }s_i<1\tag{equil.\ 2}\label{const:eq2}
    \end{align}
    This LP uses an epigraph variable to define the upper bound constraints on the protocol loss given any attacker set and action pair \eqref{const:att}. Further, the equilibrium constraints (\ref{const:eq1},\ref{const:eq2}) ensure that the strategy is an equilibrium implemented by $p$, i.e., $\mbs \in \mcs(p)$. If $s_i \in (0,1)$, then these constraints are both tight at $1$, and otherwise, deterministically delivering or not delivering are dominant strategies. Given this inner LP, we define the outer optimization problem,
    \begin{align}
        \opt1 = \min_{\mbs \in [0,1]^n} L(\mbs),
    \end{align}
    and we have that the solution to this program is the solution to \eqref{prog:full} and minimizes the protocol's objective.
\end{definition}

\subsection{Proof of \Cref{lemma:lb}}\label{app:lb-proof}
\begin{lemma*}[Lower bound function $g(\mbs)$]
    Consider the protocol parameterized with $(h,a,n,C)$. Without loss of generality, re-index the players such that $\mbs$ is \textit{decreasing} in $i$. Then define $g(\mbs)$ as,
    \begin{align}
        g(\mbs) := \max\left\{\sum_{i\in [n]} s_i + C \prod_{i\in[n]} (1-s_i), C \prod_{i=a+1}^n (1-s_i)\right\}.
    \end{align}
    Then the solution to the following program,
    \begin{align}
        \opt{2} := &\min_{\mbs \in [0,1]^n} g(\mbs) \tag{PROG2}\\
        \text{s.t., } &1\geq s_1 \geq s_2 \geq \ldots \geq s_n \ge 0 \nonumber
    \end{align}
    is a lower bound on the solution to \eqref{prog:full}: $\opt2 \leq \opt1.$
\end{lemma*}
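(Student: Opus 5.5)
We fix an arbitrary incentive compatible pair $(p,\mbs)$ with $\mbs\in\mcs(p)$ and show $\ell(p,\mbs)\ge g(\mbs)$. Minimizing over all such pairs then gives $\opt1\ge\min_{\mbs}g(\mbs)=\opt2$. Since $g$ and the loss are invariant under relabeling provers, we may assume $\mbs$ is sorted decreasingly, which is exactly the ordering constraint in \eqref{prog:g-full}. Because $\ell$ is a maximum over adversary choices $(A,\mb{d}_A)$, it suffices to exhibit two specific adversaries: one whose expected cost is at least the left branch of $g$, and one whose expected cost is at least the right branch.

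\textbf{Left branch (adversary plays honestly).} First, we note that an adversary may pick any $\mb{d}_A$, and a maximum over deterministic choices is at least the average over any randomization. So we may let the adversary randomize $\mb{d}_A\sim\mbs_A$ independently, and then $\ell(p,\mbs)\ge\E_{\mb{d}\sim\mbs}[\text{cost}(p,\mb{d})]$. This expectation equals $\sum_i\E[p_i(\mb{d})]+C\prod_i(1-s_i)$.

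Next we bound each $\E[p_i(\mb{d})]$ from below using incentive compatibility and $p_i\ge0$. If $s_i>0$, the constraint \eqref{const:eq1} gives $\E[p_i(1\mid\mb{d}_{-i})]\ge1+\E[p_i(0\mid\mb{d}_{-i})]\ge1$. Therefore
\[
\E[p_i(\mb{d})]=s_i\,\E[p_i(1\mid\cdot)]+(1-s_i)\,\E[p_i(0\mid\cdot)]\ge s_i.
\]
If $s_i=0$, the bound $\E[p_i(\mb{d})]\ge0=s_i$ holds trivially. Summing over $i$ gives the left branch, $\sum_i s_i+C\prod_i(1-s_i)$.

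\textbf{Right branch (adversary withholds).} Take $A=\{1,\dots,a\}$, the $a$ provers with the largest $s_i$, and $\mb{d}_A=\mb{0}$. No proof is delivered exactly when every honest prover in $H=\{a+1,\dots,n\}$ also withholds, which happens with probability $\prod_{i=a+1}^n(1-s_i)$. Dropping the nonnegative payments, the expected cost is at least $C\prod_{i=a+1}^n(1-s_i)$. The decreasing order is what makes this the right choice of $A$: corrupting the largest-$s_i$ provers leaves the smallest ones as honest, which maximizes this product.

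Taking the maximum of the two adversaries gives $\ell(p,\mbs)\ge g(\mbs)$. Hence $\opt1=\min_{p,\mbs\in\mcs(p)}\ell(p,\mbs)\ge\min_{\mbs\text{ sorted}}g(\mbs)=\opt2$.

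The only subtle step is the left branch. It requires both justifying that a mixed adversary is dominated by the worst pure one and extracting the per-prover payment bound $s_i$ from IC, which relies on nonnegativity of payments. The rest is bookkeeping.
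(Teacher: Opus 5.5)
Your proof is correct and takes the same route as the paper: it lower-bounds the loss by two restricted adversaries, one playing the equilibrium strategies and one corrupting the $a$ largest-$s_i$ provers and withholding. You are more careful than the paper on two points it glosses over. First, the ``honest'' adversary randomizes, so you note it is dominated by some pure choice of $\mb{d}_A$, which is all that \Cref{def:loss} allows. Second, you derive $\E[p_i(\mb{d})]\ge s_i$ from incentive compatibility together with $p_i\ge 0$, rather than from an individual-rationality constraint the model never states.
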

\begin{proof}
    First, consider a restricted attacker that chooses between two options: behaving honestly (e.g., playing the strategy $s_i$ for all corrupted indices $i \in A$) and fully not delivering, with the largest $a$ values of $\mbs$ (e.g., setting $d_i = 0$ for $d \in [a]$). 
    
    If the attacker plays honestly, the protocol must pay at least $\sum_{i\in[n]}s_i$ to the provers in order for the mechanism to be individually rational. Further, there may be some non-zero probability of the event that all of the provers don't deliver and the protocol incurs the penalty $C$. In particular, since the provers are mixing independently, the probability that none of them deliver is $\prod_{i\in[n]}(1-s_i)$ (recall that $s_i$ is their mixing probability \textit{for} delivering). Thus, the left branch of the $\max$ in $g(\mbs)$ is exactly the protocol's expected cost under an attacker that plays the fully honest strategy. 

    By similar reasoning, if the attacker doesn't deliver ($d_i=0$) with the largest $a$ values of $s_i$, then the protocol will incur the penalty with probability equal to that of the smallest $h$ indices (the remaining honest players) not proving $\prod_{i=a+1}^n(1-s_i)$. Thus, the right branch of the $\max$ is a lower bound on the protocol's cost under a fully non-delivering attacker. 

    Since the $\max$ of these two branches is always a viable action for the attacker and the protocol loss (\Cref{def:loss}) is defined over \textit{all attacker} actions and sets, this is a lower bound on $\opt1$.
\end{proof}

\subsection{Proof of \Cref{lemma:minimizer-g-equality}}\label{app:minimizer-g-equality-proof}
\begin{lemma*}[Minimizer of $g$ occurs at equality]
    The vector $\mbs^*$ that minimizes $g$ occurs where the two branches of the max are equal. More formally, we have
    \begin{align*}
        \sum_{i\in [n]} s^*_i + C \prod_{i\in[n]} (1-s^*_i) = C \prod_{i=a+1}^n (1-s^*_i).
    \end{align*}
\end{lemma*}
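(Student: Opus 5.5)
We argue by contradiction through a local perturbation. Write $L(\mbs) := \sum_{i\in[n]} s_i + C\prod_{i\in[n]}(1-s_i)$ and $R(\mbs) := C\prod_{i=a+1}^n(1-s_i)$, so that $g = \max\{L,R\}$. Both are continuous on the compact ordered simplex $\{1\ge s_1\ge\cdots\ge s_n\ge 0\}$, so a minimizer $\mbs^*$ exists. Suppose $L(\mbs^*)\neq R(\mbs^*)$. Then one branch is strictly larger, and by continuity it stays strictly larger in a small neighborhood of $\mbs^*$. There $g$ coincides with that single branch, so it suffices to exhibit a feasible direction in which that branch strictly decreases.

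\emph{Case 1: $R(\mbs^*) > L(\mbs^*)$.} Here $R>L\ge 0$, so $R>0$ and every $s^*_i<1$ for $i>a$. Raise the entire tail $s_{a+1},\dots,s_n$ slightly; the cleanest choice is to increase the smallest coordinates, or to raise all tail coordinates up toward $s_a^*$.
\begin{itemize}
\item If some tail coordinate $s_i^*<s_a^*$ (or $a=0$), then raising the last block of equal smallest coordinates by $\eps$ keeps the ordering.
\item Otherwise $s^*_{a+1}=\cdots$ equals the head values near the boundary; in that case raise the whole block of coordinates equal to $s_{a+1}^*$, including any head coordinates in that block, together with everything below it.
\end{itemize}
Either way, $R$ strictly decreases, since it is a product of factors $(1-s_i)$ with all $s_i<1$ and $C>0$. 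For small $\eps$ we keep $R>L$, which contradicts minimality.

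\emph{Case 2: $L(\mbs^*) > R(\mbs^*)$.} Now we want to decrease $L$. Its partial derivative is $\partial L/\partial s_j = 1 - C\prod_{i\neq j}(1-s_i)$.
\begin{itemize}
\item \emph{Decreasing the head.} If some head coordinate $s_j>0$ with $j\le a$ can be lowered, use the bound $C\prod_{i\neq j}(1-s_i) \le C\prod_{i>a}(1-s_i) = R < L$. This does not immediately force the derivative positive, so a cleaner route is preferred.
\item \emph{Driving everything down.} Lower all coordinates toward $0$ via $\mbs \mapsto (1-\eps)\mbs$, which preserves the ordering. Because $R$ increases continuously, the gap $L>R$ persists for small $\eps$, but $L$ need not decrease along this path.
\item \emph{Alternative: convexity.} On each coordinate, $L$ is affine and $R$ is affine as well, so both are multiaffine. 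Consider the segment from $\mbs^*$ toward $\mathbf 0$ (where $L=C=R$) and toward $\mathbf 1$-type points.
\end{itemize}
This is the main obstacle. The key observation I would use is the restriction along the ray $t\mbs^*$: here $L$ tends to $C$, equal to $R$, as $t\to0$. So by continuity there is a crossing $t_0\in(0,1]$ where $L=R$, and we need $g(t_0\mbs^*)\le g(\mbs^*)$. I would try to establish monotonicity of $\max\{L,R\}$ along a suitable path. If this fails, I would instead use the fact that $L$ is convex along the single coordinate $s_1$ (it is affine in $s_1$ with slope $1-C\prod_{i\ge2}(1-s_i)$). If the slope is positive, lower $s_1$ (or its tie block) to decrease $L$ while keeping $R$ fixed, since $R$ is unaffected when $a\ge1$. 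If the slope is nonpositive, raise $s_1$ toward $1$: this weakly decreases $L$, and at $s_1=1$ the slope of every other coordinate becomes $1>0$, so we can then decrease tail coordinates to lower $L$ strictly while $R$ rises. Either way $L$ strictly drops while $L>R$ persists locally, contradicting minimality. Ties in the ordering are handled by moving whole tie-blocks.
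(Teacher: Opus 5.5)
Your Case 1 is the paper's argument: raise a tail coordinate, using $\partial R/\partial s_i<0$ for $i>a$. You do it more carefully, moving whole tie blocks so the ordering constraints survive.

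Case 2 is where you genuinely diverge, and your final route is correct.
\begin{itemize}
\item \textbf{The paper's route.} It uses a sum-preserving transfer of $\eps$ between two head coordinates $i,j\in[a]$. This keeps $\sum_k s_k$ and $R$ fixed and strictly lowers $\prod_k(1-s_k)$, so $L$ falls in one local move. It is the same spreading trick reused in the proof of \Cref{lemma:shape}. That move needs $a\ge 2$. It needs $\prod_{k\neq i,j}(1-s^*_k)>0$, so it is silent exactly in the designated case $s^*_1=1$. It also needs mass to flow from the smaller coordinate to the larger one; as displayed, with $i>j$ and $\mbs$ decreasing, the paper's perturbation runs the other way.
\item \textbf{Your route.} You use only $s_1$. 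For $a\ge 1$, $R$ does not depend on $s_1$ and $L$ is affine in it. A sign split on $\partial L/\partial s_1$, plus the observation that $L=\sum_k s_k$ once $s_1=1$, handles $a=1$ and $s^*_1=1$ uniformly. The price is a case analysis instead of a single move.
\end{itemize}

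When you write it up, discard the first three bullets and the ray idea; the final paragraph is the proof. Also fix one overstatement. Raising $s_1$ all the way to $1$ is not a local move, so ``$L>R$ persists locally'' is not automatic. You have two ways to repair it:
\begin{itemize}
\item Observe that if $L$ drops to at most $R$ you are already done, because $R$ is unchanged and $R(\mbs^*)<L(\mbs^*)=g(\mbs^*)$.
\item Or keep every step local. If the slope is negative and $s^*_1<1$, raise $s_1$ by $\eps$. If the slope is zero, raise $s_1$ to $1$, which changes neither $L$ nor $R$. Once $s_1=1$, lower the smallest positive coordinate among $s_2,\dots,s_n$. One exists: otherwise $L=1$ and $R=C$, while a nonpositive slope $1-C$ forces $C\ge 1$, contradicting $L>R$.
\end{itemize}
In the positive-slope branch, note two things. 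First, $s^*_1>0$, since $\mbs^*=\mathbf{0}$ gives $L=R=C$. Second, if the tie block of $s_1$ extends past index $a$, then $R$ does move, contrary to ``keeping $R$ fixed''. The strict gap $L>R$ absorbs this for small $\eps$.
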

\begin{proof}
    We show that the optimizer can never be strictly in either branch. Recall that,
    \begin{align*}
        g(\mbs) := \max\left\{\sum_{i\in [n]} s_i + C \prod_{i\in[n]} (1-s_i), C \prod_{i=a+1}^n (1-s_i)\right\},
    \end{align*}
    and define $L,R$ as the left and right branches of the max:
    \begin{align*}
        L(\mbs) &:= \sum_{i\in [n]} s_i + C \prod_{i\in[n]} (1-s_i) \\
        R(\mbs) &:= C \prod_{i=a+1}^n (1-s_i).
    \end{align*}
    Assume, towards a contradiction, that $\mb{s}^*$ (the minimizing $\mb{s}$ of $g$) is strictly in the right branch of the max $L(\mbs^*) < R(\mbs^*)$. First note that $s_i^* < 1$ for all $i \in \{a+1, \ldots, n\}$, because if any $s^*_i = 1$, then $R(\mbs^*)$ would be zero and it couldn't be the minimizer because $L(\mbs^*) > 0$. Now consider the partial derivatives of $R$ with respect to each value $s_i^*$, where $i \in \{a+1, \ldots, n\}$,
    \begin{align*}
        \frac{\partial R}{\partial s_i^*} &= - C\cdot \prod_{\substack{ j=a+1 \\
         j\neq i}}^n (1-s_j^*) <0.
    \end{align*}
    Since these partials are all negative, increasing the value of any $s_i$ will decrease the objective. Choose some $i \in \{a+1, \ldots, n\}$ and set $s_i' = s_i^*+\eps$ while holding all other values $s_j^*$ for $j \neq i$ constant. Then, by continuity, we can choose $\eps$ small enough such that $s'$ still respects the original ordering constraints and we are still in the right branch of the max. Under our new strategy vector $\mbs'$, the objective has decreased, which means the original $\mbs^*$ could not have been optimal. Intuitively, if we are strictly in the right branch, then we can increase one of those values of $s_i$, lowering the value of the objective and contradicting the optimality.

    Now assume, again towards a contradiction, that $\mbs^*$ (the minimizing $\mbs$ of $g$ is strictly in the left branch of the max $L(\mbs^*) > R(\mbs^*)$. First, consider that $R$ is unchanged under perturbation of $s_i$ for $i \in [a]$. Now take two coordinates $i > j \in [a]$ and by the strictness of the inequality, there exists $\eps$ such that with $s_i' = s_i^* + \eps$ and $s_j' = s_j^* - \eps$, ordering constraints are preserved. Then evaluating the difference $L(\mbs')$, we have
    \begin{align*}
        &L(\mbs') =\underbrace{\sum_{\substack{k\in[n]\\k \neq i,j}} s_k^* + (s_i' + s_j')}_{\text{sum term}} + C \underbrace{\prod_{\substack{k\in[n]\\k \neq i,j}}(1-s_k^*) (1-(s_i^* + \eps)) (1-(s_j^* - \eps))}_{\text{product term}}.
    \end{align*}
    The sum term is exactly the same as $L(\mbs^*)$. Expanding the product term, we have
    \begin{align*}
        \prod_{\substack{k\in[n]\\k \neq i,j}}(1-s_k^*) (1-(s_i^* + \eps)) (1-(s_j^* - \eps)) &= \prod_{\substack{k\in[n]\\k \neq i,j}}(1-s_k^*)((1-s_i^*)(1-s_j^*) +\eps(s_j^* -s_i^*) - \eps^2) \\
        &< \prod_{i \in [n]} (1-s_i^*),
    \end{align*}
    where the last inequality comes from the fact that $s_j^* - s_i^*$ is negative, giving the following inequality $\forall \eps > 0$,
    \begin{align*}
        (1-s_i^*)(1-s_j^*) +\eps(s_j^* -s_i^*) - \eps^2 < (1-s_i^*) (1-s_j^*).
    \end{align*}
    Thus, the $L(\mbs') < L(\mbs^*)$, contradicting the optimality of $\mbs^*$. So the minimizing $\mbs^*$ occurs at equality of $L(\mbs^*) = R(\mbs^*).$
\end{proof}

\subsection{Proof of \Cref{lemma:shape}}\label{app:shape-proof}
\begin{lemma*}[Characterizing the minimizer of $g$]
    Let $\mbs^*$ denote the minimizing vector $\mbs^* := \arg\min_{\mbs\in[0,1]}g(\mbs).$ Then $\mbs^*$ is either a designated uniform committee equilibrium or a symmetric committee equilibrium (\Cref{def:designated-uniform,def:symmetric}).
\end{lemma*}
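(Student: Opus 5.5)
Throughout, write $L$ and $R$ for the two branches of $g$, as in the proof of \Cref{lemma:minimizer-g-equality}. By \Cref{lemma:minimizer-g-equality} the minimizer $\mbs^*$ satisfies $L(\mbs^*)=R(\mbs^*)$. So it suffices to work with \eqref{prog:g-constrained}: minimize $R$ on the constraint surface $L=R$ inside the ordering polytope. The plan is a sequence of exchange arguments. Each one moves mass between coordinates, keeps $L\le R$ (or restores equality), and does not increase $R$. Together they force every coordinate into the set $\{1,s,0\}$, with at most one coordinate equal to $1$ and all fractional coordinates equal.

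\textbf{Step 1 (the top $a$ coordinates).} The coordinates $s_1,\dots,s_a$ appear only in $L$. There $L$ depends on them through the linear term $\sum_{i\le a}s_i$ and the factor $\prod_{i\le a}(1-s_i)$. For a fixed sum, spreading the values apart (a Schur-type move) lowers the product, exactly as in the second half of the proof of \Cref{lemma:minimizer-g-equality}. Such a move strictly lowers $L$ while leaving $R$ unchanged. One can then spend the slack by raising some $s_i$ with $i>a$, which strictly lowers $R$. Hence at the optimum no spreading move is available within the ordering constraints. The only obstruction is the ordering: spreading is blocked only when $s_1=1$, or when the top block is already tied to $s_{a+1}$. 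So $s_1,\dots,s_a$ take at most two distinct values, one of which is $1$ (at most once, since two $1$'s would waste payment) and the other equal to $s_{a+1}$.

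\textbf{Step 2 (the tail $s_{a+1},\dots,s_n$).} Parametrize by $u_i=-\log(1-s_i)$. Then $R=Ce^{-\sum_{i>a} u_i}$, while $L=\sum_i s_i + C e^{-\sum_i u_i}$. Fix $\sum_{i>a}u_i$, which fixes $R$ and the product in $L$, and minimize $\sum_{i>a}s_i=\sum_{i>a}(1-e^{-u_i})$. This is a concave function of the $u_i$ minimized over a simplex-like region cut out by the ordering constraints. The minimum sits at extreme points, where the $u_i$ take at most two values (one value and $0$) after accounting for the tie with the top block. So the tail has the form $(s,\dots,s,0,\dots,0)$, and any strict decrease in $L$ found this way is again converted into a decrease of $R$.

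\textbf{Step 3 (combining).} Since $s_{a+1}$ ties to the top block, Steps 1 and 2 give $\mbs^*=[1^{\le 1},s,\dots,s,0,\dots,0]$. A leading $1$ gives a designated uniform committee equilibrium (\Cref{def:designated-uniform}); no leading $1$ gives a symmetric committee equilibrium (\Cref{def:symmetric}).

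\textbf{Main obstacle.} The hard part is the interaction between the two blocks. Step 1 alone only shows that the top values cluster at $1$ or at $s_{a+1}$; I still have to rule out a top block consisting of a single $1$ followed by values strictly between the committee value and $1$. I would handle this with a direct two-coordinate perturbation that moves $s_2$ and $s_{a+1}$ jointly while holding $L=R$, and check the sign of the first-order change in $R$ along that curve. The check should reduce to the inequality $1-s>C\prod(1-s_j)$, which I expect follows from the equality constraint. If the sign argument fails in some range, I would fall back on Lagrange (KKT) conditions. The stationarity equations $1=\lambda\,\partial(\cdot)$ for interior coordinates force all interior values to be equal, because the map $s\mapsto 1/(1-s)$ is injective, and the boundary cases are exactly the $1$'s and $0$'s.
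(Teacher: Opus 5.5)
Your route differs from the paper's proof. The paper spreads mass inside $H$ to lower $\pi_H$, then re-solves the constraint (which is affine in each attacker coordinate), and treats $s_1=1$ separately. Fixing $R$ while minimizing $L$ is a sensible idea, but as written your exchange arguments do not reach the claimed form.

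\textbf{Step~2 (the main gap).} The region you minimize over carries the cap $u_{a+1}\le u_a$, and this cap binds because Step~1 ties $s_a$ to $s_{a+1}$. On $\{u_a\ge u_{a+1}\ge\dots\ge u_n\ge 0,\ \sum_{i>a}u_i=U\}$ the Schur-concave function $\sum_{i>a}(1-e^{-u_i})$ is minimized at $(u_a,\dots,u_a,\rho,0,\dots,0)$ with a leftover $\rho\in[0,u_a)$. It is not minimized at a single value plus zeros. For example, in $u$-coordinates with $h=2$, $u_a=1$, $U=\tfrac32$, the point $(1,\tfrac12)$ beats $(\tfrac34,\tfrac34)$. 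So a ``fractional committee member'' survives your argument.

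Removing it requires moving the cap, i.e.\ moving $s_2,\dots,s_a$ together with the tail. When $s_1<1$, this changes the term $C\prod_i(1-s_i)=Ce^{-\sum_i u_i}$ in $L$, which is convex in $u$. Concavity and extreme points then no longer decide anything, and the proposal offers no other argument.

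When $s_1=1$ your argument does work. That term vanishes, $s_2,\dots,s_a$ enter $L$ linearly and can be set equal to $s_{a+1}$, and the objective $(a-1)(1-e^{-u_{a+1}})+\sum_{i>a}(1-e^{-u_i})$ is strictly concave on an uncapped region whose vertices are $(U/k,\dots,U/k,0,\dots,0)$. This gives the designated case.

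\textbf{Step~1 (wrong residual case).} With $s_1<1$, spreading is already blocked once $s_2=\dots=s_a=s_{a+1}$. So $s_1$ can lie anywhere in $[s_{a+1},1)$, and the top values need not lie in $\{1,s_{a+1}\}$.

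The case you call the main obstacle is immediate. If $s_1=1$ then $L=\sum_i s_i$, and shifting mass from $s_2$ into the block tied to $s_{a+1}$ keeps $L$ fixed and strictly lowers $R$. No inequality is needed.

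The case that does need an argument is $s_{a+1}<s_1<1$. Here $R$ does not involve $s_1$, and $L$ is affine in $s_1$ with slope $1-C\prod_{j\ge 2}(1-s_j)$. So at a minimizer the slope must vanish, and $s_1$ can then be slid to $1$ or to $s_{a+1}$ at no cost. The sign of that slope is not fixed by the equality constraint, so your hoped-for inequality cannot come from it.

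When the slope is zero, the minimizer is genuinely non-unique. For $h=1$, $n=3$, $C=4$, every $(x,\tfrac12,\tfrac12)$ with $x\in[\tfrac12,1]$ has $L=R=2=\min g$. So one can only prove that some minimizer is designated or symmetric.

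For the same reason, your KKT fallback cannot force all interior coordinates to be equal. Top coordinates have $\partial R/\partial s_i=0$ while tail ones do not, so their stationarity equations differ. In addition, active ordering constraints carry their own multipliers.
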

\begin{proof}
    We perform this analysis using \eqref{prog:g-constrained}. First, we will show that if $s_i^* < 1, \forall i \in [n]$, then the shape of the optimizer is symmetric. Let $A = \{1, \ldots, a\}, H= \{a+1, \ldots, n\}$ denote the attacker and honest controlled indices and 
    \begin{align*}
        \pi_A &= \prod_{i\in A} (1-s_i), \qquad \pi_H = \prod_{i\in H}(1-s_i)
    \end{align*}
     denote the probability of the attacker and honest sets having no deliveries respectively. First we show that all honest players with non-zero mass have the same probability.
    
    Assume, towards a contradiction, that there exists $i, j \in H$ such that $s_{i} > s_j$. Then consider $\mbs'$ where all indices maintain their corresponding value in $\mbs$ \textit{except} $i, j$ and 
    \begin{align*}
        s_{i}' = s_{i} + \eps, \qquad s_j' = s_j - \eps,
    \end{align*} 
    for some small $eps$.
    This transformation preserves the order and the sum of $\mbs$. For the product term $\pi_H$ each of the non $i,j$ indices are the same and 
    \begin{align*}
        (1-s_i')(1-s_j') &= (1-(s_i+\eps))(1-(s_j-\eps)) \\ 
        &= (1-s_i)(1-s_j) + \eps (s_j - s_i) - \eps^2 \\ 
        &< (1-s_i)(1-s_j) \tag{by $s_j < s_i, \eps>0$}.
    \end{align*}
    Thus, $C \pi_H$, the objective of \eqref{prog:g-constrained}, decreases. Examining the equality constraint of \eqref{prog:g-constrained}, we have
    \begin{align*}
        \sum_{i\in[n]} s_i + C \pi_A \pi_H = C \pi_H \implies \sum_{i\in[n]} s_i +C\pi_H(\pi_A - 1) = 0.
    \end{align*}
    Given $\pi_H$ decreased but the sum is preserved, we need to restore feasibility. Choose an attacker index $k \in A$ and fix all other values of $\mbs$. Then the equality constraint becomes
    \begin{align}\label{eq:affine-in-sk}
        s_k + \sum_{\substack{i \in [n]\\ i \neq k}} s_i + C \pi_H \bigg(\prod_{\substack{i \in [a]\\ i \neq k}} (1-s_i)\bigg) (1-s_k) = 0 
    \end{align}
    Notice that \eqref{eq:affine-in-sk} is affine in $s_k$. Thus, given a new value of $\pi_H$, we can solve this linear equation by increasing some corresponding value of $s_k$ to the unique value that restores the feasibility. With the reduced objective and the constraint satisfied, the transformation decreased the optimal value of \eqref{prog:g-constrained}, which is a contradiction. 
    
    We have that $s_{i} = s_{j},  \forall i,j \in H, s_i ,s_j > 0.$ Intuitively, this argument pushes probability mass up from the smallest values of $H$ to the higher probability members, concentrating the mass symmetrically among them. Thus we have $s_{a+1} = s_{a+2} = \ldots = s_{k},$ where $k$ is the committee size of the symmetric equilibrium. Next, we show that the attacker indices also share that same value.
    
    Assume, again towards a contradiction, that there exists $i \in A$ such that $s_i > s_{a+1}$ (i.e., there is an attacker index with a higher probability that the \textit{first} honest index). Recall that we are in the case where $s_i < 1, \forall i$, so there can be no deterministic deliverers. Then consider $\mbs'$ where we transform $s_{a+1}' = s_{a+1}' + \eps$. The objective $C \pi_H$ decreases. Again, we turn to feasibility. Since \eqref{eq:affine-in-sk} is affine in $s_i$, we adjust $s_i$ to restore equality with $0$. Again, we have reduced the value of the conjectured optimizer, and we have a contradiction. Intuitively, this step equalizes the attacker values with the honest ones because any probability mass that isn't evenly spread over all indices is being ``wasted'' on the attacker. So far, we know that if $s_i>1$ for all $i$, then the shape of the optimizer is a symmetric committee equilibrium $s_1 = s_2 = \ldots = s_{a+1} = \ldots s_k$  (\Cref{def:symmetric}).

    Now consider the case that there exist some deterministic deliverers in $\mbs$ (which, by the ordering constraint, implies $s_1=1$). First, we show that for all $i,j$ such that $s_i,s_j \in (0,1)$, $s_i=s_j$. In words, this is the fact that all the non-deterministic players have the same probability. Notice that with $s_1=1$, the objective remains $C \pi_H$, but the constraint changes to $\sum_{i\in [n]} s_i  = C \pi_H$. This case is much simpler. For any $i,j \in [n]$ if $s_i > s_j$ and $s_i,s_j \in (0,1)$, perform the transformation 
    \begin{align*}
        s_i' = s_i + \eps, \qquad s_j' = s_j - \eps.
    \end{align*}
    Thus sum is preserved and the objective $C \pi_H$ is weakly decreasing (it is strictly decreasing only if $j \in H$. Thus, the equality can be preserved and the objective is weakly lower, which is a contradiction. So all non-deterministic players have the same value.

    Lastly, we show that there is \textit{only one} deterministic deliverer, thus $s_1=1 \implies s_i < 1, \forall i \in \{2, \ldots, n\}.$ Assume towards a contradiction, that there are multiple players with $s_i = 1$. Let $k$ denote the largest index such that $s_k=1$ and consider the modified equilibrium $\mbs'$, where all indices are the same except $s_k' = 1-\eps$. This reduces the sum, but doesn't change the objective.  Using \eqref{eq:affine-in-sk}, we can solve for a value $\delta$ such that $s_{a+1}' = s_{a+1} + \delta$ restores the feasibility and strictly increases the value of $s_{a+1}$, which reduces the objective and leads to a contradiction. Intuitively, this step exploits the fact that multiple deterministic deliverers doesn't help because $\pi_A$ is already $0$ with just $s_1=1.$ That probability mass can be shifted down to reduce $C \pi_H$ instead. Thus, if there exists a deterministic deliverer, the shape of the minimizer is a designated uniform committee equilibrium (\Cref{def:designated-uniform}).
\end{proof}

\subsection{Proof of \Cref{lemma:designated-implementable}}\label{app:designated-implementable-proof}
\begin{lemma*}[All designated uniform committee equilibria are implementable]
    Given a designated uniform committee equilibrium parameterized by committee size $k\leq n-1$ and a mixing probability $s$, there exists a payment rule that implements the equilibria and achieves the minimal protocol loss of $1 + ks$.
\end{lemma*}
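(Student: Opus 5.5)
The plan is to construct the payment rule explicitly and make the total payment a constant, $1+ks$, whenever the designate delivers, and zero otherwise. Then I check incentive compatibility for the three groups of provers: the designate, the committee, and the idle provers. Finally I bound the loss against an arbitrary adversary $(A,\mb{d}_A)$ by splitting on whether the designate delivers.

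\textbf{Construction.} Index the designate as player $1$, the committee as players $2,\dots,k+1$, and the rest as idle. Let $T$ be the number of committee members who deliver. The payment rule $p$ is as follows.
\begin{itemize}
\item If $d_1=0$, every prover gets $0$.
\item If $d_1=1$, a single prize $\lambda\in[0,1+ks]$ is split equally among the committee deliverers whenever $T\ge 1$, so each gets $\lambda/T$.
\item Player $1$ receives the remainder $1+ks-\lambda\cdot\mathbf{1}[T\ge1]\ge 0$.
\item Idle provers always receive $0$.
\end{itemize}
All payments are nonnegative, and the total is exactly $1+ks$ whenever $d_1=1$.

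\textbf{Incentives.} Idle provers strictly prefer not to deliver, since delivering costs $1$ and pays $0$. For a committee member, let $Y\sim\text{Binomial}(k-1,s)$ count the other committee deliverers; with $d_1=1$ on the equilibrium path, indifference requires
\[
\lambda\,\E\!\left[\tfrac{1}{1+Y}\right]=\lambda\cdot\frac{1-(1-s)^k}{ks}=1 .
\]
The left side is continuous and increasing in $\lambda$ from $0$, so the key step is showing it reaches $1$ at $\lambda=1+ks$. That is equivalent to $(1+ks)(1-s)^k\le 1$. I would prove this via $1+ks\le(1+s)^k$, which gives $(1+ks)(1-s)^k\le(1-s^2)^k\le1$. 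This inequality is the only nontrivial point and the step I expect to need care. Fixing this $\lambda$, the designate's expected payment when delivering is $1+ks-\E[\lambda\mathbf{1}[T\ge1]]$. Since each member's expected payment conditional on delivering is $1$, we have $\E[\lambda\mathbf{1}[T\ge1]]=ks$, so the designate's expected payment is exactly $1$. Delivering therefore yields utility $0$, the same as not delivering, which satisfies the constraint for $s_1=1$. Hence $\mbs\in\mcs(p)$.

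\textbf{Loss.} If the designate is honest, or is corrupted but sets $d_1=1$, a proof always exists and the cost is exactly $1+ks$, no matter what the other corrupted provers do. If the designate is corrupted and sets $d_1=0$, no payments are made. The adversary then maximizes the failure probability by having all corrupted provers withhold, which leaves at least $k+1-a$ honest committee members mixing. The cost is at most $C(1-s)^{k+1-a}$. For the parameters where this designated vector is the one considered (with $s$ chosen, as in \Cref{lemma:minimizer-g-equality}, so that $1+ks=C(1-s)^{k+1-a}$), this equals $1+ks$. So $\ell(p,\mbs)=1+ks=g(\mbs)$, and since $g$ is a lower bound by \Cref{lemma:lb}, this is the minimal loss for this $\mbs$.
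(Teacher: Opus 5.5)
Your proposal is correct and is the paper's construction. With $\lambda = ks/(1-(1-s)^k)$, your designate's residual $1+ks-\lambda$ equals the paper's $1-\frac{(1-s)^k ks}{1-(1-s)^k}$, and the indifference checks for the committee and the designate are the same; beyond that, your Bernoulli-based proof that this residual is nonnegative and your explicit two-case loss bound (showing the loss is $g(\mbs)$ in general, and $1+ks$ only at the $s$ that equalizes the two branches) fill in steps the paper leaves implicit.
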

\begin{proof}
    We construct a payment rule that satisfies the following properties: (i) only pay if a proof is delivered, and (ii) always pay $1+ks$ no matter how many proofs are delivered.
    We achieve property (i) by conditioning all payments on player 1 delivering. Since player 1 is designated with $s_1=1$, this doesn't impact the equilibrium conditions. The committee members (indices $\{2, \ldots, k+1\}$) mix with uniform probability $s$ and have a prize of size $\frac{ks}{1-(1-s)^k}$ split evenly among the committee members who deliver. We formalize this below.
    Notice that the L1-norm $||\mb{d}||_1$ counts the number of deliveries. Then consider the following payment rule 
    \begin{align*}
        p_1(\mb{d}) &= 
            \begin{cases}
                0 & \text{if } d_1 = 0\\
                1 + ks & \text{if } ||\mb{d}||_1 = 1 \\
                1-\frac{(1-s)^k ks}{1-(1-s)^k} & \text{if } ||\mb{d}||_1 > 1 \\
            \end{cases} \\ 
        p_{i>1}(\mb{d}) &= 
            \begin{cases}
                0 & \text{if } d_1 = 0\\ 
                \frac{ks}{1-(1-s)^k} \cdot \frac{1}{||\mb{d}||_1-1} & \text{if } d_i = 1, i \in \{2, \ldots, k+1\}.
            \end{cases}
    \end{align*}
    We can quickly confirm that this is an equilibrium. Consider player 1's expected payoff from delivering,
    \begin{align*}
        \E[p_1(\mb{d})] &= (1+ks) \cdot (1-s)^k + \left(1-\frac{(1-s)^k ks}{1-(1-s)^k}\right) \cdot (1-(1-s)^k)\\
        &= 1.  
    \end{align*}
    The other players also have the same expected payoff from delivering
    \begin{align*}
        \E[p_{i>1}(\mb{d})] &= \sum_{j=0}^{k-1} \left(\frac{ks}{1-(1-s)^k} \cdot \frac{1}{j+1} \cdot \Pr[X = j]\right) = 1,
    \end{align*}
    where $X \sim \text{Binomial}(k-1,s)$. Thus $[1, s, \ldots , s, 0, \ldots,0]$ is an equilibrium. For any $||\mb{d}||_1 > 1$, the protocol pays $1+ks$:
    \begin{align*}
        1-\frac{(1-s)^k ks}{1-(1-s)^k} + \frac{ks}{1-(1-s)^k} = 1+ks.
    \end{align*}
    Similarly, if $||\mb{d}||_1 = 1$, the protocol pays $1+ks$.
\end{proof}

\subsection{Proof of \Cref{thm:payment-rule-reduction}}\label{app:payment-rule-reduction-proof}
\begin{theorem*}[Payment rule reduction]
    Given any payment rule, $p$, and corresponding equilibrium, $\mb{s}\in\mcs(p)$, there exists a modified payment rule, $p'$, and a modified equilibrium, $\mb{s}'\in\mcs(p')$, such that the modified equilibrium is designated and
    \begin{align*}
        \ell(p',\mb{s}') \leq \ell(p,\mb{s}) + 1.
    \end{align*}
\end{theorem*}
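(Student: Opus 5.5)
The plan is to go through the lower bound rather than modify $p$ directly. By \Cref{lemma:lb}, every incentive compatible pair satisfies $\ell(p,\mbs)\ge \opt1 \ge \opt2 = \min g$. It therefore suffices to exhibit a designated pair $(p',\mbs')$ with $\ell(p',\mbs') \le \opt2 + 1$. By \Cref{lemma:shape}, the minimizer $\mbs^*$ of $g$ is either a designated uniform committee equilibrium or a symmetric committee equilibrium, so I would split into these two cases.

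\emph{Designated case.} If $\mbs^*$ is designated, \Cref{lemma:designated-implementable} gives a payment rule that implements $\mbs^*$. I would reread that lemma's loss analysis as saying the loss is exactly $g(\mbs^*)$ (this is the reasoning of \Cref{rem:simplicity-of-designated}). First, if the adversary leaves player $1$ uncorrupted, a proof is always produced and the protocol pays exactly $1+ks$, which equals the left branch of $g$. Second, if the adversary corrupts player $1$ and player $1$ withholds, nothing is paid. Among the committee, at least $k-(a-1)$ members are honest, so the cost is at most $C(1-s)^{k-a+1}$, which is the right branch of $g$; if instead player $1$ delivers, we are back in the first situation. Hence $\ell = g(\mbs^*) = \opt2 \le \ell(p,\mbs)$, which is even better than required.

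\emph{Symmetric case.} Suppose $\mbs^* = [s,\dots,s,0,\dots,0]$ with committee size $k$ and $s\in(0,1)$. I would define $\mbs' = [1,\underbrace{s,\dots,s}_{k-1},0,\dots,0]$, that is, promote one committee member to designate. This is a designated uniform committee equilibrium with committee size $k-1\le n-1$. If $k=1$ the committee is empty; then I take the pure designate, whose loss is $\max\{1, C\cdot \mathbb{1}[a\ge1]\}$, and this is handled by the same bounds below. Implement $\mbs'$ with the rule of \Cref{lemma:designated-implementable}. Its loss is the larger of two quantities:
\begin{align*}
1+(k-1)s \quad\text{and}\quad C(1-s)^{\max\{(k-1)-(a-1),0\}} = C(1-s)^{\max\{k-a,0\}}.
\end{align*}
The first quantity is at most $1+ks \le 1 + \bigl(ks + C(1-s)^k\bigr)$, which is $1$ plus the left branch of $g(\mbs^*)$. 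The second quantity is exactly the right branch of $g(\mbs^*)$: the indices $a+1,\dots,n$ of $\mbs^*$ contain $\max\{k-a,0\}$ entries equal to $s$ and zeros elsewhere. Therefore $\ell(p',\mbs') \le g(\mbs^*)+1 = \opt2+1 \le \ell(p,\mbs)+1$.

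The main obstacle is the bookkeeping in the adversary case analysis for the designated rule. One must check that corrupting the designate and having it deliver, or having corrupted committee members deliver extra proofs, never raises the total payment above $1+(k-1)s$; the fixed-total property of the \Cref{lemma:designated-implementable} rule gives this. One must also check that when the designate withholds, the worst case is all corrupted committee members withholding, so that at least $k-a$ honest members with probability $s$ each remain. A second subtlety is the existence of a minimizer of $g$. Since $g$ is continuous on a compact polytope, the minimum is attained, and \Cref{lemma:shape} applies to it. I would add a brief remark on this rather than rely on the infimum. In the degenerate case $s^*=1$ inside a symmetric-looking vector, the vector is itself designated-like: a single $1$ suffices, as the last step of the proof of \Cref{lemma:shape} shows.
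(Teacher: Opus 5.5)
Your argument is correct, given the earlier lemmas you cite, but it takes a genuinely different route from the paper's.

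\textbf{The paper's route.} The paper never looks at a minimizer of $g$. It starts from the given pair, sorts $\mbs$, and sets $\mbs'=[1,s_2,\ldots,s_n]$. So $\mbs'$ is designated only in the sense $s'_1=1$, and is generally not a uniform committee. It pays the designate $1$ on delivery. Every other prover receives the $s_1$-weighted average over $d_1$ of its $p$-payment, paid only if $d_1=1$. The interim payments of provers $i>1$ are unchanged, so incentive compatibility survives. The loss is then compared with $\max\{\sum_i s_i,\,C\prod_{i>a}(1-s_i)\}\le \ell(p,\mbs)$.

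\textbf{Your route.} You pass through $\opt{2}=\min g$, use \Cref{lemma:shape} to reduce to the two uniform shapes, and implement a uniform designated committee with the fixed-total rule of \Cref{lemma:designated-implementable}.

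\textbf{What your route buys.} First, a single mechanism, independent of $(p,\mbs)$, that works for every pair at once. Second, an output of exactly the form in \Cref{def:designated-uniform}. Third, the sharper bound $\ell(p',\mbs')\le \opt{2}+1$, with no $+1$ at all when the minimizer of $g$ is designated. Fourth, a clean loss computation: that rule really pays at most $1+ks$ whenever the designate delivers, and nothing otherwise.

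By contrast, the averaged $p'$ in the paper does not in general have total payment in $\{0,\sum_i s'_i\}$, as its proof asserts. Its bound has to come instead from an outcome-by-outcome comparison of $p'$ with $p$. That comparison uses $p_1\ge 0$ and the fact that averaging over $d_1$ is a convex combination of the two options open to an adversary controlling the designate.

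\textbf{What the paper's route buys.} It needs no structural fact about $g$. Yours inherits a dependence on \Cref{lemma:shape}, whose perturbation argument is the most delicate step in that section.

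\textbf{Removing that dependence.} You can avoid relying on \Cref{lemma:shape}. Since $\ell(p,\mbs)\ge g(\mbs)$ and $g([1,s_2,\ldots,s_n])\le g(\mbs)+1$, it suffices that $g$ restricted to the face $s_1=1$ is minimized by a uniform committee. On that face the left branch is linear and the logarithm of the right branch is concave. So, for a fixed value of the left branch, the right branch is minimized at a vertex of the form $[1,\ldots,1,\lambda,\ldots,\lambda,0,\ldots,0]$. Surplus $1$'s among the first $a$ coordinates can be lowered to $\lambda$ without changing the right branch. If a surplus $1$ lies beyond position $a$, then $g\ge a+1$, which a uniform designated committee already beats.
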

\begin{proof}
Without loss, re-index the entries of $\mb{s}$ to be decreasing,
\begin{align*}
    1 \geq s_1 \geq s_2 \geq \ldots \geq s_n \geq 0.
\end{align*}
If $s_1=1$, the equilibrium is already designated and the claim holds trivially. Thus we consider if $s_1 < 1$. We assign $s'_1 \rightarrow 1$ and keep all remaining entries of $\mb{s}$. That is, 
\begin{align*}
    \mb{s}' = [1, s_2, s_3, \ldots, s_n].
\end{align*}
To construct $p'$, we start by paying player 1 exactly $1$ if they deliver and $0$ otherwise, independently of the rest of the players' actions. For the remaining players, their payments under $p$ could depend on the realized action of player 1, $d_1$. Under $p'$, we pay them exactly the average of their payments weighted by $s_1$, but \textit{only if} player 1 does deliver. More formally, we set
\begin{align*}
    p_i'(\mb{d}) = \begin{cases} p_i(\mb{d} | d_1 = 0) \cdot (1-s_1)  + p_i(\mb{d} | d_1 = 1) \cdot s_1 & \text{if } d_1 = 1 \\ 
    0 & \text{otherwise}.
    \end{cases}
\end{align*}
Under $\mb{s}'$, the honest player 1 will \textit{always} deliver, so $s_{i>1}$ maintains the same expected payments and thus remains an honest equilibrium, i.e., $s'\in\mcs(p')$. Furthermore, $p'$ has the following properties: a) its total payment is always in $\big\{0,\sum_{i \in [n]} s'_i\big\}$ for all delivery outcomes $\mb{d}$, and b) it never pays when there is no delivery.

With $p',\mb{s}'$ defined, consider the loss of any attacker set and delivery decision, $A, \mb{d}_A$, under the original $p,\mb{s}$. Again, we bound this loss below by the maximum of the attacker being \textit{fully honest} and the attacker \textit{fully non-delivering} with the $a$ largest values of $s_i$. The loss under an honest attacker is \textit{at least} the total payment, which is at least $\sum_{i \in [n]} s_i$ since for $\mb{s}$ to be an equilibrium, expected payments must cover expected proving costs. Furthermore, the loss under the fully non-delivering attacker is \textit{at least} the penalty term $C \cdot \prod_{i=a+1}^n (1-s_i)$ even if there are no payments. Together, we have 
\begin{align}\label{eq:lem-1-lower-bound}
    \ell(p,\mb{s}) \geq \max \bigg\{\sum_{i \in [n]} s_i, C \cdot \prod_{i =a+1}^n (1-s_i)\bigg\}.
\end{align}
Now consider the loss of the new payment rule $p'$ under the modified equilibrium $\mb{s}'$. We break the possible attacker actions into two cases. If the attacker corrupts the designate and doesn't deliver (i.e., $1\in A$, $a_1=0$), then $p'$ pays nothing and the highest possible loss occurs when the attacker controls the largest values $s_1,\ldots,s_a$, resulting in a loss of $C\cdot \prod_{i={a+1}}^n (1-s'_i)$. Otherwise, the designate delivers and cost is always $\sum s'_i.$ Thus
\begin{align}\label{eq:lem-1-p-prime}
    \ell(p',\mb{s}') = \max\bigg\{\sum_{i \in [n]} s_i', C \cdot \prod_{i=a+1}^n (1-s'_i)\bigg\}.
\end{align}
The first term of \Cref{eq:lem-1-lower-bound} is $1-s_1$ larger than the first term of \Cref{eq:lem-1-p-prime}. The second terms of \Cref{eq:lem-1-lower-bound} and \Cref{eq:lem-1-p-prime} are equal if $a\geq 1$. It follows that 
\begin{align*}
    \ell(p', \mb{s}') \leq \ell(p,\mb{s}) + 1. 
\end{align*}
\end{proof}

\subsection{Proof of \Cref{lemma:symmetric-anonymous}}\label{app:symmetric-anonymous-proof}
\begin{lemma*}[Symmetric payment rule reduction]
    Any payment rule $p$ that implements a symmetric equilibrium $\mbs$ with a committee size of $k$ can be transformed into an anonymous payment rule $p'$ with the following form:
    \begin{align*}
        p'_i(\mb{d}) &= 
            \begin{cases}
                f_t/t & \text{if } d_i = 1, ||\mb{d}||_1 = t, i \leq k \\ 
                0 & \text{otherwise}.
            \end{cases} 
    \end{align*}
    where $f_t$ is a fixed total prize that the protocol pays under the event that there are exactly $t$ proofs delivered. The symmetric vector is still an equilibrium under the modified payment rule $\mbs \in \mcs(p')$ and $p'$ has weakly lower cost $\ell(p',\mbs) \leq \ell(p,\mbs)$.
\end{lemma*}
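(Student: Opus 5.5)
The plan is to build $p'$ from $p$ in two stages, first removing everything the attacker can exploit and then symmetrizing, and to check the equilibrium conditions and the loss after each stage. The symmetric equilibrium is $\mbs = [s,\ldots,s,0,\ldots,0]$ with committee $\{1,\ldots,k\}$ and $s\in(0,1)$. By incentive compatibility, each committee member $i\le k$ satisfies $\E_{\mb{d}_{-i}}[p_i(1\mid\mb{d}_{-i}) - p_i(0\mid\mb{d}_{-i})] = 1$. Each non-member $j>k$ satisfies the corresponding inequality $\le 1$.

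\textbf{Stage 1: zero out non-delivery payments and non-members.} For committee members I replace $p_i(1\mid\mb{d}_{-i})$ by $\tilde p_i(1\mid\mb{d}_{-i}) := p_i(1\mid\mb{d}_{-i}) - p_i(0\mid\mb{d}_{-i})$ and set $\tilde p_i(0\mid\cdot)=0$. Two issues need handling here.
\begin{itemize}
\item If a difference $p_i(1\mid\cdot)-p_i(0\mid\cdot)$ is negative, nonnegativity fails. I would instead use the pointwise bound $\tilde p_i(1\mid\cdot) \le p_i(1\mid\cdot)$, clip the differences at zero, and rescale downward so that the expectation is again exactly $1$. This is possible because the clipped expectation is at least $1$ and every term is at most the original payment.
\item Non-members $j>k$ receive $0$. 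Their expectation is then $0 \le 1$, so not delivering is still a best response, and the equilibrium is preserved.
\end{itemize}
In every delivery profile $\mb{d}$, each new payment is pointwise at most the old one, and the penalty term $C\prod(1-d_i)$ is unchanged. So for every $(A,\mb{d}_A)$ the expected cost weakly decreases, and therefore $\ell$ weakly decreases.

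\textbf{Stage 2: symmetrize over the committee.} I would average the rule over all permutations $\sigma$ of $\{1,\ldots,k\}$, setting $\bar p_i(\mb{d}) := \frac{1}{k!}\sum_\sigma \tilde p_{\sigma(i)}(\sigma\mb{d})$. Since honest members mix i.i.d., each permuted rule still gives every member expected delivery payment $1$, so the average does too. The loss is a maximum over attacker sets and actions, so it is invariant under relabeling, and the maximum of an average is at most the average of the maxima. Hence $\ell(\bar p,\mbs)\le \ell(\tilde p,\mbs)$.

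\textbf{Stage 3: collapse to a rule depending only on the count $t=\|\mb{d}\|_1$.} Define $f_t$ as the average total payment over all committee delivery sets of size $t$, split equally among the $t$ deliverers. Two checks are needed.
\begin{itemize}
\item \emph{Equilibrium.} Under i.i.d.\ honest mixing, all delivery sets of a given size are equally likely. The expected payment to a delivering member is therefore $\E[f_{1+X}/(1+X)]$ with $X\sim\text{Binomial}(k-1,s)$. This equals the symmetric rule's expectation, which is $1$.
\item \emph{Loss.} This is the step I expect to be hardest, since an attacker against $p'$ controls which set delivers, not just its size. For a fixed attacker set and number of attacker deliveries, the honest delivery set is uniform given its size. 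What remains is to show that for a count-based rule the worst case depends only on the attacker's number of deliveries, and that this worst case is dominated by averaging the original rule over attacker subsets of that size. Because $\bar p$ is symmetric, for fixed $|A|$ the cost depends only on how many attackers deliver, and averaging over which members deliver leaves it unchanged. Non-member deliveries receive no payment, so the attacker gains nothing from them.
\end{itemize}
I would finish by observing that each of the three steps weakly lowers the loss while keeping $\mbs\in\mcs$, which gives $\ell(p',\mbs)\le\ell(p,\mbs)$.
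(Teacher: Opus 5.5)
Your proof is correct, but it orders the reductions differently from the paper.

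\emph{The paper's route.} It first averages each committee member's payments over profiles with the same delivery count. It then averages across committee members, asserting the loss is unchanged. Only at the end does it move non-deliverers' payments onto the deliverers, keeping the per-count total $\bar f_t$ fixed and dividing by $V\ge 1$ to restore indifference.

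\emph{Your route.} You strip non-delivery payments first, pointwise, using $p_i(0\mid\cdot)\ge 0$. Since every payment only goes down, the loss comparison in Stage 1 is immediate. (The subtraction and clipping are not even needed; zeroing $p_i(0\mid\cdot)$ and dividing $p_i(1\mid\cdot)$ by its expectation, which is at least $1$, works.) You then symmetrize by averaging over permutations of the committee and justify the loss step explicitly. The loss is invariant under relabeling because all committee members mix with the same $s$. It is also a maximum of functions affine in $p$, so averaging cannot increase it.

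\emph{What this buys you.} The paper's step claims $\ell(\bar p,\mbs)=\ell(p,\mbs)$ without argument, and equality can fail for asymmetric rules. Only $\le$ holds, and your convexity-plus-relabeling argument is exactly what proves it. The paper's ordering, in exchange, tracks the per-count totals $\bar f_t$ throughout, which connects directly to the LP formulation.

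Your Stage 3 is easier than you expect, and your write-up of it is looser than it needs to be. A permutation-invariant rule that pays only committee deliverers already gives each deliverer an amount depending only on $t$ whenever non-members are silent. So $p'$ coincides with $\bar p$ on those profiles. Under $p'$, a non-member delivery can only remove the penalty. Hence every attack on $p'$ is weakly dominated by one with silent non-members, and that attack costs the same against $\bar p$. This gives $\ell(p',\mbs)\le\ell(\bar p,\mbs)$ with no further averaging over attacker subsets; that averaging was already done in Stage 2. The relevant attacker statistic is $|A\cap[k]|$, not $|A|$.

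This argument also requires $t$ to count committee deliveries only. With the literal $t=\|\mb{d}\|_1$, a corrupted non-member could inflate $t$ and change $f_t/t$, and $f_t$ is not even defined for $t>k$. The paper's proof relies on the same implicit reading, so state it explicitly.
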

\begin{proof}
    Suppose $p$ implements $\mbs = (s,\ldots,s, 0, \ldots, 0)$ for $s\in(0,1)$ where $k \in [n]$ denotes the number of committee members. First, let $p'_i(\mbd) = 0, \forall i > k$. This weakly lowers the loss of the protocol because the players with index $i > k$ deliver with probability 0 and thus any positive payments to them only increase the cost to the protocol without reducing the probability of a penalty. Further, these payments don't interact with the equilibrium conditions of the $k$ committee members because the index $i > k$ players are playing the pure strategy of never delivering and they will continue to not deliver. 

    Now, we only consider the payments to the committee members with indices $i \leq k$, who are playing the symmetric strategy $s_i = s$. Partition the outcomes of $\mbd$ into sets based on the count of proofs delivered $t=||\mbd||_1.$ Notice that for each player $i$ in the committee, the probability of each outcome that results in exactly $t-1$ other deliveries is identical: $s^{t-1}(1-s)^{k-t}$. Further, the probability of each outcome that results in exactly $t$ other deliveries is: $s^t(1-s)^{k-t-1}$. Thus, consider an intermediate payment rule $\hat{p}$, where the payment received by player $i$ in the case that there are exactly $t$ deliveries is,
    \begin{align*}
        \hat{p}_i(\mbd) = \begin{cases}
            p_{i,t,1} & \text{if } i \leq k, d_i = 1, ||\mbd||_1 = t \\
            p_{i,t,0} & \text{if } i \leq k, d_i = 0, ||\mbd||_1 = t,
        \end{cases}
    \end{align*}
    where $p_{i,t,1}$ and $p_{i,t,0}$ are the average payment that player $i$ receives under the original payment rule $p$ if they deliver or don't deliver respectively and there are a total of $t$ deliveries:
    \begin{align*}
        p_{i,t,1} &= \frac{1}{\binom{k}{t-1}} \sum_{\substack{\mbd : ||\mbd||_1 = t\\d_i =1}} p_i(\mbd), \quad p_{i,t,0} = \frac{1}{\binom{k}{t}} \sum_{\substack{\mbd : ||\mbd||_1 = t\\d_i=0}} p_i(\mbd).
    \end{align*}

    Under the intermediate payment rule $\hat{p}$, player $i$ has exactly the same expected payment and thus $s$ continues to be an equilibrium strategy. Next, we average the payments \textit{over all committee members}. Define $\bar{p}_{t,1}$ and $\bar{p}_{t,0}$ to be the average payments for delivering and not delivering among the committee members conditioned on exactly $t$ deliveries:
    \begin{align*}
        \bar{p}_{t,1} = \frac{1}{k} \sum_{i=1}^k p_{i,t,1}, \quad \bar{p}_{t,0} = \frac{1}{k} \sum_{i=1}^k p_{i,t,0}, 
    \end{align*}
    If we pay all of the committee members according to $\bar{p}$ (e.g., if there are $t$ deliveries, pay each of the deliverers $\bar{p}_{t,1}$ and each of the non-deliverers $\bar{p}_{t,0}$), then we have exactly the same payments as $p$. Since the equilibrium has not changed, we have $\ell(\bar{p}, \mbs) = \ell(p,\mbs)$. 

    To conclude, we perform one more reduction where we remove the payments made to non-deliverers. Let $\bar{f}_t$ be the \textit{aggregate} payments made to all players in the event of exactly $t$ deliveries: $\bar{f}_t = t \cdot \bar{p}_{t,1} + (k-t)\cdot \bar{p}_{t,0}$. Consider the payment rule where only the $t$ deliverers are paid exactly $\bar{f}_t /t$ and the non-deliverers are paid 0. This preserves the total payment made by the protocol, but might make delivering too attractive. In particular, the expected reward from delivering is now
    \begin{align*}
        V := \sum_{t=1}^{k} \binom{k-1}{t-1} s^{t-1}(1-s)^{k-t} \frac{\bar{f}_t}{t}. \
    \end{align*}
    Our concern about overpaying for deliverers is encoded by $\bar{f}_t/t = \bar{p}_{t,1} + \frac{k-t}{t} \bar{p}_{t,0} \geq \bar{p}_{t,1}.$ That is, when we pay $\bar{f}_t/t$ only to the deliverers, they are weakly better off delivering under this payment rule than they were delivering under the $\bar{p}$ payment rule. In order to ensure $s$ is still an equilibrium, the expected value of payments conditioned on delivery must be \textit{exactly} 1, because we are not paying anything to the non-deliverers. Further, under the $\bar{p}$ rule, we had the following indifference condition:
    \begin{align*}
        \sum_{t=1}^{k} \binom{k-1}{t-1} s^{t-1}(1-s)^{k-t}(\bar{p}_{t,1} - \bar{p}_{t-1,0}) =1 ,
    \end{align*}
    where the expected value of delivering less the expected value of not delivering is exactly 1. Since $\bar{f}_t/t \geq \bar{p}_{t,1} \geq \bar{p}_{t,1} - \bar{p}_{t-1,0},$ we scale the final payment rule by dividing by $V$: 
    \begin{align*}
        p_i'(\mbd) &= \begin{cases}
            \frac{\bar{f}_t}{Vt} &\text{if } d_i = 1, ||\mb{d}||_1 = t, i \leq k \\ 
            0 & \text{otherwise}.
        \end{cases}
    \end{align*}
    Since the non-deliverers are not paid, we only have to check the indifference condition of delivery:
    \begin{align*}
        \sum_{t=1}^k \binom{k-1}{t-1} s^{t-1}(1-s)^{k-t} \frac{\bar{f}_t}{Vt} = V/V = 1.
    \end{align*}
    Since $V \geq 1$, $\frac{\bar{f}_t}{t}$ preserves the exact payments of $p$, and $\mbs$ is still an equilibrium, we have $\ell(p', \mbs) \leq \ell(p, \mbs)$. To match the final form of the lemma statement, simply let $f_i = \bar{f}_i / V$.
\end{proof}

\subsection{Expanded form of \eqref{prog:lp-symmetric}}\label{app:lp-symmetric-expanded}
\begin{align}
    &\min_{f_1,\ldots f_n \geq 0} t \tag{LP1}\\
    \text{s.t., \;\;} & \sum_{i=0}^h \binom{h}{i}s^{i}(1-s)^{h-i}f_{a+i} \leq t\tag{attack delivers $a$}\\ 
    & \sum_{i=0}^h \binom{h}{i}s^{i}(1-s)^{h-i}f_{a-1+i} \leq t\tag{attack delivers $a-1$}\\
    &\vdots \tag{attacker delivers with $\{a-2, a-3, \ldots, 1\}$} \\
    &\sum_{i=1}^h \binom{h}{i}s^{i}(1-s)^{h-i}f_{i} + C(1-s)^h \leq t\tag{attack delivers $0$}\label{eq:attack-delivers-zero}\\
    &\sum_{i=1}^{n} \binom{n-1}{i-1}s^{i-1}(1-s)^{n-i} \frac{f_i}{i} = 1 \tag{equilibrium condition}
\end{align}

\subsection{Proof of \Cref{lemma:symmetric-implementable}}\label{app:symmetric-implementable-proof}
\begin{lemma*}[Optimal symmetric implementability]
    For a given $(h,a,n)$, a symmetric minimizer of $g$ is implementable if 
    \begin{align*}
        C \geq \frac{hn(h+1)^{n-1}}{(h+1)^a - 1}.
    \end{align*}
\end{lemma*}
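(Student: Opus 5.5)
The plan is to reduce implementability to a sign condition on the unique solution of the triangular system \eqref{prog:linear-system}, and then to show that this sign condition holds once $s$ is large enough. I then translate ``$s$ large enough'' into the stated lower bound on $C$ using the equality characterization of the minimizer (\Cref{lemma:minimizer-g-equality}).

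\textbf{Step 1 (reduction).} By \Cref{lemma:symmetric-anonymous}, it suffices to consider anonymous rules $f_1,\dots,f_n\ge 0$ with $f_1=\dots=f_h=0$, as in \Cref{def:feasible-linear-system}. The system \eqref{prog:linear-system} is lower triangular with diagonal entries $s^h>0$. It therefore has a unique solution $f_{h+1},\dots,f_n$ for $t=C(1-s)^h$, namely the right branch of $g$.

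I will then check that this solution automatically satisfies the equilibrium condition of \eqref{prog:lp-symmetric}. The total expected payment under honest play is $\E[f_{\|\mbd\|_1}]$, and it is also $n s\cdot\E[f_{1+X}/(1+X)]$ with $X\sim\text{Binomial}(n-1,s)$. The attacker-delivers-$i$ constraints average to honest play when $A$ mixes at rate $s$. Hence honest expected payment plus $C(1-s)^n$ equals $t$. At the minimizer, \Cref{lemma:minimizer-g-equality} gives $ns+C(1-s)^n=C(1-s)^h=t$, so the expected payment is exactly $ns$ and the indifference condition holds with value $1$.

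So the only remaining issue is nonnegativity of $f_{h+1},\dots,f_n$. (I treat the full committee $k=n$; a committee of size $k$ works identically with $n$ replaced by $k\le n$, which only weakens the required bound.)

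\textbf{Step 2 (sign of the triangular inverse; main obstacle).} Write $r:=(1-s)/s$ and $u_j:=f_{h+j}/t$. The $j$-th equation reads
\[
\sum_{i} \binom{h}{i} r^{\,h-i} u_{i+j-h} = s^{-h},
\]
a convolution of $(u_j)$ with the kernel $(1+r x)^h$ (reversed). Formally, then,
\[
U(x)=\frac{s^{-h}}{(1-x)(1+rx)^h},
\]
with the denominators interpreted after the appropriate index shift. The coefficients of $(1+rx)^{-h}$ alternate in sign with magnitude $\binom{h+m-1}{m}r^m$, so $u_j$ is a partial alternating sum.

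I would first try to show by induction that if $hr\le 1$, i.e.\ $s\ge h/(h+1)$, then consecutive partial sums are monotone, so the alternating sums stay nonnegative. The alternative is a direct induction showing $u_{j+1}\ge u_j\cdot(1-hr)\ge 0$ from the recursion. This is the step I expect to be hardest. If the clean ratio argument fails, I would instead bound the negative contributions by the geometric series $\sum_m (hr)^m$ and absorb them into the leading term.

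\textbf{Step 3 (translate to $C$).} Put $x=1-s$. The equality constraint of \eqref{prog:g-constrained} for the full symmetric vector gives
\[
C=\frac{n(1-x)}{x^h\,(1-x^a)} .
\]
The right-hand side is strictly decreasing in $x\in(0,1)$. Hence $x\le 1/(h+1)$, equivalently $s\ge h/(h+1)$, holds iff $C$ is at least the value at $x=1/(h+1)$, which is
\[
\frac{n\,\tfrac{h}{h+1}\,(h+1)^h}{1-(h+1)^{-a}}=\frac{hn(h+1)^{n-1}}{(h+1)^a-1}.
\]
Combining this with Step 2 yields nonnegative $f$'s that attain $\ell(p',\mbs)=g(\mbs)$, which establishes implementability in the sense of \Cref{def:implementable}.
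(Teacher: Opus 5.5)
Your proposal is correct and follows essentially the paper's route: you solve the triangular system to get $f_{h+j}$ as a partial alternating sum of $\binom{h+m-1}{m}r^m$, require $s\ge h/(h+1)$, and convert this to the bound on $C$ via the equality constraint. Your explicit check that the indifference condition holds automatically, and your monotonicity-in-$x$ argument, fill in points the paper leaves implicit.

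The step you leave open in Step 2 closes exactly as in the paper. The ratio of consecutive magnitudes is $\frac{h+m}{m+1}r\le hr\le 1$, so pairing consecutive terms makes every partial sum nonnegative. Your geometric-series fallback would only handle $hr<1/2$ and would not reach the stated threshold.
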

\begin{proof}
     \eqref{prog:linear-system} can be solved by setting $f_{h+1} = C\left(\frac{1-s}{s}\right)^h$ (from the first two equations in the system), and then recursively solving $f_{h+i}$ by substituting $f_{h+j}$ for $j <i$ into the equation. Algebraically, this yields the following relation:
    \begin{align*}
        f_{h+i} &= C \left(\frac{1-s}{s}\right)^h \cdot \sum_{j=0}^{i-1}(-1)^j \binom{h+j-1}{j} \cdot \left(\frac{1-s}{s}\right)^j.
    \end{align*}
    
    Since the values of $f_{h+1}, f_{h+2}, \ldots, f_{h+a}$ are fully determined, the feasibility check needs to confirm that all of their values are positive. Thus we need to evaluate the sign of the alternating sum. The magnitude of the $j^{th}$ term of the sum is $\binom{h+j-1}{j} (\frac{1-s}{s})^j$. We will consider the pair-wise ratio of the odd (negative) and even (positive) terms. This ratio is
    \begin{align*}
        \frac{\binom{h+(j+1)-1}{j+1} \left(\frac{1-s}{s}\right)^{j+1}}{\binom{h+j-1}{j} \left(\frac{1-s}{s}\right)^j} = \frac{h+j}{j+1} \cdot \frac{1-s}{s},
    \end{align*}
    which is weakly decreasing in $j$, so it is maximized at $j=0$, which gives $h \cdot \frac{1-s}{s}.$ For the sum to be positive, this ratio must be less than 1 (i.e., the positive term is larger than the negative terms) and the monotonicity in $j$ ensures that the remaining terms are also less than 1, which yields
    \begin{align*}
        h \cdot \frac{1-s}{s} \leq 1 \implies s \geq \frac{h}{h+1}.
    \end{align*}
    Further, consider any value of $h \cdot \frac{1-s}{s} > 1.$ Then evaluate $f_{h+2}$:
    \begin{align*}
        f_{h+2} = C \left(\frac{1-s}{s}\right)^h \left(1-h\cdot \frac{1-s}{s}\right). 
    \end{align*}
    Then $h \cdot \frac{1-s}{s} > 1 \implies f_{h+2} < 0$. This implies that $s \geq \frac{h}{h+1}$ is both a necessary and sufficient condition for of $f_{h+i} > 0$ and thus the symmetric equilibrium to be implementable. From the equality condition on the minimizers of $g$ (\Cref{lemma:minimizer-g-equality}), we have 
    \begin{align*}
        ns + C(1-s)^{n} = C(1-s)^h \implies C = \frac{ns}{(1-s)^h(1-(1-s)^a)}.
    \end{align*}
    Evaluating this at the boundary point $s= \frac{h}{h+1},$
    \begin{align*}
        C &= \frac{n(\frac{h}{h+1})}{(1-(\frac{h}{h+1}))^h(1-(1-(\frac{h}{h+1}))^a)} \\ 
        &= \frac{hn(h+1)^{h+a-1}}{(h+1)^a-1},
    \end{align*}
    which is the desired bound.
\end{proof}

\subsection{Lottery payment rules}\label{app:lottery-payment-rules}

\begin{definition}[Lottery payment rule]\label{def:lottery}
    A lottery payment rule sets a prize with magnitude $P$, which is delivered at random to a single deliverer if there are proofs delivered. Under no deliveries, no payment is made. More formally, let $D \subseteq [n]$ denote the set of deliverers, where $|D| = ||\mbd||_1$. Choose the ``winner'' $w \in D$ index uniform randomly from the deliverers. Then the lottery payment rule $p$ is defined as
    \begin{align*}
        p_i(\mbd) = \begin{cases}
            P & \text{if } ||\mbd||_1 > 0,  i =w \\
            0 & \text{otherwise}.
        \end{cases}
    \end{align*}
\end{definition}

The following lemma shows that this simple and intuitive payment rule is a (multiplicative) $2-$approximation of optimal.

\begin{lemma}[Lottery payment rules are a $2-$approximation of symmetric $\opt$]\label{lemma:lottery-2-approx}
    If the minimizer of $g$ is a symmetric committee equilibrium $\mbs^*$ (\Cref{def:symmetric}), then there exists a lottery payment rule $p$ (\Cref{def:lottery}) with a loss that is a two-approximation of $\opt$ where $\mbs^*$ is still an equilibrium: $2 \ell(p, \mbs^*) \leq g(\mbs^*) \leq \opt.$   
\end{lemma}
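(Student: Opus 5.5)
We read the displayed inequality as the intended bound $\ell(p,\mbs^*) \leq 2\, g(\mbs^*) \leq 2\,\opt$; the printed $2\ell(p,\mbs^*) \leq g(\mbs^*)$ is presumably a typo for this. The plan is to use a lottery restricted to the committee, calibrated so that $\mbs^*$ is an equilibrium. We then bound its worst-case loss by the sum of two terms, the prize and the penalty term, and show that each term is at most $g(\mbs^*)$.

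\textbf{Step 1 (construction).} Let $\mbs^* = [s,\ldots,s,0,\ldots,0]$ with committee size $k$. Run the lottery of \Cref{def:lottery} only among committee members who deliver; non-committee members are never paid. For a committee member who delivers, the number of other deliverers is $X\sim\text{Binomial}(k-1,s)$, so her expected payment is $P\,\E[1/(1+X)]$. The standard identity $\E[1/(1+X)] = \frac{1-(1-s)^k}{ks}$ gives this value. Setting
\[
P := \frac{ks}{1-(1-s)^k}
\]
makes each committee member indifferent between delivering and not, so mixing with probability $s$ is a best response. A non-committee member who deviates receives nothing, so she weakly prefers not delivering, which is consistent with $s_i=0$. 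Hence $\mbs^*\in\mcs(p)$.

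\textbf{Step 2 (loss bound).} Under any attacker $(A,\mb{d}_A)$, the protocol pays exactly $P$ whenever at least one proof arrives, and pays nothing otherwise. Extra adversarial deliveries therefore never raise the payment. The penalty is incurred only if no one delivers. That probability is at most $(1-s)^{k-a}$: the worst case has the attacker corrupting $a$ committee members, leaving $k-a$ honest mixers, or $0$ of them if $k\le a$. Therefore
\[
\ell(p,\mbs^*) \le P + C(1-s)^{k-a},
\]
and the second term is exactly the right branch of $g$, i.e.\ $g(\mbs^*)$.

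\textbf{Step 3 (the prize is at most $g$).} This is the only step needing care. By \Cref{lemma:minimizer-g-equality}, $ks + C(1-s)^k = C(1-s)^{k-a}$, which rearranges to $ks = C(1-s)^{k-a}\bigl(1-(1-s)^a\bigr)$. Substituting into $P$ gives
\[
P = C(1-s)^{k-a}\,\frac{1-(1-s)^a}{1-(1-s)^k} \le C(1-s)^{k-a} = g(\mbs^*),
\]
where the inequality uses $a\le k$. (In the degenerate case $k<a$, the inequality $P\le ks\le g$ follows directly from the left branch of $g$.) Combining Steps 2 and 3 gives $\ell(p,\mbs^*)\le 2g(\mbs^*)$. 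Finally, $g(\mbs^*) = \opt2 \le \opt1$ by \Cref{lemma:lb}, which completes the argument.

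The main point to verify carefully is the equilibrium identity for $\E[1/(1+X)]$ together with the worst-case attacker accounting in Step 2. Once the equality from \Cref{lemma:minimizer-g-equality} is used, the remaining algebra is short.
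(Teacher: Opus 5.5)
Your proposal is correct and follows essentially the paper's proof: the same calibrated prize $P$ (the paper treats only the full committee $k=n$, so $P=\frac{ns}{1-(1-s)^n}$), and the same use of \Cref{lemma:minimizer-g-equality} to compare $P$ with $g(\mbs^*)=C(1-s)^{k-a}$ through a ratio of the form $\frac{1-(1-s)^a}{1-(1-s)^k}\le 1$. The differences are that you bound $\ell\le P+g(\mbs^*)$ instead of computing the exact loss $P\bigl(1-(1-s)^h\bigr)+C(1-s)^h$ against the all-non-delivering attacker, and that you allow a general committee size $k$; your reading of the display as $\ell(p,\mbs^*)\le 2g(\mbs^*)$ is also what the paper actually proves.

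One small slip: your parenthetical for $k<a$ claims $P\le ks$, which is backwards, since $P=\frac{ks}{1-(1-s)^k}\ge ks$. That case is vacuous for $C>1$, because a minimizer has $g(\mbs^*)<C$, which forces $k>a$. If it did occur, the equality condition would give $P=C=g(\mbs^*)$ anyway.
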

\begin{proof}
Let $\mbs$ denote the symmetric minimizer of $g$. From the equality condition on the minimizers of $g$ (\Cref{lemma:minimizer-g-equality}), we have 
\begin{align}
    C(1-s)^h = ns + C(1-s)^{h+a} &\implies ns = C(1-s)^h \left(1- (1-s)^a\right) \nonumber\\
    & \implies C(1-s)^h = \frac{ns}{1-(1-s)^a} \label{eq:c-ns-relation}.
\end{align}
This is the optimizing value of $g(\mbs)$, 
\begin{align*}
    g(\mbs) &= \frac{ns}{1-(1-s)^a}.
\end{align*}
Choose a lottery prize of 
\begin{align}\label{eq:lottery-prize-size}
    P = \frac{ns}{1-(1-s)^n},
\end{align}
which ensures that all $n$ players are indifferent between delivering and not (hence can play a mixed strategy) and $\mbs$ is an equilibrium: 
\begin{align*}
	\sum_{i=1}^n \binom{n-1}{i-1} s^{i-1}(1-s)^{n-i} \cdot \frac{P}{i} = 1.
\end{align*}
Notice that against a lottery, the best action the attacker can do is not deliver by setting $d_i = 0, \forall i \in A$. Thus the lottery achieves a loss of
\begin{align*}
    \ell(p,\mbs) = P \left(1-(1-s)^h\right) + C (1-s)^h.
\end{align*}
Evaluating the relative error of $\ell(p,\mbs)$ against $g(\mbs)$ and rewriting $C(1-s)^h$ from \eqref{eq:c-ns-relation} gives
\begin{align*}
    \frac{\ell(p,\mbs) - g(\mbs)}{g(\mbs)} &= \frac{\frac{ns}{1-(1-s)^n} \left(1-(1-s)^h\right) + \frac{ns}{1-(1-s)^a} - \frac{ns}{1-(1-s)^a}}{\frac{ns}{1-(1-s)^a}} \\ 
    &= \frac{\big(1-(1-s)^a\big) \big(1-(1-s)^h\big)}{1-(1-s)^n}  \leq 1,
\end{align*}
where the last step uses $(1-p)(1-q) \leq (1-pq), \forall p,q\in[0,1)$. A relative error of 1 corresponds to a $2-$approximation of the optimal. 
\end{proof}

\subsection{Proof of \Cref{lem:a1}}\label{app:a1-proof}
\begin{lemma*}[For $a=1$, designated is always optimal]
    For all $C>1, h \geq 1$, if $a=1$ then the optimal designated equilibrium is always better than the optimal symmetric: $D_j^* < S_k^*$.
\end{lemma*}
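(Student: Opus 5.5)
The plan is a direct comparison. For each symmetric candidate I will exhibit a designated candidate of strictly smaller value. With $a=1$, \Cref{def:minimizing-losses-g} gives the symmetric value at honest-committee size $h_k$ (committee size $m=h_k+1$) as $S(m)=ms+C(1-s)^{m}$. Here $s$ solves
\begin{align*}
ms + C(1-s)^{m} = C(1-s)^{m-1},
\end{align*}
equivalently $ms = C s(1-s)^{m-1}$, i.e.\ $m = C(1-s)^{m-1}$ when $s>0$. The resulting value is therefore $S = C(1-s)^{m-1} = m$. So with one attacker, the optimal symmetric value at committee size $m$ is exactly $m$ (when a solution with $s\in(0,1)$ exists, which requires $C>m$). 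Hence $S_k^*\ge 2$ whenever the smallest admissible committee ($h_k=1$, $m=2$) is used.

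For the designated side, take $h_j=1$, so the committee has size $j=a+h_j=2$: the designate plus one backup mixing with probability $s$. The constraint is $1+s = C(1-s)$, so $s=(C-1)/(C+1)\in(0,1)$ for every $C>1$. The resulting value is $D = 1+s = 2C/(C+1) < 2$. Therefore $D_j^*\le 2C/(C+1)<2\le S_k^*$.

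Two details need checking.

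\emph{First}, the symmetric value must be at least $2$ for every admissible $h_k\in[h]$. Since $m=h_k+1\ge 2$ and $S=m$, this holds. The degenerate case $s=0$ is excluded by \Cref{def:symmetric}, which requires $s\in(0,1)$. It also fails the constraint anyway: it would force $C=C$ with value $C$, and I would argue that value is not a minimizer; if it is counted, its value $C$ is still larger than $2C/(C+1)$ because $C>1$. The case where no symmetric solution exists is vacuous (the infimum is $+\infty$).

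\emph{Second}, the designated candidate must be admissible: $h\ge1$ gives $h_j=1\in[h]$, and $k=1\le n-1$.

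The main obstacle is the simplification $S=m$, which relies on dividing the constraint by $s$. I will verify it carefully, noting that $(1-s)^{m}-(1-s)^{m-1} = -s(1-s)^{m-1}$. Once that identity is in hand, the strict inequality $2C/(C+1)<2\le m$ finishes the proof.
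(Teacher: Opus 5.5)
Your proof is correct and follows the paper's argument: both reduce the $a=1$ symmetric constraint to $m = C(1-s)^{m-1}$, so the symmetric value is exactly $h_k+1\ge 2$ (or $C$ at $s=0$), and both compare this against the designated candidate $h_j=1$, whose value $2C/(C+1)$ is strictly below $\min\{C,2\}$. One correction: $s=0$ does not fail the symmetric constraint (it satisfies it trivially), so it must be counted, as the paper does, but your fallback comparison $2C/(C+1)<C$ already covers it; also, using $h_j=1$ only as an upper bound on $D_j^*$ is enough and avoids the paper's unproved claim that $h_j=1$ is the designated minimizer.
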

\begin{proof}
    First, we simplify the constraint of $S_k^*$ with $a=1$,
    \begin{alignat*}{2}
        &(h_k + 1)s + C(1-s)^{h_k+1} &&=  C(1-s)^{h_k} \\
        &\implies (h_k+1)s &&= C(1-s)^{h_k}(1-(1-s)) \\ 
        &\implies (h_k+1)s &&= C(1-s)^{h_k}s \\
    \end{alignat*}
    If $s=0$, then $S_k^*=C$ and if $s > 0$, then we have $S_k^* = h_k+1$, which is minimized at $h_k=1$. Thus,
    \begin{align*}
        S_1^* &= \min \{C, 2\}.
    \end{align*}
    Next, we turn to the designated constraint fixing $a=1$. This gives us 
    \begin{align*}
        1 + h_j s = C(1-s)^{h_j},
    \end{align*}
    which has a unique solution $s^* \in (0,1)$. The value of the constraint is minimized with $h_j =1$, so we get
    \begin{align*}
        1+s = C(1-s) \implies s = \frac{C-1}{C+1}, \quad D_1^* = \frac{2C}{C+1}.
    \end{align*}
    For $C>1$, it is true that 
   \begin{align*}
       D_1^* = \frac{2C}{C+1} < \min\{C, 2\} = S_1^*.
   \end{align*}
\end{proof}

\subsection{Proof of \Cref{lemma:desig-small}}\label{app:desig-small-proof}
\begin{lemma*}[Minimizer of $g$ is designated for $C < 1+1/a$]
    For very small penalty values $C < 1+1/a$, the minimizing vector $\mbs$ is a designated equilibrium.    
\end{lemma*}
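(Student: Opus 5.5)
By \Cref{lemma:shape}, the minimizer of $g$ is either designated or symmetric. So it suffices to show that when $C<1+1/a$ every symmetric candidate is beaten by some designated candidate. Concretely, I will show $D_j^* < S_k^*$ in the notation of \Cref{def:minimizing-losses-g}.

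The plan is to bound each side separately.

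\textbf{Lower bound on the symmetric value.} For a symmetric committee of size $k=h_k+a$ with $s\in(0,1)$, the equality constraint gives
\begin{align*}
    ks = C(1-s)^{h_k}\bigl(1-(1-s)^a\bigr).
\end{align*}
This matches the computation in the proof of \Cref{lemma:symmetric-implementable}. Since $1-(1-s)^a \le as$, we get $ks \le C(1-s)^{h_k}\,as$. Dividing by $s>0$ yields
\begin{align*}
    h_k+a \le Ca(1-s)^{h_k} \le Ca < a+1,
\end{align*}
using the hypothesis $C<1+1/a$. Because $h_k\ge 1$, this is impossible. Hence no symmetric point with $s>0$ satisfies the equality constraint.

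The only remaining symmetric option is the degenerate $s=0$, which gives value $C$. I will also note that a minimizer of \eqref{prog:g-constrained} cannot be strictly in one branch (\Cref{lemma:minimizer-g-equality}), so the symmetric family contributes at best $S_k^*\ge C$.

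\textbf{Upper bound on the designated value.} Take a designated equilibrium with $h_j=1$, i.e.\ committee size $a$ plus the designate. Solve $1+as = C(1-s)$ to get
\begin{align*}
    s=\frac{C-1}{C+a}, \qquad D^* \le 1+as = \frac{C(1+a)}{C+a}.
\end{align*}
This is strictly less than $C$ whenever $C>1$, since $1+a < C+a$. It follows that $D_j^* < C \le S_k^*$, so the minimizer is designated. For $C\le 1$ the designated vector $[1,0,\dots,0]$ with cost $1$, or the trivial all-zero vector with cost $C$, handles things; I will treat this as a degenerate side remark.

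\textbf{Main obstacle.} The delicate step is the symmetric case. First, I must confirm that the inequality $1-(1-s)^a\le as$ is applied in the right direction; it is Bernoulli's inequality. Second, I must handle the boundary $s\to 0$ carefully, where the constraint degenerates and the symmetric value tends to $C$.

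I also need to check that comparing $D$ against the value $C$ is legitimate. The value $C$ is exactly the value of $g$ at $\mbs=\mb{0}$, which is neither shape in the strict sense of \Cref{def:symmetric}. So I will argue directly that $g(\mbs)\ge C$ for every strictly symmetric $\mbs$ when $C<1+1/a$. The reason is that the left branch $ks + C(1-s)^k$ then exceeds the right branch $C(1-s)^{h_k}$, since $ks > C(1-s)^{h_k}(1-(1-s)^a)$ by the computation above. The left branch is at least $\min(C, \cdot)$ by convexity in $s$, which I will verify by checking that its derivative at $0$ is $k(1-C)$ and that it is increasing when $C$ is near $1$.
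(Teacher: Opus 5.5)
Your core argument is the paper's: substitute the symmetric shape into the equality constraint, use $1-(1-s)^a\le as$ to force $h_k+a<a+1$, and conclude with \Cref{lemma:shape}. Your extra step genuinely improves on the paper. The paper simply declares symmetric points ``infeasible,'' but $s=0$ (the zero vector) does satisfy the equality constraint, with value $C$. Your explicit designated point $s=(C-1)/(C+a)$, with value $C(1+a)/(C+a)<C$, is exactly what rules it out, giving $D_j^*<C=S_k^*$ cleanly.

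Drop your final paragraph, though: the claim that $g(\mbs)\ge C$ for every strictly symmetric $\mbs$ is false. You correctly show that $g=L=ks+C(1-s)^k$ at such points. But $L'(0)=k(1-C)<0$ once $C>1$, so $g<C$ for small $s$. For example, $a=2$, $h=1$, $C=1.4$, $s=0.1$ gives $g\approx 1.32<1.4$.

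This does not hurt the proof. Those points lie strictly in the left branch, so \Cref{lemma:minimizer-g-equality}, which you already invoke, excludes them as minimizers. The comparison with $C$ is only needed for the degenerate point.

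Likewise, your $C\le 1$ side remark does not prove the statement. For $C<1$ the zero vector (value $C$) strictly beats every vector with $s_1=1$ (value at least $1$). So the lemma genuinely requires $C>1$, as the paper implicitly assumes.
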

\begin{proof}
    For a symmetric equilibrium $\mbs$ to be optimal, we know from \Cref{lemma:minimizer-g-equality} that the following equality holds:
    \begin{align*}
        C(1-s)^h = ns + C(1-s)^{n} &\implies n = C(1-s)^h \frac{\left(1- (1-s)^a\right)}{s}.
    \end{align*}
    For all $s \in(0,1), h \geq 1$, we have $(1-s)^h < 1$ and $\left(1- (1-s)^a\right) \leq sa$, thus we can apply the equality constraint:
    \begin{align*}
        n &= C(1-s)^h \frac{\left(1- (1-s)^a\right)}{s} \\ 
        &< Ca \implies C >n/a.
    \end{align*}
    Since $n = a+h$, we also have $C > 1 + h/a$ which implies for all $h \geq 1$, $C > 1 + 1/a$. Since the minimizer is only symmetric or designated (\Cref{lemma:shape}) and the symmetric equilibrium isn't feasible for $C < 1 + 1/a$, the minimizer $\mbs$ must be designated.
\end{proof}

\subsection{Proof of \Cref{lemma:sym-big}}\label{app:sym-big-proof}
\begin{lemma*}[Minimizer of $g$ is symmetric for $C \to \infty$]
    For large penalty values $C \to \infty$, the minimizing vector $\mbs$ is a symmetric equilibrium. 
\end{lemma*}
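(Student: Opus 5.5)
We will compare $D_j^*$ and $S_k^*$ from \Cref{def:minimizing-losses-g} as $C\to\infty$, with $a\ge 2$ fixed (the case $a=1$ is handled separately by \Cref{lem:a1}). By \Cref{lemma:shape} the minimizer of $g$ is either designated or symmetric. It therefore suffices to show that $S_k^* < D_j^*$ for all sufficiently large $C$. Since each quantity is a minimum over committee sizes, it is enough to exhibit \emph{one} symmetric committee whose value beats the \emph{best} designated committee.

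\textbf{Designated side (lower bound).} For any honest committee size $h_j\in[h]$, the constraint reads $1+(h_j+a-1)s = C(1-s)^{h_j}$. As $C\to\infty$, the root satisfies $s\to 1$. Write $\delta=1-s$; then $\delta^{h_j}\approx (h_j+a)/C$, so $\delta = \Theta(C^{-1/h_j})\to 0$. The resulting value is
\begin{align*}
1+(h_j+a-1)s = h_j+a - (h_j+a-1)\delta \;\to\; h_j+a.
\end{align*}
Hence $D_j^* \to \min_{h_j}(h_j+a)=a+1$, approached from below, with a deficit of order $(a)\,(a+1)^{1}/C$ when $h_j=1$. The precise claim we need is $\liminf_{C\to\infty} D_j^* \ge a+1-o(1)$; in fact $D_j^*\ge a+1-O(C^{-1/h})$ uniformly over $h_j$, since each $h_j+a-(h_j+a-1)\delta \ge a+1-a\delta$.

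\textbf{Symmetric side (upper bound).} Pick honest committee size $h_k=1$, i.e.\ committee size $k=a+1$. The constraint is $(a+1)s + C(1-s)^{a+1} = C(1-s)$, equivalently $(a+1)s = C(1-s)\bigl(1-(1-s)^{a}\bigr)$, exactly as in \Cref{lemma:lottery-2-approx}. The resulting value is
\begin{align*}
C(1-s)=\frac{(a+1)s}{1-(1-s)^a}.
\end{align*}
As $C\to\infty$, the root again has $s\to 1$ with $1-s\approx (a+1)/C$, so the value tends to $a+1$, now approached from above by roughly $(a+1)(1-s)^a \approx (a+1)^{a+1}/C^a$. This is not below the designated value, so we must pick a better symmetric committee. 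Larger $h_k$ pushes the limit to $h_k+a$, which is worse. This suggests that at $C\to\infty$ both sides tend to $a+1$, and the comparison is decided by second-order terms.

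\textbf{Main obstacle.} The main obstacle is this second-order comparison, and we are unsure the $h_k=1$ symmetric choice is the right competitor. Instead we will consider small $s$ and a large committee, i.e.\ the regime of \Cref{examples:h2n5}, where the symmetric optimizer has $s<1$ but shrinking committee. The plan is to show that the designated optimum is bounded below by a quantity growing like $\log C$ (via \Cref{lem:designated-loss-asymptotics}, which gives $\Theta(\log C/\tau)$), while some symmetric committee achieves strictly smaller leading constant. Concretely, we will parametrize both families by $x=-\log(1-s)$ and use the constraint to express loss as a function of $x$ and committee size. We will compare the minima to leading order in $\log C$, mirroring the computation behind \Cref{lemma:limit-ct}. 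If the leading constants coincide, we will expand one further order and check the sign, using $a\ge2$ to obtain strict inequality (noting it fails at $a=1$, consistent with \Cref{lem:a1}).
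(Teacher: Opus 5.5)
There is a genuine gap. The proposal never establishes $S_k^*<D_j^*$, because a sign error in your symmetric-side expansion leads you to discard the right competitor. With $\eps=1-s$, your own formula gives
\[
C(1-s)=\frac{(a+1)s}{1-(1-s)^a}=\frac{(a+1)(1-\eps)}{1-\eps^a}=\frac{a+1}{1+\eps+\cdots+\eps^{a-1}}=(a+1)(1-\eps)+O(\eps^a).
\]
So for $a\ge 2$ the $h_k=1$ symmetric value approaches $a+1$ from \emph{below}, with deficit about $(a+1)\eps\approx(a+1)^2/C$. You dropped the factor $s=1-\eps$ in the numerator and kept only the higher-order $+(a+1)\eps^a$ term.

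The exact designated value at $h_j=1$ is $D_1^*=C(a+1)/(C+a)=a+1-a(a+1)/(C+a)$. Hence the symmetric committee wins by roughly $(a+1)/C$. The $O(C^{-a})$ remainder is negligible exactly when $a\ge 2$; at $a=1$ the geometric sum is $1$ and $S_1^*=2>D_1^*$, matching \Cref{lem:a1}.

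This is precisely the paper's route. Both losses are $C\eps$ and $C\eps'$ with $\eps'=(a+1)/(C+a)$. The map $\eps\mapsto(a+1)/\bigl(\eps(1+\eps+\cdots+\eps^{a-1})\bigr)$ is decreasing, and at $\eps'$ it is at most $(C+a)/(1+\eps')=(C+a)^2/(C+2a+1)<C$ for $C>a^2$. Therefore the symmetric root satisfies $\eps<\eps'$.

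To close the argument you also need that $D_j^*$ is attained at $h_j=1$ for large $C$, which holds because every $h_j\ge 2$ tends to $h_j+a\ge a+2$. Your uniform bound $D_j^*\ge a+1-O(C^{-1/h})$ is too coarse to compare against a margin of order $1/C$, so use the exact $h_j=1$ value instead.

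The fallback plan in your last paragraph would not work. For fixed $(h,a,n)$, your own first step shows $D_j^*\to a+1$, so the designated optimum is bounded and cannot grow like $\log C$. \Cref{lem:designated-loss-asymptotics} is derived in the $n\to\infty$ continuum limit with $s=t/n$, and is stated only as an upper bound; that is a different limit from $C\to\infty$ at fixed $n$. Likewise, the optimizers are not in a small-$s$ regime: as $C\to\infty$, every mixing probability tends to $1$ and the support shrinks to $a+1$ players. These are exactly the $h_j=h_k=1$ configurations you first wrote down. So the second-order comparison you called the main obstacle is easy once the expansion is corrected, and it is the whole proof.
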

\begin{proof}
    At $C \to \infty$, the protocol will ensure that it never faces the catastrophic loss of receiving no proofs. Thus, both the optimal designated and symmetric equilibria will converge to the same binary vector $[1,\ldots, 1, 0 \ldots, 0]$, where there are exactly $a+1$ ones. This will guarantee the protocol cost stays bounded at exactly $a+1$. This is just the limit behavior, but for any finite $C$, the optimal symmetric and designated equilibria, denoted $\mbs_{sym}, \mbs_{des}$ respectively, will have the shapes
    \begin{align*}
        \mbs_{sym} &= [\underbrace{1-\eps, \ldots, 1-\eps}_{a+1 \text{ copies}}, 0, \ldots, 0] \\ 
        \mbs_{des} &= [1, \underbrace{1-\eps', \ldots, 1-\eps'}_{a \text{ copies}}, 0, \ldots, 0]
    \end{align*}
    where there are $a+1$ copies of $1-\eps$ in the symmetric case and $a$ copies of $1-\eps'$ in the designated case (because the initial coordinate is already fixed at $1$). This allows us to fix attention to $h_j=h_k=1$. Then, the designated equality constraint (\Cref{lemma:minimizer-g-equality}) reduces to 
    \begin{align*}
        1+a(1-\eps') = C \eps' \implies \eps' = \frac{1+a}{C+a},
    \end{align*}
    and the loss of this designated equilibrium is $D_1^* = C\eps'$.
    Similarly, the symmetric version of the equality constraint simplifies to
    \begin{align}
        (a+1)(1-\eps) +C\eps^{a+1} &= C\eps \nonumber\\
        (a+1)(1-\eps) &= C\eps (1-\eps^a) \nonumber\\ 
        a+1 &= C\eps \big(1+\eps + \ldots +\eps^{a-1}\big) \tag{dividing $(1-\eps)$} \\
        \implies C &= \frac{a+1}{\eps \big(1+\eps + \ldots +\eps^{a-1}\big)},\label{eq:implicit-eps-c}
    \end{align}
    and the loss of this symmetric equilibrium is $S_1^* = C \eps.$ Thus it suffices to show that there exists a $C$ such that $\eps < \eps'$ and correspondingly $S_1^* < D_1^*$. Plugging in the designated value $\eps' = \frac{1+a}{C+a}$ to \eqref{eq:implicit-eps-c}, we get 
    \begin{align*}
        C &= \frac{C+a}{\big(1+\eps' + \ldots +\eps'^{a-1}\big)} \\ 
        &\geq \frac{C+a}{1+\eps'} \\
        &= \frac{C+a}{1 + \frac{1+a}{C+a}} = \frac{(C+a)^2}{C + 2a+ 1}.
    \end{align*}
    The inequality is true for all $C > a^2$. Now observe that the RHS of \eqref{eq:implicit-eps-c} is strictly decreasing in $\eps$, and we confirmed that evaluating it at $\eps'$ resulted in a value that was less than $C$. The optimal symmetric equilibrium solves this function exactly for $C$, so we must have $\eps < \eps'$ and thus $S^*_1 < D_1^*.$
\end{proof}

\subsection{Proof of \Cref{lemma:limit-ct}}\label{app:limit-ct-proof}
\begin{lemma*}[Limit behavior of $C_t$]
As $n \to \infty$ the value of $C_t$ at which the optimal symmetric equilibrium achieves a lower loss than the optimal designated equilibrium as a function of the honest proportion $\tau \in (0,1)$ is:
\begin{align*}
    \lim_{n \to \infty}C_t(\tau) = e^x, \text{ where } 1+x = e^{(1-\tau)x}.
\end{align*}
\end{lemma*}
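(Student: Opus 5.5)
We will work directly with the two quantities of \Cref{def:minimizing-losses-g} in the scaling regime $h=\tau n$, $a=(1-\tau)n$, $n\to\infty$, and find the threshold $C$ at which $S^*_k=D^*_j$.

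The guiding heuristic is that for fixed $C$ both minimizers use mixing probabilities of order $1/n$. Write $s=y/n$. Then $(1-s)^{m}\to e^{-ym/n}$ for any committee size $m$ proportional to $n$, while the payment term $ms$ stays $O(1)$. The first step is to justify this scaling and to pass to a continuous committee fraction. Let $\kappa=h_j/n\in(0,\tau]$ be the honest fraction inside the designated committee, and $\kappa'=h_k/n$ the analogue for the symmetric committee. The limiting constraints then become
\begin{align*}
1+(1-\tau+\kappa)y &= Ce^{-\kappa y}, \\
(1-\tau+\kappa')y+Ce^{-(1-\tau+\kappa')y} &= Ce^{-\kappa' y}.
\end{align*}
The objectives are, respectively, $1+(1-\tau+\kappa)y$ and $Ce^{-\kappa' y}$. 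I will argue via uniform convergence on compacts in $y$ that $D^*_j$ and $S^*_k$ converge to the infima of these continuous problems. The main obstacle is ruling out degenerate regimes, namely $y\to\infty$ or $y\to 0$, and committees of size $o(n)$. For these I will use crude bounds: there $g$ is at least $\min\{C,\Theta(n s)\}$, so such regimes cannot be optimal.

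The second step is to optimize over the committee fraction in each problem. In both problems the effective variable is $u=\kappa y$, the expected number of honest deliverers, together with the total $v=(1-\tau+\kappa)y$.

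\emph{Designated.} The loss is $1+v=Ce^{-u}$. Since $u/v=\kappa/(1-\tau+\kappa)$ is maximized at $\kappa=\tau$, the full committee is optimal. The loss is therefore the root $L_D$ of $L=Ce^{-\tau(L-1)}$.

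\emph{Symmetric.} The constraint reads $v=Ce^{-u}(1-e^{-(v-u)})$, where $v-u=(1-\tau)y$ is independent of $\kappa'$ and the loss is $Ce^{-u}$. I expect the optimum here also to be the full committee in the limit. This is consistent with \Cref{examples:h2n5}, where the smaller committee only appears at large $C$. Taking the full committee gives $u=\tau y$ and $v=y$. Writing $x=(1-\tau)y$ and $L_S=Ce^{-\tau y}$, the constraint becomes $y=L_S(1-e^{-x})$.

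The third step is to set $L_S=L_D$ and read off $C$. The key comparison is between the designated loss $1+y_D$ and the symmetric loss $y_S/(1-e^{-(1-\tau)y_S})$. Equating the two losses forces $Ce^{-\tau y}$ to be equal in both problems, so the common threshold satisfies $y_D=y_S=:y$. This reduces to $1+y=y/(1-e^{-(1-\tau)y})$, which does not obviously simplify to the stated form.

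I will therefore recheck the scaling. Rather than fixing $y$ of order one, I will let $C$ set the scale through $x=\log C$. The stated answer $C_t=e^x$ with $1+x=e^{(1-\tau)x}$ suggests that at threshold the honest-only no-delivery probability equals $e^{-x}$ and that a relation of the form $1+x=e^{(1-\tau)x}$ balances the two losses. I will redo the expansion with $s$ chosen so that $ns=x$, and search for the parametrization in which the two equalized constraints collapse to exactly this identity. Pinning down this bookkeeping, and confirming that the crossing is unique so that the infimum in \Cref{def:transition} equals this limit, is where I expect most of the work. Uniqueness should follow from monotonicity of $L_S-L_D$ in $C$, given \Cref{lemma:desig-small} and \Cref{lemma:sym-big} at the endpoints.
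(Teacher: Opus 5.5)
You have the right setup, but the proposal stops one algebra step short of the formula and abandons a derivation that was already correct. In your third step you reach $1+y=y/(1-e^{-(1-\tau)y})$, say it does not simplify, and switch to an unspecified ``recheck of the scaling'' that never becomes an argument. Clearing the denominator gives $(1+y)(1-e^{-(1-\tau)y})=y$, i.e.\ $(1+y)e^{-(1-\tau)y}=1$, i.e.\ $1+y=e^{(1-\tau)y}$. Substituting into the designated constraint $1+y=Ce^{-\tau y}$ then gives $C=(1+y)e^{\tau y}=e^{(1-\tau)y}e^{\tau y}=e^{y}$. So the $x$ of the statement is your $y=ns$, not your $x=(1-\tau)y$; the name clash is probably what misled you. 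Your original $s=y/n$ scaling was the right one, and this is exactly how the paper concludes, with $t_s=t_d=t$.

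The second gap is that you restrict the symmetric problem to the full committee without proof. The heuristic you draw from the small $h=2,n=5$ example does not transfer. In the continuum limit, for every $C$ \emph{below} the threshold, the best symmetric committee is a proper sub-committee. So you must show that no sub-committee overtakes the designated mechanism before the full-committee crossing. The paper does this by fixing the target loss $D_\tau=1+t_d$, writing $C$ as a function of $t_s$, finding the stationary point $e^{(1-\tau)t_s}=1+t_d$, and combining it with $\tau t_d=\beta t_s$ to force $\beta=\tau$.

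A clean version in your variables: fix a loss level $L$ and set $w=(1-\tau)y$.
\begin{itemize}
\item A symmetric committee with honest fraction $\kappa'$ attains loss $L$ exactly at $C=Le^{E(w)}$, where $E(w)=\kappa' y=L(1-e^{-w})-w$.
\item The designated mechanism attains $L$ at $C=Le^{\tau(L-1)}$.
\item Hence symmetric wins at level $L$ iff $\max E>\tau(L-1)$, the maximum taken over the range of $w$ corresponding to $\kappa'\in(0,\tau]$.
\item $E$ is concave, with unconstrained maximum $L-1-\ln L$ at $w=\ln L$. This maximizer is admissible iff $\ln L\ge(1-\tau)(L-1)$, and in that case $L-1-\ln L\le\tau(L-1)$, so designated still wins.
\item Otherwise the maximum sits at $\kappa'=\tau$, and your full-committee comparison applies.
\end{itemize}
Finally, $e^{(1-\tau)y}-1-y$ is convex and negative just to the right of $0$, so its positive root is unique. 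This gives uniqueness of the crossing directly, rather than through monotonicity in $C$ and the endpoint lemmas.
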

\begin{proof}
    We first show that as $n \to \infty$, the minimizing equilibria at the boundary $C_t$ set $h_j = h_k = h$. That is both the designated and symmetric minimizers use the full honest set in the limit. First, we start with the designated case. For a given, $h, n, C$, the optimal designated equilibrium mixing probability, $s$, solves the following implicit equation
    \begin{align*}
        1+ (n-1)s =C(1-s)^h.
    \end{align*} 
    Let $n \to \infty$ for a fixed honest fraction $\tau \in (0,1)$, and observe that the LHS of the equation is increasing to infinity as $s \to 1$ and the RHS of the equation is decreasing to zero as $s \to 1$. Thus, in order for the constraint to hold, we have $s = \Theta(1/n)$. Define this constant as $t_d$ (for designated), and let $s = t_d/n$. Then, consider the possible committee honest committee size of $\beta \in (0,\tau]$ (which is $h_j$ in the discrete case). Then we can write the continuous version of the designated constraint 
    \begin{align}
        1+ (1-\tau + \beta)t_d &= C(1-t_d/n)^{\beta n}  \label{eq:continuous-designated}\\
        &\approx Ce^{-\beta t_d} \tag{by exponential bound}
    \end{align}
    Using the continuous version of the designated loss, 
    \begin{align*}
        D_\beta := 1+ (1-\tau + \beta)t_d \implies t_d = \frac{D_\beta -1}{1-\tau+\beta}.
    \end{align*}
    Plugging this into the exponential bound gives,
    \begin{align*}
        D_\beta = Ce^{-\beta \cdot \left(\frac{D_\beta -1}{1-\tau+\beta}\right)} \implies C = D_\beta\cdot  e^{(D_\beta-1) \cdot \left(\frac{\beta}{1-\tau+\beta}\right)}.
    \end{align*}
    Hence for a fixed $C$, we have that $\frac{\beta}{1-\tau+\beta}$ is strictly increasing in $\beta \in (0,\tau]$, so $D_\beta$ must be strictly decreasing to hold the constraint tight. Thus the minimizing $D_\beta$ occurs at the endpoint $\beta = \tau$. Plugging this in to \eqref{eq:continuous-designated}, we get a much simplified
    \begin{align}
        D_\tau =1+ t_d = Ce^{-\tau t_d} \label{eq:reduced-continuous-designated}
    \end{align}
    Turning to the symmetric constraint, we choose $s = t_s / n$ (for symmetric)w as the ansantz for the same reason as the designated case. Again, let $\beta \in (0,\tau]$ denote the proportion of honest committee employed (the $h_k$ value in \Cref{def:minimizing-losses-g}). The continuous version of the symmetric constraint is
    \begin{align*}
      (1-\tau+\beta)t_s + (1-t_s/n)^{(1-\tau + \beta)n}  &= C (1-t_s/n)^{\beta n} \\
      (1-\tau+\beta)t_s + Ce^{-(1-\tau+\beta)t_s}  &= C e^{-\beta t_s} \tag{by exponential bound}
    \end{align*}
    Using the continuous version of the symmetric loss, 
    \begin{align*}
        S_\beta :=  C e^{-\beta t_s} \implies (1-\tau+\beta)t_s + S_\beta e^{-(1-\tau)t_s} = S_\beta,
    \end{align*}
    and rewriting gives
    \begin{align*}
        S_\beta = \frac{(1-\tau+\beta)t_s}{1-e^{-(1-\tau)t_s}}. 
    \end{align*}
    Now consider the fact that at $C_t$, we have that $S_\beta = D_\tau = 1+ t_d$. Solving for $\beta$ gives
    \begin{align}
        &\frac{(1-\tau+\beta)t_s}{1-e^{-(1-\tau)t_s}} = 1 + t_d \nonumber \\
        &\implies \beta = \frac{(1+t_d)\big(1-e^{-(1-\tau)t_s}\big)}{t_s} - (1-\tau) \label{eq:beta-td-ts}.
    \end{align}
    Plugging this into $Ce^{-\beta t_s} = 1+t_d$, we have
    \begin{align*}
        C = (1+t_d)e^{(1+t_d)(1-e^{-(1-\tau)t_s}) - (1-\tau)t_s)}.
    \end{align*}
    The minimizing symmetric mechanism will choose $t_s$ to minimizes this expression over $t_s$ for a fixed target $D_\tau = 1 + t_d$. Taking the derivative of the exponent with respect to $t_s,$
    \begin{align*}
        \frac{\partial}{\partial_s}&\left((1+t_d)(1-e^{-(1-\tau)t_s}) - (1-\tau)t_s\right) \\&= (1-\tau)((1+t_d)e^{-(1-\tau)t_s} - 1). 
    \end{align*}
    Setting this equal to zero, we get 
    \begin{align*}
        &(1-\tau)((1+t_d)e^{-(1-\tau)t_s} - 1) = 0 \\ 
        &\implies e^{-(1-\tau)t_s} = \frac{1}{1+t_d}.
    \end{align*}
    Plugging this expression into our function for $\beta$ \eqref{eq:beta-td-ts}, we get
    \begin{align*}
        \beta &= \frac{(1+t_d)\big(1-\frac{1}{1+t_d}\big)}{t_s} - (1-\tau)\\
        &= \frac{t_d}{t_s} - (1-\tau).
    \end{align*}
    From $D_\tau = S_\beta$, we have that $e^{-\tau t_d} = e^{-\beta t_s}$ and correspondingly $\tau t_d = \beta t_s$. Plugging in the value of $\beta$ gives the final result,
    \begin{align*}
        \tau t_d &= \left(\frac{t_d}{t_s} - (1-\tau)\right)t_s \\
        &= t_d -(1-\tau)t_s \\ 
        & \implies (1-\tau)t_d = (1-\tau)t_s \; \text{and } \; t_d = t_s \;\text{ and } \; \beta = \tau.
    \end{align*}
    This derivation allows us to confirm that at the transition point, we have $S_\tau$, which simplifies to
    \begin{align*}
        S_\tau = \frac{t_s}{1-e^{-(1-\tau)t_s}}. 
    \end{align*}
    Combining this with \eqref{eq:reduced-continuous-designated} and the fact that $t_s = t_d$ (which we denote simply as $t$ hereafter), we have that $t$ satisfies the following equation:
    \begin{align*}
       1+t = \frac{t}{1-e^{-(1-\tau)t}} \implies 1+t = e^{(1-\tau)t}.
    \end{align*}
    Applying this equality to \eqref{eq:reduced-continuous-designated}, we have that 
    \begin{align*}
        1+t = Ce^{-\tau t} &\implies e^{(1-\tau)t} = Ce^{-\tau t} \implies C = e^t,
    \end{align*}
    which, combined with $1+t = e^{(1-\tau)t}$ is the limit of $C_t$ given in the lemma statement.
\end{proof}

\subsection{Figure showing discrete values of $C_t$ for various $n$}\label{app:fig-ct-limit-discrete}
\begin{figure}[H]
    \centering
    \includegraphics[width=0.65\linewidth]{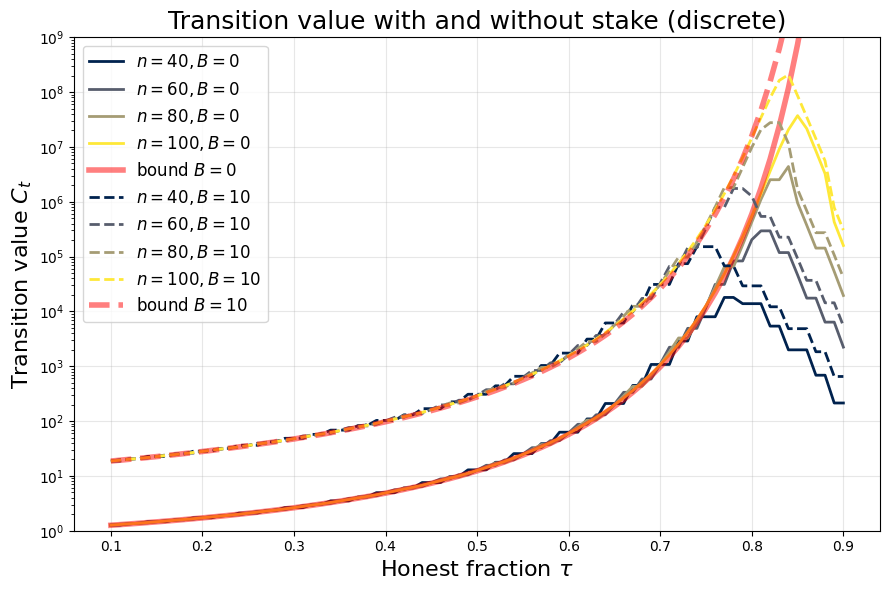}
    \caption{This plot shows $C_t$ (on log scale) as a function of $\tau$ for various values of $n$ and the limit behavior of $C_t$ as calculated in \Cref{lemma:limit-ct} (in red). In addition to the continuous limit (\Cref{fig:ct-limit}), this figure also includes the discrete values for  $n \in \{40,60,80,100\}.$ The dashed lines denote the corresponding values with stake $B=10$ (\Cref{subsec:negative-payments}) and the red dashed line is as calculated in \Cref{cor:limit-ct}.}
    \label{fig:ct-limit-discrete}
\end{figure}

\subsection{Proof of \Cref{lem:designated-loss-asymptotics}}\label{app:designated-loss-asymptotics-proof}
\begin{lemma*}[Asymptotic designated loss]
    The loss of the optimal designated equilibrium scales as $\mco\left(\frac{\log C}{\tau}\right)$. 
\end{lemma*}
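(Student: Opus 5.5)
The plan is an explicit upper bound: exhibit a specific designated uniform committee equilibrium whose loss is $\mco(\log C/\tau)$. The optimal designated equilibrium can only do better. By \Cref{lemma:designated-implementable}, any designated equilibrium $[1,s,\ldots,s,0,\ldots,0]$ with committee size $k$ is implementable with loss
\begin{align*}
\max\Big\{1+ks,\; C(1-s)^{h_k}\Big\},
\end{align*}
where $h_k$ is the number of honest committee members the adversary cannot cover. The left branch is the fixed payment $1+ks$, made whenever the designate delivers. The right branch arises when the adversary corrupts the designate together with $a-1$ committee members; the remaining $h_k = k+1-a$ honest members each mix independently with probability $s$.

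I would take the full committee, $k=n-1$, so that $h_k=h=\tau n$. I would then choose $s=\min\{1,\ \log C/(\tau n)\}$; we may assume $C>1$, since otherwise the loss is trivially $\mco(1)$.
\begin{itemize}
\item When $\log C\le \tau n$, the bound $(1-s)^h\le e^{-sh}=e^{-\log C}=1/C$ makes the right branch at most $1$. The left branch is $1+(n-1)s\le 1+\log C/\tau$. The loss is therefore at most $1+\log C/\tau=\mco(\log C/\tau)$.
\item When $\log C>\tau n$, set $s=1$ (every committee member delivers deterministically). The right branch vanishes, and the left branch equals $n<\log C/\tau$.
\end{itemize}
In both cases the loss is $\mco(\log C/\tau)$. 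This agrees with the continuous formulation in the proof of \Cref{lemma:limit-ct}: there $1+t=Ce^{-\tau t}$ with $s=t/n$ forces $t\le \log C/\tau$, so the optimal $D_\tau=1+t\le 1+\log C/\tau$.

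The only delicate step is bookkeeping the adversary's best response in the designated rule. The adversary may spend its $a$ corruptions on the designate plus $a-1$ committee members, leaving exactly $h$ honest mixers when $k=n-1$. It cannot do better than the two branches of $g$, by the argument in \Cref{rem:simplicity-of-designated}. I would state this explicitly and not re-derive it.

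To match the lower bound $\Omega(\log C/\tau)$ one could also argue from the equality $1+(n-1)s=C(1-s)^h$, but the lemma only claims an upper bound, so I would not pursue this.
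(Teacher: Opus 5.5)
Your proof is correct, and it takes a different route from the paper's.

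\textbf{The paper's route.} The paper works in the continuous limit. It uses $(1-t_d/n)^{\tau n}\approx e^{-\tau t_d}$ to reduce the optimal designated equilibrium to the equality $1+t_d=Ce^{-\tau t_d}$. It solves this exactly as $\tau D_\tau=W(C\tau e^{\tau})$ with the Lambert $W$ function, then expands $W$ to obtain $D_\tau=\ln C/\tau+\mco(\ln\ln C/\tau)$.

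\textbf{Your route.} You exhibit one explicit feasible point, $k=n-1$ and $s=\min\{1,\log C/h\}$, and bound both branches using only $(1-s)^h\le e^{-sh}$. Your adversary accounting is right. With $k=n-1$, the worst non-delivering attack (the designate plus $a-1$ committee members) leaves exactly $h$ honest mixers. Since there are no non-committee players, the rule from \Cref{lemma:designated-implementable} pays exactly $1+(n-1)s$ whenever the designate delivers.

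\textbf{What each buys.} Your route gives a non-asymptotic bound, $1+\log C/\tau$, valid for every finite $(n,h,C)$. It needs no limit approximation and no Lambert-$W$ asymptotics. Your closing remark already recovers the paper's conclusion in one line: $Ce^{-\tau t}=1+t\ge 1$ forces $t\le\ln C/\tau$. The paper's route pins down the value of the optimum itself, including the leading constant and the $\ln\ln C$ correction, rather than only an upper bound. It then reuses the same template for \Cref{lem:staking-asymptotics}. Your choice adapts there as well: $s\approx\log(C/(B+1))/h$ makes the staked branch $C(1-s)^h-B$ at most $1$.

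\textbf{A small fix.} At $s=1$ your vector is not strictly a designated uniform committee equilibrium, since \Cref{def:designated-uniform} requires $s\in(0,1)$. Taking $s=1-\delta$ with $C\delta^h\le 1$ gives loss at most $n$ and resolves this. That case is also vacuous in the $n\to\infty$ regime the paper considers.
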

\begin{proof}
    From \eqref{eq:reduced-continuous-designated}, we have
    \begin{align*}
        D_\tau =1+ t_d = Ce^{-\tau t_d}.
    \end{align*}
    Rewriting with $t_d = D_\tau -1$, we get
    \begin{align*}
        D_\tau = Ce^{-\tau (D_\tau-1)} \implies D_\tau e^{\tau D_\tau} = Ce^\tau.
    \end{align*}
    Multiplying through by $\tau$ gives an expression of the form $ye^y= x$, which is the definition of the Lambert W function:
    \begin{align*}
        \tau D_\tau e^{\tau D_\tau} = C\tau e^\tau \implies \tau D_\tau = W(C\tau e^\tau).
    \end{align*}
    Using the first two terms of asymptotic expansion of the Lambert W function, we have
    \begin{align*}
        D_\tau &= \frac{1}{\tau} \left(\ln (C \tau e^\tau)  + \mco(\ln\ln(C\tau e^\tau))\right) \\
        &= \frac{\ln C}{\tau} + \mco\left(\frac{\ln \ln C}{\tau}\right) \\
        &= \mco\left(\frac{\log C}{\tau}\right).
    \end{align*}
\end{proof}

\subsection{Proof of \Cref{corollary:lb-with-b}}\label{app:lb-with-b-proof}
\begin{corollary*}[Staked lower bound function $g_B(\mbs)$]
    Extending the lower bound of \Cref{lemma:lb}, consider the protocol parameterized with $(h,a,n,C)$ and with aggregate stake of $B$. Then define $g_B(\mbs)$ as,
    \begin{align*}
        g_B(\mbs) := \max\left\{\sum_{i\in [n]} s_i + C \prod_{i\in[n]} (1-s_i), C \prod_{i=a+1}^n (1-s_i) - B\right\}.
    \end{align*}
    Then for the solution to the follow program,
    \begin{align}
        \opt{4} := \min_{\mbs \in [0,1]^n} &g_B(\mbs) \tag{PROG4}\\
        \text{s.t., } &1\geq s_1 \geq s_2 \geq \ldots \geq s_n \ge 0 \nonumber
    \end{align}
    we have $\opt4 \leq \opt1$.
\end{corollary*}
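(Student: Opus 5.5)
We follow the proof of \Cref{lemma:lb} almost verbatim, weakening the attacker to the same two options and only re-deriving the right branch, since stake changes what the protocol can recover. Fix any feasible pair $(p,\mbs)$ with $\mbs\in\mcs(p)$, where now $p_i(\mbd)\geq -B/n$ for every $i,\mbd$. As before, re-index so that $s_1\geq\ldots\geq s_n$. We then show $\ell(p,\mbs)\geq g_B(\mbs)\geq \opt4$ and minimize over $(p,\mbs)$ to conclude.

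\textbf{Left branch.} Let the attacker behave honestly, i.e., $A$ plays $\mbs_A$. The expected cost is then $\E[\sum_i p_i(\mbd)] + C\prod_{i\in[n]}(1-s_i)$. Incentive compatibility still forces each player's expected payment to be at least their expected proving cost. If $s_i>0$, the constraint $\textsc{DeliverPay}_i\geq\textsc{NoDeliverPay}_i$ together with participation gives $\E[p_i]\geq s_i$. Here I would make explicit the same individual-rationality convention used in \Cref{lemma:lb}: stakers must be compensated in expectation, as the paper remarks, so negative payments do not lower $\E[p_i]$ below $s_i$. Summing over $i$ gives $\E[\sum_i p_i]\geq\sum_i s_i$, which is exactly the left branch of $g_B$.

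\textbf{Right branch.} Let $A=[a]$ with $d_i=0$ for all $i\in A$. Honest players in $\{a+1,\ldots,n\}$ still mix, so with probability $\prod_{i=a+1}^n(1-s_i)$ no proof arrives and the penalty $C$ is incurred. Payments can only reduce the cost by slashing. Since each $p_i\geq -B/n$ pointwise, the total payment in any outcome is at least $-B$. The expected cost is therefore at least $C\prod_{i=a+1}^n(1-s_i)-B$. This bound is weak, because it allows slashing of the full $B$ in every outcome, not just on failure, but it is valid, and it is the right branch of $g_B$.

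\textbf{Conclusion and main obstacle.} Both strategies are available to the adversary in \Cref{def:loss}, so $\ell(p,\mbs)\geq g_B(\mbs)$. Moreover $\mbs$ satisfies the ordering constraint of \eqref{prog:g-full-with-b}, so $g_B(\mbs)\geq\opt4$. Taking the minimum over feasible $(p,\mbs)$ yields $\opt4\leq\opt1$. I expect the main obstacle to be the left branch: with negative payments permitted, one must argue cleanly that the total expected transfer under honest play remains at least $\sum_i s_i$. This requires an explicit participation constraint stating that expected utility is nonnegative, i.e., $\E[p_i]-s_i\geq0$ once stake is posted. I would state this assumption, which matches the paper's statement that slashing risk ``must be offset by a corresponding positive expected utility from participation,'' and then the rest is routine.
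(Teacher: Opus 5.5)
Your proposal is correct and follows essentially the same route as the paper: restrict the adversary to either fully honest play or fully withholding with the $a$ largest-$s_i$ provers, then bound the left branch by individual rationality and the right branch using the fact that total payments are at least $-B$. Your pointwise bound $\sum_i p_i(\mbd)\geq -B$ is a cleaner version of the paper's informal ``detect and slash the full bond'' argument. You are also right that the left branch needs an explicit participation constraint, which the paper only assumes implicitly: without it, charging every prover a flat $B/n$ would preserve incentive compatibility while lowering every loss by $B$, breaking the bound for large $B$.
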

\begin{proof}
    Recall that $\opt1$ (the solution to \eqref{prog:full} is the minimizing protocol loss over any payment rule that satisfies the equilibrium conditions. The left branch of the max is exactly the same as the proof of \Cref{lemma:lb}, where if the attacker plays honestly, the protocol incurs at least the cost of the honest equilibrium. The right branch of the max is similar to before, but now has a $-B$ linear term, meaning that in the case that the attacker is playing the non-delivery strategy, the protocol can detect it (as before, this detection is used to ensure that there is no positive payment in this case). But further, this attack detection can be used to slash the full bond posted by the provers to mitigate the expected cost of an attack. Thus this is the best that the protocol could hope to do, and is still a lower bound on the optimal loss. 
\end{proof}

\subsection{Lower bound $g$ analysis with stake $B$}\label{app:with-stake}
\begin{corollary}[Minimizer of $g_B$ occurs at equality]\label{cor:two-branches}
    As in \Cref{lemma:minimizer-g-equality}, the vector $\mbs^*$ that minimizes $g_B$ occurs where the two branches of the max are equal. More formally, we have
    \begin{align*}
        \sum_{i \in [n]} s_i^* + C\prod_{i \in [n]}(1-s_i^*) = C\prod_{i=a+1}^n(1-s_i^*) - B.
    \end{align*}
\end{corollary}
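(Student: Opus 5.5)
The plan is to mirror the proof of \Cref{lemma:minimizer-g-equality} exactly, with the right branch replaced by $R_B(\mbs) := C\prod_{i=a+1}^n(1-s_i) - B$ and the left branch $L(\mbs)$ unchanged. The constant shift by $-B$ does not affect any partial derivative, so both perturbation arguments should transfer verbatim. The only new issue is sign and feasibility.

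\textbf{Step 1: rule out $L(\mbs^*) < R_B(\mbs^*)$.} Suppose the minimizer lies strictly in the right branch. Since $L \ge 0$, strictness forces $R_B(\mbs^*) > 0$. This implies $\prod_{i=a+1}^n (1-s^*_i) > B/C \ge 0$, so every honest coordinate satisfies $s^*_i<1$. Then $\partial R_B/\partial s_i = -C\prod_{j\ne i}(1-s_j^*) < 0$ for $i\in\{a+1,\dots,n\}$. I will raise one such coordinate by a small $\eps$, choosing it so that the ordering $s_1\ge\dots\ge s_n$ is preserved; the natural choice is the first index of a block of equal values, i.e.\ the smallest $i\ge a+1$ whose value ties with $s_{a+1}$, or simply $s_{a+1}$ if $s_a>s_{a+1}$. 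If $s_a=s_{a+1}$, I will raise the entire tied block, which still strictly lowers $R_B$. By continuity we stay strictly in the right branch, and $g_B$ strictly decreases, a contradiction.

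\textbf{Step 2: rule out $L(\mbs^*) > R_B(\mbs^*)$.} Here I reuse the mass-shifting argument: $R_B$ does not depend on $s_1,\dots,s_a$. Moving $\eps$ of mass from a larger attacker coordinate $s_j$ to a smaller one $s_i$ (or, more cleanly, spreading mass toward equality) keeps $\sum s_i$ fixed and strictly decreases $\prod_{i\in[n]}(1-s_i)$, hence strictly decreases $L$. This requires two attacker coordinates with $s_j>s_i$, and I expect this to be the main obstacle. If all attacker coordinates are equal, the swap is unavailable, and the original proof glosses over this case. My first fallback is to lower a coordinate outright. Since $\partial L/\partial s_k = 1 - C\prod_{m\ne k}(1-s_m)$, either some coordinate can be decreased to lower $L$ (when this partial is positive), or some coordinate can be increased (when it is negative). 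Decreasing an attacker coordinate leaves $R_B$ unchanged. Increasing a coordinate never increases $R_B$; and if the change touches an honest coordinate, $R_B$ moves toward $L$, which is still fine. So in every case we can strictly decrease $\max\{L,R_B\}$ while the strict inequality persists, provided the partial is nonzero. The degenerate case where $\partial L/\partial s_k = 0$ for all $k$ is an interior critical point of $L$. There I would argue directly that a small increase of the last honest coordinate lowers $R_B$ with only a second-order change in $L$; since $L>R_B$ that alone does not help, so instead I would invoke that $L$ restricted to the attacker block is concave along the mass-shifting direction.

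\textbf{Step 3: conclude.} Having excluded both strict cases, the minimizer sits on $L=R_B$, which is the statement. I will also note that if $B$ is so large that $R_B<L$ everywhere, a minimizer still exists by compactness. That case must be handled by Step 2 alone, i.e.\ equality holds whenever it is attainable, and I would flag that the corollary implicitly assumes $B < C - 1$ or similar so that the equality surface is nonempty.
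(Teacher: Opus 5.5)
Your Step 1 is fine. Raising the whole tied block when $s_a=s_{a+1}$ is in fact more careful than the paper's claim that a single honest coordinate can always be raised without breaking the ordering. The gap is in Step 2, and it goes beyond the case of equal attacker coordinates.

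\textbf{The shifting direction is reversed.} For fixed $x+y$, $(1-x)(1-y)=1-(x+y)+xy$ is \emph{maximized} at $x=y$. So shifting mass toward equality raises $\prod_i(1-s_i)$, and hence raises $L$. What lowers $L$ is spreading a pair apart, i.e.\ the concavity along $e_j-e_i$ (curvature $-2C\prod_{k\neq i,j}(1-s_k)$) that you invoke only at the end. The proof of \Cref{lemma:minimizer-g-equality} contains the same slip: for $i>j$ one has $s_j^*-s_i^*\ge 0$, not $<0$. Mirroring it verbatim therefore inherits the error.

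\textbf{The first-order fallback ignores the box constraints.} At $e_1=(1,0,\ldots,0)$ we have $\partial_1 L=1-C<0$ but $s_1=1$. For $k\ge 2$ we have $\partial_k L=1>0$ but $s_k=0$. So no feasible move lowers $L$. No local argument can dislodge this point, because for $C>1$ it is the unique global minimizer of $L$ on the ordered cube. Writing $S=\sum_i s_i$: if $S<1$ then $L\ge S+C(1-S)>1$; otherwise $L\ge S\ge 1$, with equality only at $e_1$.

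\textbf{Consequence: the statement is false for $B>C-1$.} Then $R_B\le C-B<1\le L$ on the whole feasible set. The minimizer of $g_B$ is $e_1$ with $g_B=1$, lying strictly in the left branch. The paper's proof misses this too. It asserts that only the right branch needs rechecking, but subtracting $B$ enlarges the region $\{L>R_B\}$, which is exactly where $e_1$ can fall. Your Step 3 half-sees the issue, but saying that case must be ``handled by Step 2 alone'' is incoherent. An unconditional Step 2 together with compactness would give a contradiction there.

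\textbf{The missing idea is global rather than local.} For $C>1$, every point of $[0,1]^n$ other than a permutation of $e_1$ admits a feasible strict descent direction for $L$:
\begin{itemize}
\item if some $s_k=1$ and some other $s_m>0$, lower $s_m$;
\item if no coordinate equals $1$ and two lie in $(0,1)$, move one whose partial is nonzero, or spread the pair if both partials vanish;
\item if exactly one lies in $(0,1)$ and the rest are $0$, raise it (its partial is $1-C$);
\item if all are $0$, raise one.
\end{itemize}
Since $L$ is symmetric and $\{L>R_B\}$ is open, a minimizer of $g_B$ strictly in the left branch is a local minimizer of $L$ on the cube; re-sorting takes care of the ordering. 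Hence it equals $e_1$, where $L=1$ and $R_B=C-B$ (for $a\ge 1$). So the strict-left case is excluded exactly when $B\le C-1$. Your guess of this threshold was essentially right, but it is a hypothesis the corollary genuinely needs, not a technicality. Under that hypothesis, your Step 1 completes the proof.
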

\begin{proof}
    This proof follows the exact procedure as that of \Cref{lemma:minimizer-g-equality}. We show that the optimizer can never be strictly in either branch. Recall that,
    \begin{align*}
        g_B(\mbs) := \max\left\{\sum_{i\in [n]} s_i + C \prod_{i\in[n]} (1-s_i), C \prod_{i=a+1}^n (1-s_i) - B\right\},
    \end{align*}
    and define $L,R$ as the left and right branches of that max:
    \begin{align*}
        L(\mbs) &:= \sum_{i\in [n]} s_i + C \prod_{i\in[n]} (1-s_i) \\
        R(\mbs) &:= C \prod_{i=a+1}^n (1-s_i) -B.
    \end{align*}
    Since only the right branch is a function of $B$, we only have to rule out the case that the optimizer is strictly in that branch. Assume, towards a contradiction, that $\mb{s}^*$ (the minimizing $\mb{s}$ of $g$) is strictly in the right branch of the max $L(\mbs^*) < R(\mbs^*)$. Notice that since $B$ is just a constant shift, the partial derivative of the right branch is independent of it:
    \begin{align*}
        \frac{\partial R}{\partial s_i^*} &= - C\cdot \prod_{\substack{ j=a+1 \\
         j\neq i}}^n (1-s_j^*) <0.
    \end{align*}
    Thus the same transformation used in the proof of \Cref{lemma:minimizer-g-equality} holds and the minimizer cannot be strictly in the right branch, regardless of the stake value $B$.
\end{proof}

\begin{corollary}[Characterizing the minimizer of $g_B$]\label{cor:shape}
    As in \Cref{lemma:shape}, let $\mbs^*$ denote the minimizing vector $\mbs^* := \arg\min_{\mbs\in[0,1]}g_B(\mbs).$ Then $\mbs^*$ is either a designated uniform committee equilibrium or a uniform committee equilibrium (\Cref{def:designated-uniform,def:symmetric}).
\end{corollary}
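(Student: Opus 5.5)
The plan is to mirror the proof of \Cref{lemma:shape} line by line, using \Cref{cor:two-branches} in place of \Cref{lemma:minimizer-g-equality}. By \Cref{cor:two-branches}, minimizing $g_B$ is equivalent to the analogue of \eqref{prog:g-constrained}. In this problem we minimize the objective $C\pi_H - B$, which up to the additive constant $-B$ is again $C\pi_H$. The constraint becomes
\begin{align*}
\sum_{i\in[n]} s_i + C\pi_A\pi_H = C\pi_H - B,
\end{align*}
subject to the same ordering constraints. Because the objective is a constant shift of the old one, every perturbation that strictly decreased $C\pi_H$ before still strictly decreases the new objective. It therefore suffices to check that each feasibility-restoring step still goes through with the extra $-B$ term.

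The key steps follow the original order.

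\textbf{Case with no deterministic deliverers.} For two honest indices $i,j\in H$ with $s_i>s_j>0$, move mass $\eps$ from $j$ to $i$. This preserves the sum and the ordering, and strictly lowers $\pi_H$. To restore the constraint, adjust an attacker coordinate $s_k$. The constraint, rewritten as $\sum_i s_i + C\pi_H(\pi_A-1) + B = 0$, is still affine in $s_k$, as in \eqref{eq:affine-in-sk}; the only change is the added constant $B$. So a unique restoring value of $s_k$ exists.

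\textbf{Equalizing attackers with honest players.} Next, equalize the attacker coordinates with the first honest coordinate by raising $s_{a+1}$ and compensating through an attacker coordinate, again using affinity in that coordinate.

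\textbf{Case with a deterministic deliverer.} If $s_1=1$, then $\pi_A=0$ and the constraint reduces to $\sum_i s_i = C\pi_H - B$. Sum-preserving transfers between interior coordinates weakly lower $C\pi_H$, so all interior coordinates must be equal. Finally, a second deterministic deliverer $s_k=1$ is wasteful: lowering it to $1-\eps$ and raising $s_{a+1}$ by the appropriate $\delta$ restores feasibility and strictly lowers the objective. This yields the designated shape.

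The main obstacle is confirming that the feasibility-restoring adjustments stay inside $[0,1]$ and respect the ordering once the $-B$ shift is present. Since $B>0$ forces $C\pi_H$ to be larger at the optimum, the needed compensating moves point in the same direction as before. I would first verify sign monotonicity of the constraint in the adjusted coordinate: its derivative is $1 - C\pi_H\prod_{i\ne k,\, i\in A}(1-s_i)$, and $B$ does not appear in it. The same small-$\eps$ continuity argument used in \Cref{lemma:shape} then applies verbatim. Any degenerate case where $B$ makes the equality infeasible is excluded by \Cref{cor:two-branches}, which guarantees that the minimizer lies on the equality surface.
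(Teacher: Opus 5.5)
Your proposal is correct and follows essentially the same route as the paper. The paper likewise argues that $B$ enters only as a constant shift of the equality constraint, so the constraint stays affine in the attacker coordinate $s_k$, the objective $C\pi_H - B$ differs from $C\pi_H$ by a constant, and every perturb-and-restore step of \Cref{lemma:shape} carries over; your check that $\partial F/\partial s_k = 1 - C\pi_H\prod_{i\in A\setminus\{k\}}(1-s_i)$ is free of $B$ just makes that observation explicit. (Your remark that the compensating moves ``point in the same direction as before'' is not needed: for small perturbations only the nonvanishing of this derivative and the absence of a binding tie in the ordering matter, and the paper leaves those details implicit too.)
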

\begin{proof}
    The stake value $B$ only enters $g_B$ in the RHS of the equality constraint as a linear shift. Thus the feasible surface is distinct from $g$, but the transformations still hold. In particular, the equality constraint becomes
    \begin{align}
        s_k + \sum_{\substack{i \in [n]\\ i \neq k}} s_i + B+ C \pi_H \bigg(\prod_{\substack{i \in [a]\\ i \neq k}} (1-s_i)\bigg) (1-s_k) = 0,
    \end{align}
    which is still affine in $s_k$. Thus the transformations continue to reduce the objective and we can restore the feasibility by move back to the constraint surface. 
\end{proof}

\begin{corollary}[Limit behavior of $C_t$ with $B$]\label{cor:limit-ct}
As in \Cref{lemma:limit-ct}, with $n \to \infty$ the value of $C_t$ at which the optimal symmetric equilibrium achieves a lower loss than the optimal designated equilibrium as a function of the honest proportion $\tau \in (0,1)$ is:
\begin{align*}
    \lim_{n \to \infty}C_t(\tau) = e^x, \text{ where } 1+x +B= e^{(1-\tau)x}.
\end{align*}
\end{corollary}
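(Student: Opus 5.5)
I will follow the proof of \Cref{lemma:limit-ct} line by line. The only change is that the right branch of $g_B$ carries the extra $-B$. By \Cref{cor:two-branches,cor:shape}, the staked minimizer still sits on the equality surface and is still designated or symmetric. So I will first write the continuous (large-$n$) versions of the two constraints with the shift included. With $s=t/n$ and honest committee fraction $\beta\in(0,\tau]$, using $(1-t/n)^{\beta n}\to e^{-\beta t}$:
\begin{align*}
1+(1-\tau+\beta)t_d &= Ce^{-\beta t_d}-B, \\
(1-\tau+\beta)t_s + Ce^{-(1-\tau+\beta)t_s} &= Ce^{-\beta t_s}-B.
\end{align*}
The losses are $D_\beta = 1+(1-\tau+\beta)t_d$ and $S_\beta = Ce^{-\beta t_s}-B$. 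The ansatz $s=\Theta(1/n)$ is justified exactly as before, since the constant $B$ does not change the order of $s$.

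For the designated branch, I will rewrite the constraint as $C=(D_\beta+B)e^{\beta(D_\beta-1)/(1-\tau+\beta)}$. The right-hand side is still increasing in both $D_\beta$ and $\beta$, so for fixed $C$ the minimum is again at $\beta=\tau$. This gives
\[
D_\tau = 1+t_d = Ce^{-\tau t_d}-B .
\]
For the symmetric branch, substituting $Ce^{-\beta t_s}=S_\beta+B$ gives
\[
(1-\tau+\beta)t_s + (S_\beta+B)e^{-(1-\tau)t_s} = S_\beta+B .
\]
At the transition I set $S_\beta=D_\tau=1+t_d$. I will solve for $\beta$, substitute into $C=(1+t_d+B)e^{\beta t_s}$, and minimize the exponent over $t_s$. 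The stationarity condition should become
\[
e^{-(1-\tau)t_s} = \frac{1}{1+t_d+B},
\]
which yields $\beta = t_d/t_s-(1-\tau)$. Matching $Ce^{-\beta t_s}=Ce^{-\tau t_d}$, since both equal $1+t_d+B$, should again force $t_s=t_d=:x$ and $\beta=\tau$.

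Finally I combine the two relations. With $\beta=\tau$ and $t_s=t_d=x$, the stationarity relation reads $1+x+B=e^{(1-\tau)x}$. The designated equation reads $1+x+B=Ce^{-\tau x}$. Dividing the two gives $C=e^{x}$, which is the claim.

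I expect the main obstacle to be the symmetric step. I need to check that the bookkeeping of $B$ is correct: it appears in the constraint as $S_\beta+B$, and it does not enter the linear payment term. I also need to check that the critical point in $t_s$ is a minimum and not an artifact. To handle this I will recompute the exponent $(1+t_d+B)(1-e^{-(1-\tau)t_s})-(1-\tau)t_s$ and verify concavity in $t_s$. Concavity makes the stationary point the extremizer of $C$ at fixed loss, which corresponds to the optimal symmetric choice.
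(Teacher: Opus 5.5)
There is a concrete bookkeeping error in your symmetric step, at exactly the place you flagged. Your symmetric constraint $(1-\tau+\beta)t_s + Ce^{-(1-\tau+\beta)t_s} = Ce^{-\beta t_s}-B$ is correct. The loss is $S_\beta = Ce^{-\beta t_s}-B$, so the right-hand side \emph{is} $S_\beta$. Substituting $Ce^{-\beta t_s}=S_\beta+B$ into the exponential term therefore gives
\[
(1-\tau+\beta)t_s + (S_\beta+B)e^{-(1-\tau)t_s} = S_\beta ,
\]
not $=S_\beta+B$. With $\beta=\tau$ this is the paper's $S_\tau = \frac{t_s+Be^{-(1-\tau)t_s}}{1-e^{-(1-\tau)t_s}}$.

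As a result, the exponent you plan to recompute, $(1+t_d+B)(1-e^{-(1-\tau)t_s})-(1-\tau)t_s$, is too large by exactly $B$. The correct exponent is $(1+t_d)-(1+t_d+B)e^{-(1-\tau)t_s}-(1-\tau)t_s$. Because the discrepancy is a constant, your stationarity condition $e^{-(1-\tau)t_s}=1/(1+t_d+B)$ survives. The value of $\beta t_s$ at that point does not.

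From your displayed equation, that value is $t_d+B-(1-\tau)t_s$, so $\beta=(t_d+B)/t_s-(1-\tau)$. The matching $\beta t_s=\tau t_d$ then forces $t_s=t_d+B/(1-\tau)\neq t_d$. Carried through, you would land on $C=e^{x+B}$ with $1+x+B=e^{(1-\tau)x+B}$, which is wrong. The intermediate claims you wrote down ($\beta=t_d/t_s-(1-\tau)$, $t_s=t_d$, $\beta=\tau$) are the right ones. However, they do not follow from your own equation; they are carried over from the unstaked lemma.

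With the right-hand side corrected to $S_\beta$, your plan goes through as written. The stationary point gives $\beta t_s=t_d-(1-\tau)t_s$, and matching gives $t_s=t_d$ and $\beta=\tau$. Combining $1+x+B=e^{(1-\tau)x}$ with $1+x+B=Ce^{-\tau x}$ then yields $C=e^x$.

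In that form your argument is more complete than the paper's. The paper simply sets $\beta=\tau$ in the symmetric constraint and assumes $t_s=t_d$ at the transition, inheriting both from the unstaked case, and then reads off $1+t+B=e^{(1-\tau)t}$ from $S_\tau=D_\tau$.

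One small point on your concavity check. The exponent is concave in $t_s$, so the stationary point is a \emph{maximum} of $C$ at fixed loss, not a minimum. That is the correct extremum: since $g_B$ is nondecreasing in $C$, the best symmetric profile at a given loss level is the one that tolerates the largest $C$.
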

\begin{proof}
    This proof mirrors the structure of that for \Cref{lemma:limit-ct}. On the designated side, the discrete implicit equation becomes
    \begin{align*}
        1+ (n-1)s =C(1-s)^h - B.
    \end{align*} 
    The corresponding continuous version is 
    \begin{align*}
        1+ (1-\tau + \beta)t_d &= Ce^{-\beta t_d} - B.
    \end{align*}
    The resulting optimal designated loss is also minimized at $\beta= \tau$, yielding the simplified
    \begin{align}\label{eq:reduced-ctous-designated-stake}
        D_\tau =1+ t_d = Ce^{-\tau t_d} -B
    \end{align}
    Taking the continuous version of the symmetric constraint also with $\beta= \tau$, we have
    \begin{align*}
        S_\tau = \frac{t_s+Be^{-(1-\tau)t_s}}{1-e^{-(1-\tau)t_s}}. 
    \end{align*}
    Because $S_\tau = D_\tau$ at the transition point $C_t$, we have  
    \begin{align*}
        1+ t= \frac{t+Be^{-(1-\tau)t}}{1-e^{-(1-\tau)t}} \implies 1 +t +B = e^{(1-\tau)t}.
    \end{align*}
    Combining with \eqref{eq:reduced-ctous-designated-stake}, we get
    \begin{align*}
        Ce^{-\tau t}= e^{(1-\tau)t} \implies C = e^t.
    \end{align*}
    This combined with $1+t+B = e^{(1-\tau)t}$ is the limit of $C_t$ given in the lemma statement.
\end{proof}

\begin{figure}[H]
    \centering
    \includegraphics[width=0.65\linewidth]{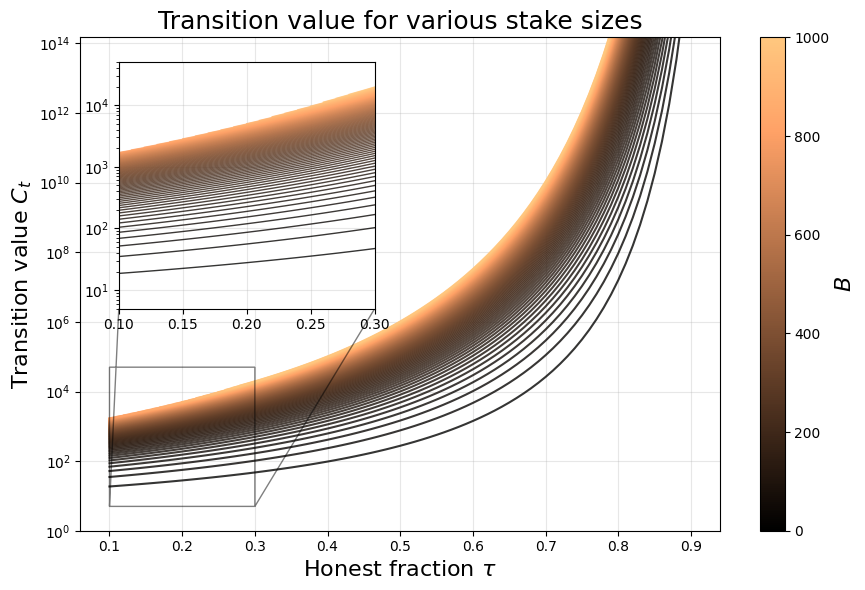}
    \caption{The value of $\lim_{n \to\infty} C_t(\tau)$ (\Cref{cor:limit-ct}) with various values of $B$ (indicated by the colorbar) as a function of $\tau$. Each point can be interpreted as, ``Given $\tau, B$, we plot the first value of $C$ such that designated has a lower cost that symmetric and thus the optimal is implementable.}
    \label{fig:c-over-b}
\end{figure}

\Cref{fig:c-over-b} shows the limit behavior of $C_t$ with stake (\Cref{cor:limit-ct}) for various values of $\tau,B$. As with \Cref{fig:ct-limit}, the area under the curve represents the values of $C$ such that the designated equilibrium is better than the symmetric on $g_B$, and thus the lower bound on the optimal cost is tight and implementable with a designated payment rule.

\subsection{Table for \Cref{ex:high-tau-c,ex:moderate-tau-c}}\label{app:stake-sensitivity-table}
\begin{table}[H]
\centering
\scriptsize
\begin{tabular}{rr rr r | rr rr r}
\toprule
\multicolumn{5}{c}{$n=100,\; a=33,\; C=10{,}000$} 
& \multicolumn{5}{c}{$n=200,\; a=100,\; C=20$} \\
\cmidrule(r){1-5} \cmidrule(l){6-10}
$B$ & $B/C$ & Desig. & Symm. & \% red. 
& $B$ & $B/C$ & Desig. & Symm. & \% red. \\
\midrule
0   & 0\%   & 10.62 & 10.12 & --     & 0    & 0\%   & 4.12 & 3.99 & --    \\
10  & 0.1\% & 9.78  & 9.79  & 3.3\%  & 0.02 & 0.1\% & 4.11 & 3.99 & --    \\
50  & 0.5\% & 8.32  & 9.64  & 17.8\% & 0.1  & 0.5\% & 4.09 & 3.99 & --    \\
100 & 1\%   & 7.48  & 9.56  & 26.1\% & 0.2  & 1\%   & 4.06 & 3.98 & --    \\
500 & 5\%   & 5.31  & 9.37  & 47.5\% & 1    & 5\%   & 3.81 & 3.97 & 4.5\% \\
\bottomrule
\end{tabular}
\caption{Optimal designated and symmetric costs under varying stake levels. \textbf{Left:} high $\tau$ and $C$ ($n=100$, $a=33$, $C=10{,}000$) leading to a large 47.5\% cost reduction from 5\% stake to penalty ratio (\Cref{ex:high-tau-c}). \textbf{Right:} lower $\tau$ and $C$ ($n=200$, $a=100$, $C=20$); even at $B/C=5\%$, the designated mechanism reduces loss by only 4.5\% (\Cref{ex:moderate-tau-c}). All \% reductions are relative to the symmetric cost at $B=0$.}
\label{tab:stake-comparison}
\end{table}

\subsection{Proof of \Cref{lem:staking-asymptotics}}\label{app:staking-asymptotics-proof}
\begin{lemma*}[Asymptotic loss with staking]
    The loss of the optimal designated equilibrium with staking amount $B$, scales as $$\mco\left(\frac{\log C - \log(\log C + B)}{\tau}\right).$$ Observe that this implies the following regimes
    \begin{alignat*}{2}
        B &= O(1) &&\implies loss = \mco\left(\log C / \tau \right) \\ 
        B &= \Theta(C/\textnormal{polylog } C) &&\implies loss = \mco\left(\log \log C/\tau\right)\\ 
        B &= \Theta(C) &&\implies loss = \mco\left(1/\tau\right).
    \end{alignat*}
\end{lemma*}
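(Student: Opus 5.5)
The plan is to follow the proof of \Cref{lem:designated-loss-asymptotics}, starting from the continuous designated equation with stake that appeared in the proof of \Cref{cor:limit-ct}. For the designated shape the optimal honest committee fraction is still $\beta=\tau$, so \eqref{eq:reduced-ctous-designated-stake} gives
\begin{align*}
D_\tau = 1+t_d = Ce^{-\tau t_d} - B .
\end{align*}
Substituting $t_d = D_\tau - 1$ yields the implicit equation $D_\tau + B = Ce^{\tau}e^{-\tau D_\tau}$. Set $y := \tau(D_\tau + B)$. Multiplying through by $\tau e^{\tau D_\tau}$ and then by $e^{\tau B}$ gives
\begin{align*}
y e^{y} = C\tau e^{\tau(1+B)} .
\end{align*}
Hence $\tau(D_\tau + B) = W\bigl(C\tau e^{\tau(1+B)}\bigr)$, and so $D_\tau = W\bigl(C\tau e^{\tau(1+B)}\bigr)/\tau - B$.

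The asymptotics come from $W(x) = \ln x - \ln\ln x + o(1)$. Write $L := \ln C + \ln\tau + \tau(1+B)$. Then
\begin{align*}
\tau D_\tau = L - \ln L - \tau B + o(1) = \ln C + \ln \tau + \tau - \ln L + o(1).
\end{align*}
The $\tau B$ terms cancel exactly, which is the mechanism by which stake helps. What remains is to show $\ln L = \ln(\log C + B) + O(1)$, with constants depending on the fixed $\tau \in (0,1)$. This holds because $\log C + \tau B \le L \le \log C + B + O(1)$ up to the bounded $\ln\tau$ and $\tau$ terms, and $\tau$ is a constant. Dividing by $\tau$ gives
\begin{align*}
D_\tau = \frac{\log C - \log(\log C + B)}{\tau} + \mco(1/\tau),
\end{align*}
which is the claimed bound. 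Note that $D_\tau \ge 1$ always, since the designate is paid, so the bound should be read as $\mco\bigl(\max\{1, \cdot\}/\tau\bigr)$.

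The three regimes then follow by substitution.
\begin{itemize}
\item If $B = O(1)$, then $\log(\log C + B) = \log\log C + o(1)$, and the loss is $\mco(\log C/\tau)$, matching \Cref{lem:designated-loss-asymptotics}.
\item If $B = \Theta(C/\mathrm{polylog}\, C)$, then $\log(\log C + B) = \log C - \Theta(\log\log C)$, giving $\mco(\log\log C/\tau)$.
\item If $B = \Theta(C)$, the difference $\log C - \log(\log C + B)$ is $O(1)$, giving $\mco(1/\tau)$.
\end{itemize}

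The main obstacle is justifying the asymptotic expansion of $W$ uniformly when $B$ grows with $C$. For example, when $B = \Theta(C)$ the argument of $W$ is exponentially large in $C$, so the $o(1)$ remainder must be controlled relative to a quantity that is itself $O(1)$ after cancellation. I would use the explicit two-sided bound $\ln x - \ln\ln x \le W(x) \le \ln x - \ln\ln x + \frac{e}{e-1}\frac{\ln\ln x}{\ln x}$ for $x \ge e$, which makes every error term an absolute $O(1)$.

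A secondary subtlety is inherited from the paper's continuous approach: it relies on the $n \to \infty$ approximation $(1-t/n)^{\beta n} \approx e^{-\beta t}$ and on $\beta = \tau$ remaining optimal with stake. I would take both as given from the proof of \Cref{cor:limit-ct}.
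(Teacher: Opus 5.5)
Your proposal is correct and follows the paper's proof essentially step for step. Both use the same continuous designated equation with stake, the same Lambert-$W$ rewriting $\tau(D_\tau+B)=W\bigl(C\tau e^{\tau(1+B)}\bigr)$, and the same $\ln x-\ln\ln x$ expansion substituted into the three regimes. Your extra bookkeeping tightens the paper's sketch: the exact cancellation of $\tau B$, the $O(1/\tau)$ error term where the paper writes $o(1)$, and the floor $D_\tau\ge 1$. The uniformity worry is moot, though, since the remainder in $W(x)$ depends only on $x\ge C\tau e^{\tau}\to\infty$, whatever $B\ge 0$ is.
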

\begin{proof}
    As in the proof of \Cref{lem:designated-loss-asymptotics}, we start with the loss of the optimal designated equilibrium in the continuous setting but now with stake added:    
    \begin{align*}
        D_\tau =1+ t_d = Ce^{-\tau t_d} - B
    \end{align*}
    Rewriting with $t_d = D_\tau -1$, we get
    \begin{align*}
        D_\tau = Ce^{-\tau (D_\tau-1)} -B \implies (D_\tau+B) e^{\tau D_\tau} = Ce^\tau.
    \end{align*}
    Multiplying through by $\tau e^{\tau B}$ gives an expression of the form $ye^y= x$, which is the definition of the Lambert W function:
    \begin{align*}
        \tau (D_\tau+B) e^{\tau (D_\tau + B)} = C\tau e^{\tau(B+1)} &\implies \tau (D_\tau+B) = W(C\tau e^{\tau(B+1)}) \\
        &\implies D_\tau = \frac{1}{\tau}W(C\tau e^{\tau(B+1)}) - B.
    \end{align*}
    Using the first three terms of asymptotic expansion of the Lambert W function as $W(x) = \ln x - \ln \ln x + o(\ln \ln x)$, we have
    \begin{align*}
        D_\tau &= \frac{\ln C}{\tau} - \frac{\ln (\ln C +  B)}{\tau}+ o(1)\\
        &= \mco\left(\frac{\log C - \log(\log C +  B)}{\tau}\right),
    \end{align*}
    which is the expression in the lemma statement.
    Now let's consider each of the regimes. 
    \begin{itemize}
        \item $B=O(1)$: With a constant $B$, $\frac{\log(\log C + B)}{\tau}$ scales as $\log \log C / \tau$, thus the full scaling is $D_\tau = \mco(\log C / \tau).$ 
        \item  $B = \Theta(C/\text{polylog } C)$: Evaluating the second expression with $B = C / \log^p C$ for some $ p\geq 0$,
        \begin{align*}
             \log(\log C + B) &= \log(\log C + C / \log^p C) \\ 
             &= \log(C / \log^p C)+ O(1) \\
             &= \log C - p \log \log C + o(1).
        \end{align*}
        Thus the full scaling is $D_\tau = \mco(\log \log C / \tau)$.
        \item $B=\Theta(C)$: Evaluating the second expression with $B = \gamma C$ for some $\gamma \in (0,1)$, we have 
    \begin{align*}
        \log(\log C + B) &= \log(\log C + \gamma C) \\ 
        &= \log C + O(1).
    \end{align*}
    Thus the full scaling is $D_\tau = \mco(1/\tau).$
    \end{itemize}
\end{proof}

\section{Section 4 appendices}

\subsection{Discussion of tightness regimes}\label{app:regimes-discussion}
\begin{definition}[Regimes of conjectured optimal]\label{def:regimes}
    Given the conjectured minimizers $\hat{S}_k,\hat{D}_j$ (\Cref{def:minimizing-losses-implementable}) and the $g$ minimizers $S^*_k,D^*_j$  (\Cref{def:minimizing-losses-g}), four regimes are possible based on the shape of the equilibria and the tightness with the lower bound $g$. These regimes arise from the minimizer of $g$ being designated as $C \to 1^+$ (\Cref{lemma:desig-small}) and symmetric as $C \to \infty$ (\Cref{lemma:sym-big}).
    \begin{enumerate}
        \item \textbf{$D^*_j = \hat{D}_j < S^*_k$: designated and tight.} As long as the minimizers of $g$ are designated, we know that the lower bound is implementable (\Cref{lemma:designated-implementable}).
        \item \textbf{$S^*_k < \hat{D}_j < \hat{S}_k$: designated and not tight.} At $C > C_t$, the minimizer of $g$ becomes symmetric. However, it is possible that the symmetric minimizer is not implementable (c.f., \Cref{example:non-implementable}). In this regime, it is possible that $\hat{D}_j < \hat{S}_k$, so the best implementable designated equilibrium is better than the best implementable symmetric equilibrium.  
        \item \textbf{$S^*_k < \hat{S}_k < \hat{D}_j$: symmetric and not tight.} Similarly, it is possible that $\hat{S}_k < \hat{D}_j$ and thus the best implementable symmetric equilibrium is better than the best designated. However, there is still no solution to \eqref{prog:linear-system} with all $f_{i>h} \geq 0$, thus the bound is not tight.
        \item \textbf{$S^*_k=\hat{S}_k$: symmetric and tight.} Lastly, once $C$ is sufficiently large (lower bounded by \Cref{lemma:symmetric-implementable}), \eqref{prog:linear-system} will be implementable with non-negative payments, and the lower bound will again be tight.
    \end{enumerate}
     The transition points between the different regimes do not admit a general closed form (\Cref{remark:no-closed-form}) as they are the solutions to corresponding high-dimensional polynomial root finding problems.
\end{definition}

\begin{figure}
    \centering
    \includegraphics[width=0.6\linewidth]{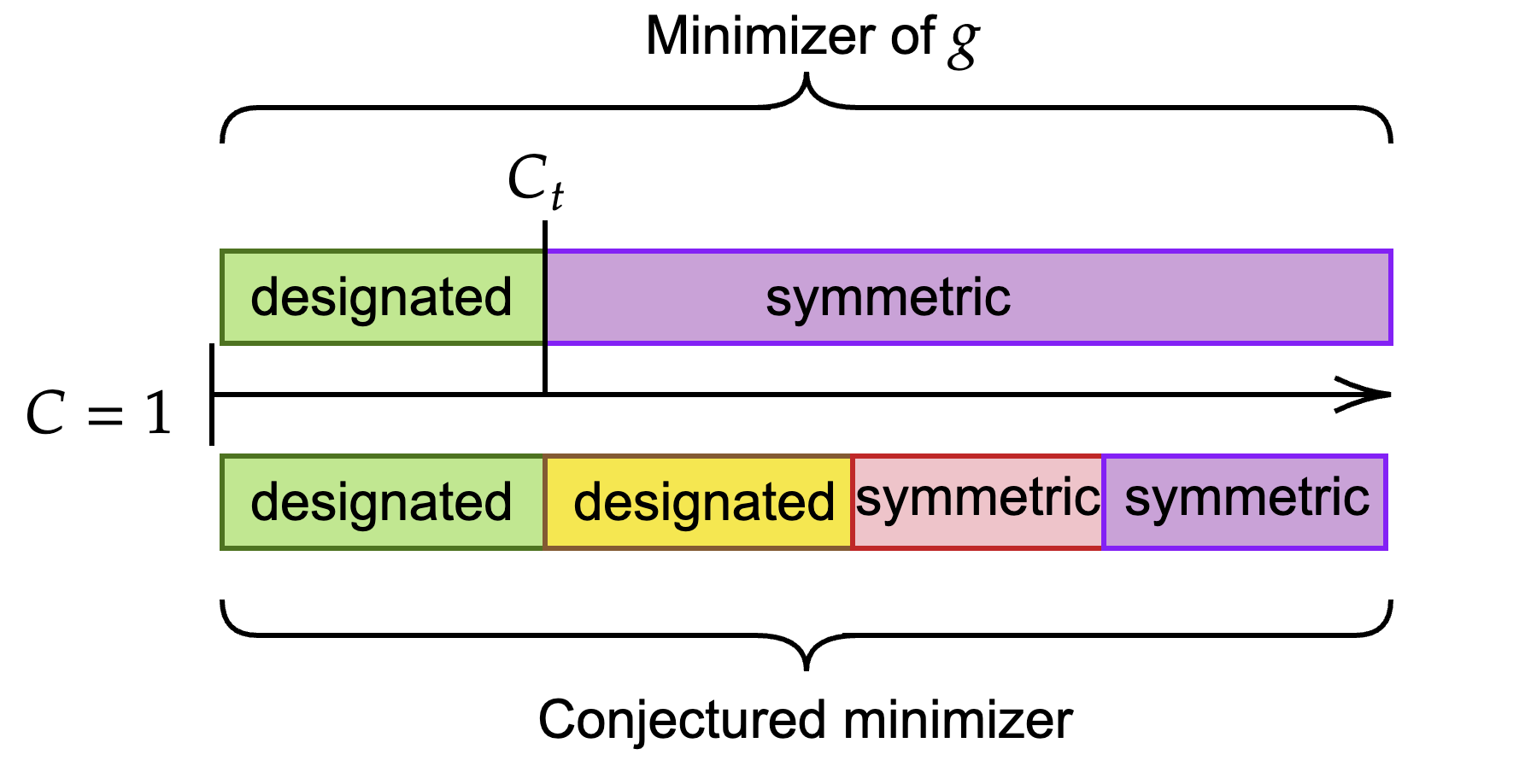}
    \caption{The four regimes (\Cref{def:regimes}) of the conjectured optimum shown pictorially. The designated and symmetric endpoints are colored the same as the lower bound to show that they are optimal.}
    \label{fig:regimes}
\end{figure}
\begin{figure}
    \centering
    \includegraphics[width=\linewidth]{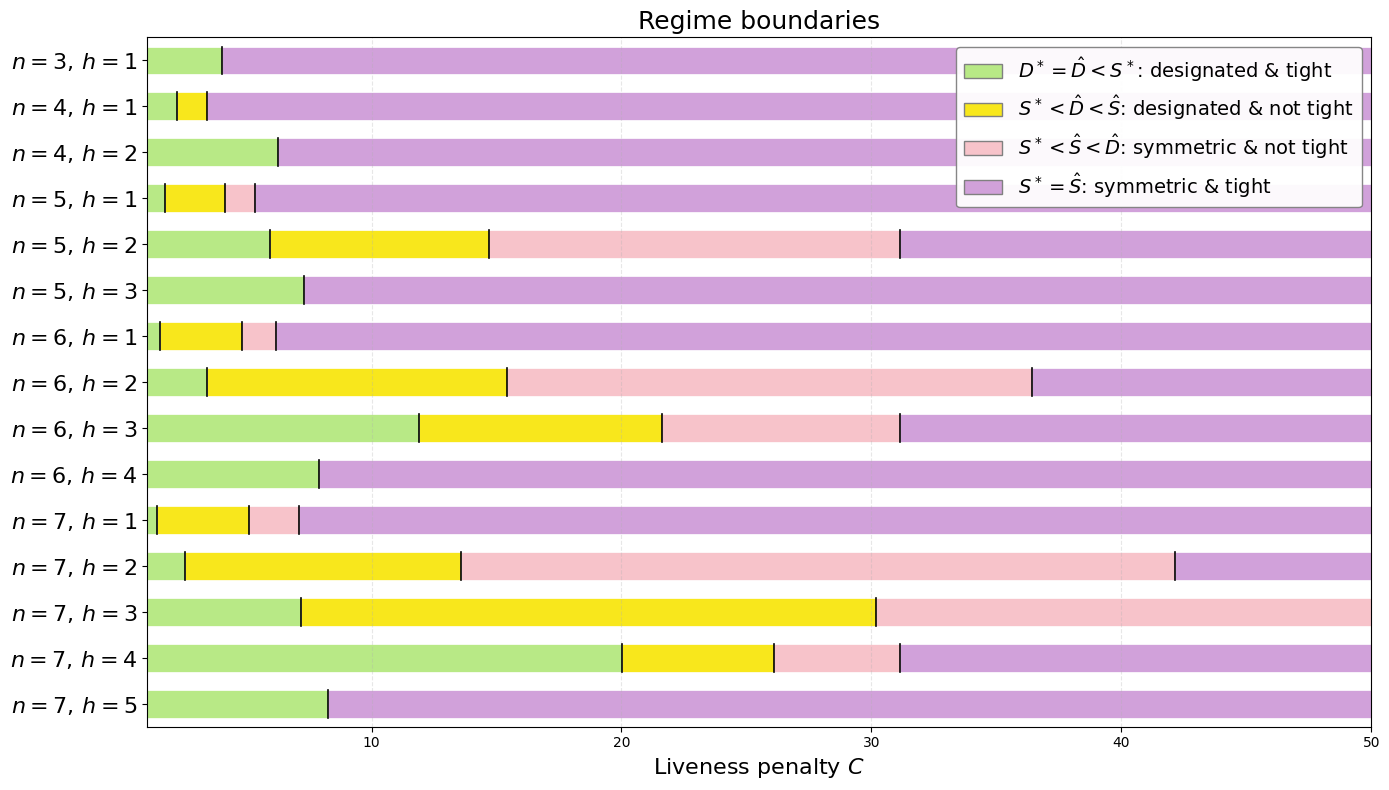}
    \caption{The numerical values of the four regimes (\Cref{def:regimes}) for various values of $\{(n,h) : n \in \{3, 4, \ldots, 7\}, h \in \{1, 2, \ldots, n-2\}\}$ as a function of $C$. We use the same color palette as \Cref{fig:regimes}.}
    \label{fig:regimes-actual}
\end{figure}

\Cref{fig:regimes} shows these pictorially as a function of $C$ increasing. 
\Cref{fig:regimes-actual} shows the numerical regimes in the discrete case for all values of $h < n-1$ for $n \in \{3, 4, \ldots, 7\}$. To further illustrate this point, consider the following example.

\begin{example}[$h=3,n=6$ regimes]\label{ex:h3n6}
    Let $h=3, n=6$ and consider the following regimes as a function of $C$:
    \begin{enumerate}
        \item $C\in[1,11.88]$ \textbf{Phase 1: designated and tight}.
        \item $C\in[11.88,21.60]$ \textbf{Phase 2: designated and not tight}.
        \item $C\in[21.60, 31.14]$ \textbf{Phase 3: symmetric and not tight}.
        \item $C\in[31.13, \infty)$ \textbf{Phase 4: symmetric and tight}.
    \end{enumerate}
\end{example}

\subsection{Counter examples for potential structure}\label{subsec:counter-examples}
This section examines many potential conjectures that would make the analysis of the lower bound function $g$ (\Cref{lemma:lb}) and of the conjectured optimal mechanism (\Cref{conj:minima}) more tractable. We show counter-examples to each conjecture, demonstrating the complexity of problem.

\begin{conjecture}[false]\label{conj:g-transition-at-same-committee}
    The transition point $C_t$ of $g$ (\Cref{def:transition}) always occurs with $h_j=h_k$.
\end{conjecture}
This conjecture would simplify the analysis of $D_j^*, S^*_k$ because we would set the values of the respective committees as the same value.
\begin{example}[Counter-example of \Cref{conj:g-transition-at-same-committee}]
    Let $h=14,n=19$. Then $C_t = 470.2382$ and we transition from a designated equilibrium with $k=19, h_k=14$ to a symmetric equilibrium with $k=12,h_k=7$.
\end{example}
Intuitively, what this counter-example shows is that we must consider all possible committee sizes for the candidate for the smallest value at which \textit{any} symmetric equilibria achieves a lower cost than \textit{any} designated equilibrium, because that is the point at which the lower bound ceases to be tight and achievable.

\begin{conjecture}[false]\label{conj:monotone-in-h}
    The shape of the conjectured optimal mechanism (\Cref{conj:minima}) has a single transition point as we increase $h\nearrow n-1$. 
\end{conjecture}

\begin{example}[Counter-example to \Cref{conj:monotone-in-h}]
    Consider $n=7, C=15$. Then the following are the solution to the conjectured optimal (\Cref{conj:minima}) as a function of $h$. 
    \begin{itemize}
        \item $h=1 \implies \mbs^* = (0.682, 0.682, 0.682, 0.682, 0.682, 0.682, 0.682)$ \\(symmetric full). 
        \item $h=3 \implies \mbs^* = (1, 0.393, 0.393, 0.393, 0.393, 0.393, 0.393)$ \\(designated full). 
        \item $h=5 \implies \mbs^* = (0.829, 0.829, 0.829, 0,0,0,0)$ \\(symmetric committee $k=3$). 
        \item $h=6 \implies \mbs^* = (1, 0.875, 0,0,0,0,0)$ \\(designated committee $k=1$). 
    \end{itemize}
\end{example}

Intuitively, this counter-example shows that the phase transitions of the shape are not monotone in $h$. In the geometric interpretation (c.f., \Cref{fig:surface}), this means that the optimizer is bouncing between multiple edges of the polytope as we increase the value of $h$.

\begin{conjecture}[false]\label{conj:monotone-in-n}
    The shape of the conjectured optimal mechanism (\Cref{conj:minima}) has a single transition point if we hold $h$ and increase $n$. 
\end{conjecture}

\begin{example}[Counter-example to \Cref{conj:monotone-in-n}]
    Consider $h=3, C=15$. Then the following are the solution to the conjectured optimal (\Cref{conj:minima}) as a function of $n$. 
    \begin{itemize}
        \item $n=4 \implies \mbs^* = (1, 0.875, 0, 0)$ \\(designated committee $k=1$). 
        \item $n=5 \implies \mbs^* = (0.829, 0.829, 0.829, 0,0)$ \\(symmetric committee $k=3$). 
        \item $n=6 \implies \mbs^* = (1, 0.411, 0.411,0.411,0.411,0.411,0.411)$ \\(designated committee $k=5$). 
        \item $h=15 \implies \mbs^* = (0.311, \ldots, 0.311)$ \\(symmetric committee $k=15$). 
    \end{itemize}
\end{example}

This counter example extends \Cref{conj:monotone-in-h} to consider holding $h$ fixed and considering the geometry of the conjectured optimal as we increase $n$. Just as in the previous counter-example, we see that the minimizer transitions from designated to symmetric edges of the polytope in a non-monotone way.

\begin{conjecture}[false]\label{conj:lp-structure-on-tight-constraints}
    The solution to the optimal implementable symmetric equilibrium \eqref{prog:lp-symmetric} results in a predictable set of constraints being tight and $f_i$ values being set to 0 (e.g., their is a threshold $i$ past which all of the \eqref{const:sym-lp-attacker} constraints are tight). 
\end{conjecture}

\begin{example}[Counter-example to \Cref{conj:lp-structure-on-tight-constraints}]
    Consider $n=16, h=8$. With $C = 3000$, we have,
    \begin{align*}
        f_1,f_2,f_3,f_4,f_5, f_8,f_9,f_{12},f_{15},f_{16} &= 0\\
        f_6,f_7, f_{10}, f_{11}, f_{13},f_{14} &>0.
    \end{align*}
    Further, the tight constraints \eqref{const:sym-lp-attacker} are $i=0,3,4,5,6,8$ and the slack constraints are $i=1,2,7$.

    Reducing to $h=7$, we have a fully different shape for the zero values of $f_i$ (differences are highlighted in red):
    \begin{align*}
        f_1,f_2,f_3,f_4,f_5, f_8, \textcolor{red}{f_{11}}, \textcolor{red}{f_{14}}, f_{15} &= 0\\
        f_6,f_7,\textcolor{red}{f_9}, f_{10},\textcolor{red}{f_{12}},f_{13},\textcolor{red}{f_{16}} &>0,
    \end{align*}
    and different slack constraints: $i=1,2,5$.
\end{example}

Intuitively, this example shows that there isn't a simple story about the set of tight constraints or the values of $f_i$ that the optimal symmetric equilibrium will set in the solution to the LP. 

\begin{conjecture}[false]\label{conj:conj-transition-at-same-committee}
    The transition from symmetric to designated of the conjectured optimal \Cref{conj:minima} always occurs with $h_j = h_k$ (similar to \Cref{conj:g-transition-at-same-committee}).
\end{conjecture}

\begin{example}[Counter-example to \Cref{conj:conj-transition-at-same-committee}]
    Consider $n=7, h=4$. The transition value $C$ for which $\hat{D}_j > \hat{S}_k$ occurs at $C\approx 26.093$, and we transition from a designated equilibrium with $h_j = 4, s=0.399$ to a symmetric equilibrium with $h_k=2, s=0.653$.
\end{example}

Just as in \Cref{conj:g-transition-at-same-committee}, we see that the size of the subcommittee is a key determinant in the optimal conjectured transition point. These two claims together imply that there is no simple way to determine any of the transition points of the conjectured minimizer as described in \Cref{fig:regimes}.

\end{document}